\newtheorem{theorem}{Theorem}
\newtheorem{lemm}{Lemma}
\newtheorem{coro}{Corollary}
\newcommand{\ba}{\begin{eqnarray}}
\newcommand{\ea}{\end{eqnarray}}
\newcommand{\ban}{\begin{eqnarray*}}
\newcommand{\ean}{\end{eqnarray*}}
\newcommand{\one}{\mathbb{1}}
\newcommand{\cchn}{\varphi}
\newcommand{\chn}{\mathcal{F}}
\newcommand{\ite}{\bar} %for many concatenations
\newcommand{\rf}{\gamma}
\renewcommand{\vr}[1]{v^{#1}}
\newcommand{\mt}[1]{S^{#1}}
\newcommand{\gpaul}{\mathcal{P}}
\newcommand{\gtm}[1]{\mt{#1}_\gpaul} %general transfer matrix
\newcommand{\ptm}[1]{\mt{#1}_\sigma} %pauli transfer matrix
\newcommand{\cad}[1]{\mathsf{D_c}(#1)} %classical absolute determinant
\newcommand{\cfd}[1]{\mathsf{F_c}(#1)} %classical fixed point's displacement from the center of the state space
\newcommand{\cfp}[1]{\mathsf{p^{#1}_c}} %classical fixed point
\newcommand{\skw}{\mathsf{S_c}} %skew
\newcommand{\qad}[1]{\mathsf{D_q}(#1)} %quantum absolute determinant
\newcommand{\qfd}[1]{\mathsf{F_q}(#1)} %quantum fixed point's displacement from the center of the state space
\newcommand{\qfp}[1]{\mathsf{p^{#1}_{q}}} %quantum fixed point
\newcommand{\cis}[1]{\mathrm{I}^\mathsf{s}_\mathsf{c}(#1)}
\newcommand{\qis}[1]{\mathrm{I}^\mathsf{s}_\mathsf{q}(#1)}
\newcommand{\ccd}[1]{\mathrm{I}^\mathsf{d}_\mathsf{c}(#1)}
\newcommand{\qcd}[1]{\mathrm{I}^\mathsf{d}_\mathsf{q}(#1)}
\newcommand{\gcm}{\mathrm{I}^{\mathsf{x}}_\mathsf{c}}
\newcommand{\spi}{\Xi_{3,2}}
\newcommand{\rlx}{\mathsf{t}_{\cchn,z}}
\newcommand{\rvp}{{\hat{\cchn}_{\rf}}}
\newcommand{\petz}{{\hat{\chn}_{\rf}}}
\newcommand{\tpose}{\text{\normalfont{T}}}
\newcommand{\qer}{\mathcal{W}}
\newcommand{\cer}{\Psi} 
\newcommand{\qabs}{\mathcal{A}}
\newcommand{\cabs}{\Upsilon}
\newcommand{\trnm}{T_\circlearrowleft} %transient states within themselves matrix, usually denoted Q
\newcommand{\tfrm}{T_{\scaleobj{0.72}{\leftarrow}}} %transfer-to-absorbing states matrix, usually denoted R
\newcommand{\tfrr}{\vr{\tpose}_{\scaleobj{0.72}{\leftarrow}}}
\renewcommand{\rank}{\text{Rank}}
\renewcommand{\det}{\text{Det}}
\newcommand{\TrB}{\text{Tr}_{B}}
\renewcommand{\tr}[1]{\text{\normalfont{Tr}}\!\left[#1\right]}
\newcommand{\swap}{U_{\leftrightarrowtriangle}}
\begin{document}

\title{Quantifying Irreversibility via Bayesian Subjectivity \\ for Classical \& Quantum Linear Maps}% 

\author{Lizhuo Liu}
\affiliation{Department of Physics, National University of Singapore, 2 Science Drive 3, Singapore 117542}

\author{Clive Cenxin Aw}
\affiliation{Centre for Quantum Technologies, National University of Singapore, 3 Science Drive 2, Singapore 117543}

\date{\today}% It is always \today, today,
             %  but any date may be explicitly specified

\begin{abstract}
In both classical and quantum physics, irreversible processes are described by maps that contract the space of states. The change in volume has often been taken as a natural quantifier of the amount of irreversibility. In Bayesian inference, loss of information results in the retrodiction for the initial state becoming increasingly influenced by the choice of reference prior. In this paper, we import this latter perspective into physics, by quantifying the irreversibility of any process with its Bayesian subjectivity---that is, the sensitivity of its retrodiction to one's prior. From this perspective, we review analytical and numerical results that highlight both intuitive and subtle insights that this measure sheds on irreversible processes. 
\end{abstract}

\maketitle

\section{Introduction}\label{intro}
Irreversibility lies at the heart of both physics and information theory. It manifests in all sorts of phenomena, from entropy production and inefficiencies, to irrecoverable errors and noise. In this work, we examine the irreversibility of processes through the lens of linear maps. While several features will be discussed in detail, we highlight one key aspect of linear maps here: irreversible maps are always \textit{contractive} (see Figure \ref{fig:deform} for illustrations)—their output domains form a strict subset of the input state space.

In a classical phase space, as the contrapositive of Liouville's theorem, changes in the volume containing a bundle of phase trajectories are a signature of dissipative dynamics \cite{marion2013classical-mechanics,goldstein2014classical-mechanics}. Some studies of irreversibility in the context of statistical mechanics make connections between irreversibility with deformations in state spaces as well \cite{hoover1994irr-to-vol,daems1999entropy-irr-to-vol,ramshaw2017entropy-irr-to-vol}. More recent explorations in information geometry also make similar observations \cite{gzyl2020geometry-entropy,wolfer2021information-geom-reverse-markov,van2021geometrical-info-irreversibility-markov,ito2018stochastic-thermo-info-geom,nicholson2018nonequilibrium-info-geom,kim2021information-geom-fluctuations-non-eq}. Such deformations are fundamental in the canonical derivation of fluctuation relations by Evans and Searles \cite{evans2002fluctuation}. This geometric picture is applicable to discrete spaces (simplexes) \cite{nielsen2020elementary-info-geom,amari2016information-text,pasieka2009geom-bloch-channel}, just as in quantum information, non-unitary evolutions are contractive \cite{gyongyosi2018survey-blochball-for-noisy,goyal2016geometry-of-bloch-ball-for-qutrits,pasieka2009geom-bloch-channel, rodriguez2022optimization-magnitude-determinant-unitary,greenbaum2015introductionTFRM}. 

With this in mind, this paper introduces another way to describe the irreversibility of any linear map. While irreversibility measures inspired by information theoretic notions like recoverability \cite{tajima2022universal-irr-measure,emori2023error-irr-measure} and divergences \cite{andrieux2024irreversibility-irr-measure,rong2018new-irr-measure,cisneros2023dissipative-irr-measure} have been fruitfully discussed, we turn our attention to specifically \textit{Bayesian} notions here. 

As noticed by Watanabe more than half a century ago, Bayes' rule can be used to define the \textit{reverse map} of any map, including irreversible ones, as soon as one understands reversal as retrodiction--- i.e. inference about the past \cite{watanabe55,watanabe65}. This intuition has been recently revived and extended to quantum information \cite{kwon-kim,BS21,AwBS}, with the Petz recovery map playing the role of quantum Bayes' rule \cite{petz,petz1,Leifer-Spekkens,QPRPetzPaper,petzisking2022axioms,aw2024-tabletop}. A well-known feature of Bayesian retrodiction in the presence of information loss is the necessity of a \textit{prior}. The Bayesian reverse is independent of the prior if and only if the map is reversible, or unitary in quantum theory \cite{AwBS}. At the other extreme, the Bayesian reverse of a map that erases all information about the input (``erasure map'') is another erasure map, which sends the whole state space to the prior--indeed, there is no reason to update one's belief if no information is added. 

This leads to the central observation of this paper: \textit{the degree of irreversibility of a map can be quantified by the dependence of its Bayesian inverse on the prior}. In other words, we characterize the irreversibility of a process by the \textit{subjectivity} inherent in performing retrodiction on it. This measure, which we call \textit{Bayesian subjectivity}, yields a range of insights and implications supported by analytical and numerical results. In particular, the measure yields physical intuitions about irreversibility native to thermodynamics (such as notions of quasistaticity and entropy production), satisfies the data processing inequality in numerical simulations and detects subtler sources of reversibility. We discuss these and other features in the proceeding sections. 

We begin by reviewing the formal framework of linear maps in Section \ref{sec:formal-maps}, along with the Bayesian inference tools that will later play a central role. Section \ref{sec:illus-irrev} introduces key examples of Bayesian inversion that illustrate the concept of Bayesian subjectivity—the centrepiece of this work, which is introduced proper in Section \ref{sec:m-irr}, where general theorems and results are also discussed. In Section \ref{sec:resul-discuss}, we present numerical results for specific dimensionalities, and highlight their main features. Finally, Section \ref{sec:concl} summarizes our findings and concludes the study.

\begin{center}
    \begin{figure*}[ht]
    \centering
    \includegraphics[width=1\textwidth]{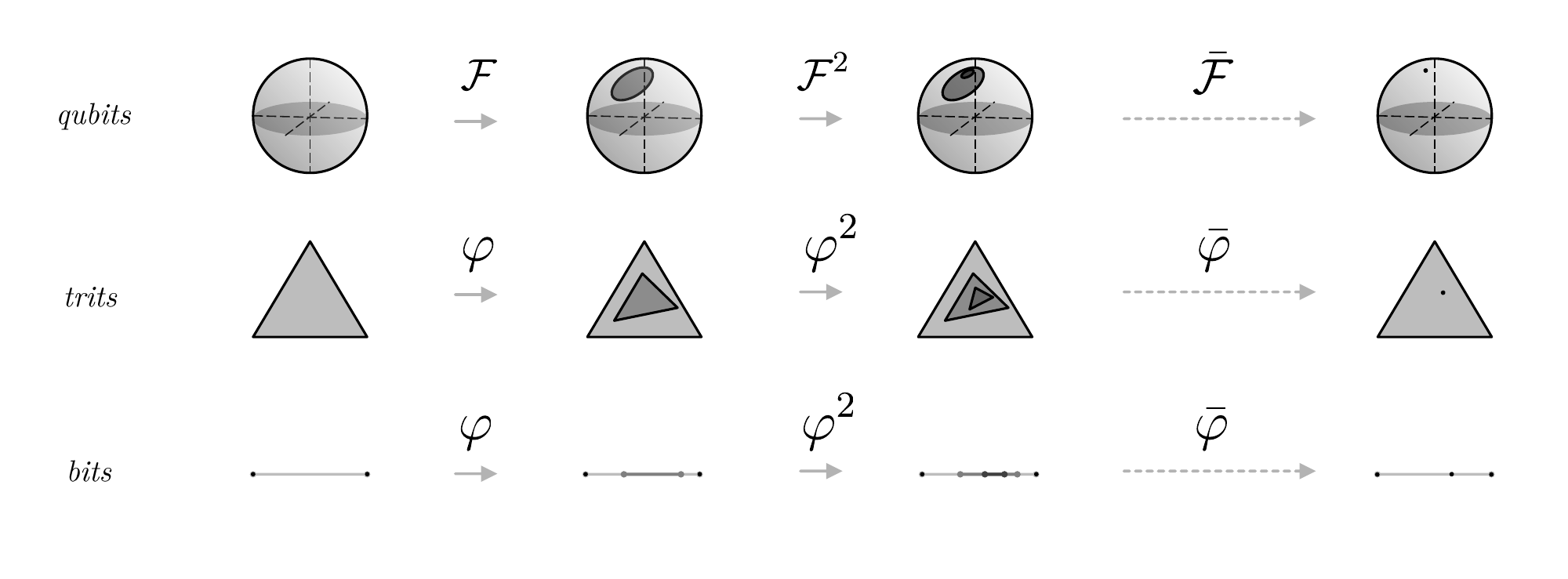}
    \caption{An illustration of the information geometric action of various, generic maps on their respective state spaces. Namely, for generic maps acting on qubits, trits and bits. $\ite\chn$ and $\ite\cchn$ are the respective channels iterated over an arbitrarily long time \eqref{eq:ite}.}
    \label{fig:deform}
\end{figure*}
\end{center} 

\section{Formalism for Classical \& Quantum Processes}\label{sec:formal-maps}
\subsection{Classical Formalism}
In the classical regime, we work with discrete state spaces of dimension $d$. The description of a physical system is thus represented by a probability distribution $p$: $0 \leq p(a)\leq 1$ for $a \in \{1,...,d$\}; and $\sum_{a=1}^{d} p(a)=1$. It can be represented by a $d-$dimensioned probability vector $\vr{p}$, whose entries are $\vr{p}_a:= p(a)$. 

A classical process can be represented as a stochastic map $\cchn$, defined by $d \times d'$ conditional probabilities $\cchn(a'|a)$ transiting from the input state $a$ to the output state $a' \in \{1,...,d'\}$. These probabilities must satisfy $0\leq\cchn(a'|a)\leq 1$ for all $a,a'$ and $\sum_{a'=1}^{d'}\cchn(a'|a)=1$ for all $a$. These channels can be represented by column-stochastic matrix $\mt{\cchn}$ with entries  $\mt{\cchn}_{a'a}:=\cchn(a'|a)$. The output of the state $p$ through $\cchn$ is thus given by \ba  \mt{\cchn}\vr{p} &=& \vr{\cchn[p]}, \\ \label{eq:c-posterior}
 \cchn[p](a') &=& \sum^{d}_{a=1} \cchn(a'|a) p(a). \ea Without loss of generality (due to the freedom afforded by adding redundancies), we set $d' = d$. 
 \subsubsection*{Bayesian Inversion for Classical Maps}
 Bayes' rule dictates that the forward channel $\cchn$ is insufficient in making an inference on the input that produced some given observed output $q$. This output, in general, requires a reverse or retrodiction map defined by $\cchn$ some prior reference $\rf$ \cite{AwBS, watanabe65}. Formally speaking, given $q$, our updated guess on the input is given by $\rvp[q]$, where $\rvp$ is the map corresponding to retrodiction: 
 \begin{equation} \label{bayes}
    \rvp(a|a')=\cchn(a'|a)\frac{\rf(a)}{\cchn[\rf](a')}.
\end{equation}
One primary interest of this work is dependence of $\rvp$ on $\rf$, which will later be formalized as the \textit{subjectivity} inherent to retrodiction on $\cchn$.
% One can also verify that:
% \begin{equation}
%     \label{bayesmat}
%     \mt{\rvp} = D_\rf (\mt{\cchn})^\tpose D_{\cchn[\rf]}^{-1},
% \end{equation}
% where $D_p$, is the diagonal matrix with entries corresponding to the distribution $p$.

\begin{center}
\begin{figure*}[ht]
    \centering
    \includegraphics[width=0.91\textwidth]{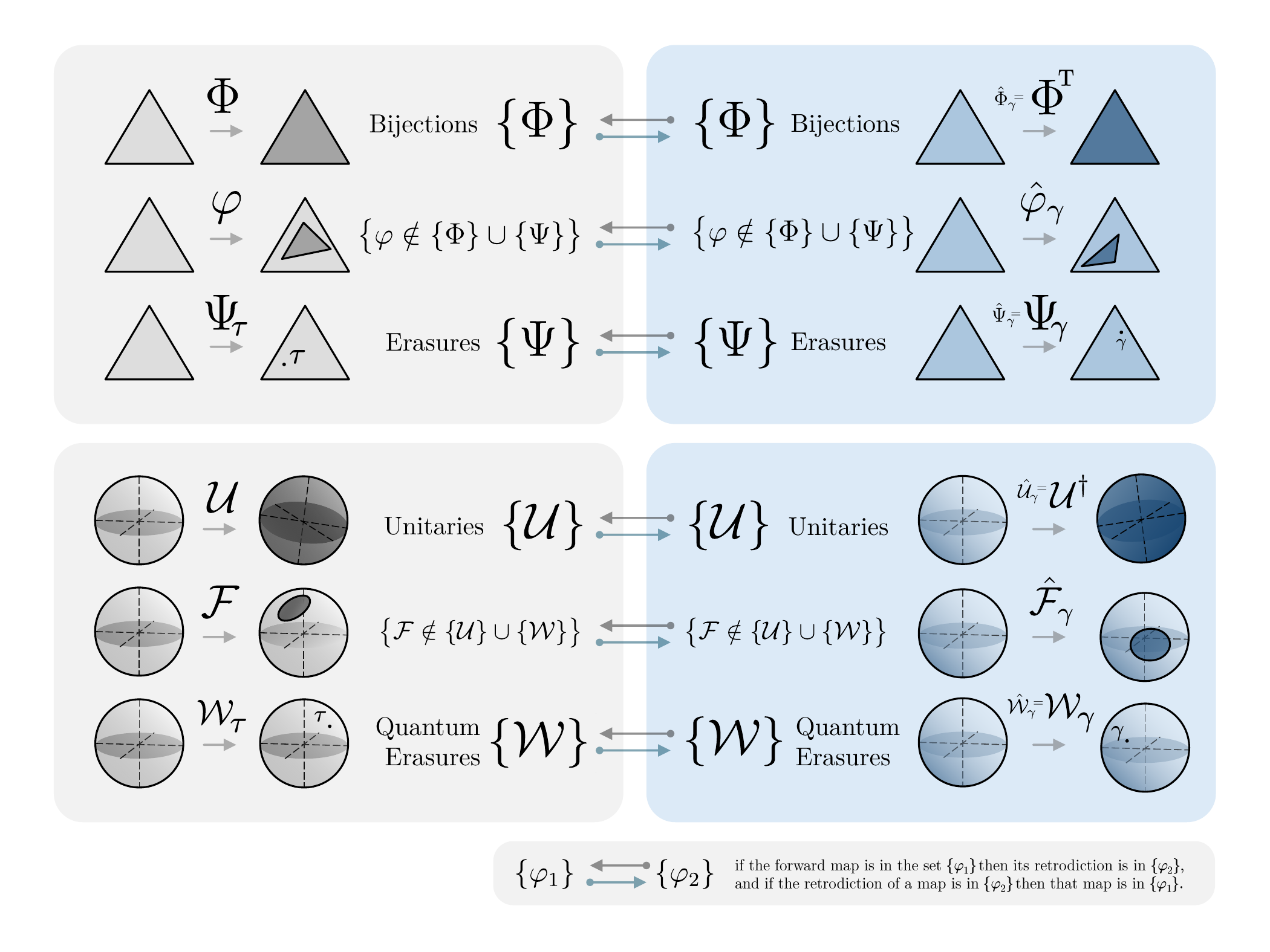}
    \caption{An illustration of various edge cases of reversibility and irreversibility, for the quantum and classical maps, and their respective Bayesian retrodictions.}
    \label{fig:retrogeom}
\end{figure*}
\end{center}

\subsection{Quantum Formalism}
We model quantum processes under a channel-theoretic framework through completely positive, trace-preserving maps $\chn$ working on semidefinite operators $\rho \succeq 0$ in a $d$-dimensioned Hilbert space $\mathbb{C}^d$ with $\Tr[\rho]=1$. While these channels certainly can send states from one Hilbert space to another with different dimensionality, we will focus on representations that map states to the same discrete Hilbert space $\mathbb{C}^{d} \to \mathbb{C}^{d}$. Every such map can be written under a unitary dilation representation. Any quantum channel can be seen as the marginal of a global unitary $U \in \mathbb{C}^{d \times d_B}$ acting on a target input in $\mathbb{C}^{d}$ and an ancillary system $\beta$ in $\mathbb{C}^{d_B}$ \cite{wilde_2013}:
\begin{equation}\label{eq:chndil}
    \chn[\bullet] = \TrB \left[U (\bullet \otimes \beta_B) U^\dagger\right] 
\end{equation}
Alternatively, it may be written in a Kraus form: 
\begin{equation}\label{eq:chnkraus}
    \chn[\bullet] = \sum_{i \in \{\kappa_i\}} \kappa_i \bullet \kappa_i^\dagger.
\end{equation}
where $\sum_i \kappa_i^\dagger \kappa_i = \one$. 
\subsubsection*{Bayesian Inversion on Quantum Channels}
While there are certainly other choices of recovery maps for this purpose \cite{petzisking2022axioms}, we take the Petz recovery map as the quantum analogue for Bayesian inversion for transformations in this regime, doing so largely on the basis of axiomatic and conceptual reasons \cite{petzisking2022axioms, QPRPetzPaper, BS21, aw2024-tabletop, Leifer-Spekkens}. As with Bayes' rule, the forward channel $\chn$ is insufficient for inferring what the input on that channel is given some observed output. The Petz recovery map $\petz$ requires also a quantum-theoretic reference state $\gamma$, a density operator that plays the role of a Bayesian prior:
\begin{eqnarray}\label{petz}
    \petz[\bullet] = \sqrt{\rf} \, \chn^\dagger\left[\frac{1}{\sqrt{\chn[\rf]}} \bullet \frac{1}{\sqrt{\chn[\rf]}}\right] \sqrt{\rf},
\end{eqnarray}
where $\chn^\dagger$ is the adjoint of $\chn$ fulfilling for all self-adjoint $X,Y$ the relation \begin{equation}\label{eq:adj}
    \Tr\Big[\chn[X] Y\Big] = \Tr\left[X\chn^\dagger[Y]\right].
\end{equation} 
\newline
For the dilation picture this is given by
\begin{equation}\label{eq:adjdil}
    \chn^\dag[\bullet] = \TrB\left[\sqrt{\one \otimes \beta} U^\dag (\bullet \otimes \one) U \sqrt{\one \otimes \beta}\right],
\end{equation}
while in the Kraus representation this is
\begin{equation}\label{eq:krausdil}
\chn^\dag[\bullet] = \sum_{i \in \{\kappa_i\}} \kappa_i^\dag \bullet \kappa_i.
\end{equation}
Thus, one's update on a quantum-theoretic postulate $\rf$, given that we know an observed output $\eta$ underwent an effective transformation $\chn$, is given by $\petz[\eta]$.

\section{Extremal Maps \& Corresponding Bayesian Inversions} \label{sec:illus-irrev}
 We move now to discuss two notable classes of linear maps and their respective retrodiction properties, which will then lead naturally to our central research question. 
 
 For these families of maps, we note in particular their \textit{absolute determinant}, which quantifies the extent to which the state space is preserved by the map. For the classical case, this is naturally defined via matrix representation: 
 \begin{equation}\label{eq:cad}
    \cad{\cchn} = |\det(\mt{\cchn})|
\end{equation}
 The quantum counterpart of this quantifier, denoted by $\qad\chn$, requires a little bit more involvement but is conceptually the same. $\qad\chn=1$ if and only if the input state space is totally preserved by $\chn$ and $\qad\chn=0$ if and only if the $\chn$ is singular. To avoid encumbering the main text, the definition \eqref{qad} and technicalities are organized into Appendix \ref{sec:geom}. 

\subsection{Bijective / Unitary Channels}\label{sec:bijunichn}
It is natural to first look at the maps for which the absolute determinant is unity. In the phase space picture, these are dynamics that obey Liouville's theorem and preserve phase volumes.

For classical stochastic maps these correspond to bijections, which we denote by $\Phi$. Bijections can be seen as relabellings, where index goes ``one-to-one and one-from-one''. These may be represented as permutation matrices $\mt{\Phi}$---bistochastic matrices for which every row and every column has a single $1$-entry. The inverse and transpose of matrices coincide i.e.~$(\mt{\Phi})^{-1}=(\mt{\Phi})^{\tpose}$. It is easy to see that \footnote{$\mt\Phi$ are full-rank matrices with eigenvalues of $1$ for the whole eigenbasis. So clearly $\cad{\cchn} = 1$. Meanwhile, since stochastic matrices have a spectral radius of one, the only maps for which the absolute determinant will be conserved would be those that are bijections. },
\begin{lemm} \label{thm:bij-abs-det} A classical stochastic map conserves the state space if and only if it is bijective:
\begin{equation}
    \cad{\cchn} = 1 \quad \Leftrightarrow \quad \cchn \text{ is a bijection}
\end{equation}
\end{lemm}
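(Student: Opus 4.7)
The plan is to prove both directions separately. The easy implication ($\Leftarrow$): if $\cchn$ is a bijection then each input basis state is mapped deterministically to a unique output, so $\mt{\cchn}$ is a permutation matrix, which has determinant $\pm 1$, yielding $\cad{\cchn} = 1$ via \eqref{eq:cad}.

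For the converse ($\Rightarrow$), my approach is to combine Hadamard's inequality with a defining feature of non-negative probability vectors. Hadamard's inequality gives
\begin{equation*}
    \cad{\cchn} \;=\; |\det(\mt{\cchn})| \;\leq\; \prod_{j=1}^d \|\mt{\cchn}_{\cdot j}\|_2,
\end{equation*}
while the non-negativity and column-normalization of $\mt{\cchn}$ imply $\|\mt{\cchn}_{\cdot j}\|_2 \leq \|\mt{\cchn}_{\cdot j}\|_1 = 1$, with equality iff column $j$ contains a single non-zero entry equal to $1$. Assuming $\cad{\cchn} = 1$, both chains of inequalities must saturate, forcing every column of $\mt{\cchn}$ to be a standard basis vector; then non-vanishing determinant forces these basis vectors to be distinct, so $\mt{\cchn}$ is a permutation matrix and $\cchn$ a bijection.

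The hard step lies in the converse direction: one must leverage non-negativity and column-normalization simultaneously to extract the combinatorial permutation structure from an analytic hypothesis on the determinant. The spectral-radius argument suggested in the paper's footnote shows that $\cad{\cchn} = 1$ puts all eigenvalues on the unit circle, but bridging this to bijectivity would further require ruling out non-trivial Jordan blocks (via boundedness of $\|\mt{\cchn}^{k}\|_1$ under iteration) and reconstructing the $0$-$1$ entry structure---steps that the Hadamard approach circumvents entirely by directly comparing two different column-norm bounds.
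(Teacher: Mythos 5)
Your proof is correct and complete, but it takes a genuinely different route from the paper, which disposes of this lemma only in a footnote sketch based on spectral radius: there the forward direction is asserted from the claim that permutation matrices are full rank (the footnote even misstates their spectrum as all-ones, whereas permutation matrices generically have roots of unity as eigenvalues --- your $\det(\mt{\Phi}) = \pm 1$ observation is the accurate version), and the converse is asserted from the fact that stochastic matrices have spectral radius one, with no bridge supplied from ``all eigenvalues unimodular'' to ``permutation.'' Your Hadamard argument supplies exactly that missing bridge in an elementary, self-contained way: the chain $1 = \cad{\cchn} \leq \prod_j \|\mt{\cchn}_{\cdot j}\|_2 \leq \prod_j \|\mt{\cchn}_{\cdot j}\|_1 = 1$ forces every factor to equal one, and for a nonnegative vector $v$ with $\|v\|_1 = 1$ the equality $\|v\|_2 = \|v\|_1$ holds iff $v$ has a single nonzero entry (the cross terms in $\|v\|_1^2 - \|v\|_2^2 = 2\sum_{i<j} v_i v_j$ must vanish), so each column is a standard basis vector, and $\det \neq 0$ makes them distinct. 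Your closing diagnosis of the spectral route is also accurate: to make the footnote rigorous one would need boundedness of $\|\mt{\cchn}^k\|$ under iteration to exclude nontrivial Jordan blocks for the unimodular eigenvalues, and then a further step recovering the $0$--$1$ entry structure (e.g., via the fact that a stochastic matrix whose inverse is also nonnegative must be a permutation) --- steps the equality analysis in Hadamard's inequality renders unnecessary. In short, your proof is strictly more rigorous than what the paper records, at the cost of invoking Hadamard's inequality rather than purely spectral facts.
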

In the quantum regime, the reverse process for any unitary $\mathcal{U}$ is also its inverse $\mathcal{U}^{-1}=\mathcal{U}^\dag$. Similarly \footnote{Since every choice of operator basis $\{\gpaul_k\}$ is equivalent up to a unitary transformation, then by \eqref{eq:gtm-def}, $\gtm{\mathcal{U}}$, which is the same for all representations after normalization, will always have absolute determinant $1$. Meanwhile, if $\qad{\chn} = 1$, then the whole state space is preserved and $\chn$ is simply a rotation of space. Thus, $\chn$ is a change of basis---a unitary transformation.},
\begin{lemm}  \label{thm:quni-abs-det} A quantum channel conserves the state space if and only if it is unitary:
\begin{equation}
    \qad{\chn} = 1 \quad \Leftrightarrow \quad \chn \, \text{ is unitary}.
\end{equation}
\end{lemm}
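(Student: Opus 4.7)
The plan is to lean on the operator-space matrix representation $\gtm{\chn}$ of the channel in a Hilbert--Schmidt-orthonormal basis $\{\gpaul_k\}$ (normalized Paulis for qubits, generalized Gell-Mann matrices in higher dimension), from which $\qad{\chn}$ is built in the appendix. Since any two orthonormal operator bases are related by a unitary change of coordinates on operator space, $\qad{\chn}$ is basis-independent, so I can freely pick whichever representation is most convenient.

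For the forward implication, suppose $\chn=\mathcal{U}$ with $\mathcal{U}[\bullet]=U\bullet U^\dagger$. Conjugation by $U$ preserves the Hilbert--Schmidt inner product $\tr{A^\dagger B}$, so the induced linear map on operator space is an isometry; its matrix $\gtm{\mathcal{U}}$ in any Hilbert--Schmidt-orthonormal basis is therefore orthogonal/unitary and has absolute determinant $1$, giving $\qad{\mathcal{U}}=1$.

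For the converse, my plan is to combine two standard facts about CPTP maps: (i) every eigenvalue of $\gtm{\chn}$ lies in the closed unit disk, a consequence of trace-norm contractivity of $\chn$ together with operator-norm contractivity of $\chn^\dagger$; and (ii) $\gtm{\chn}$ has the eigenvalue $1$ because $\chn$ is trace-preserving and sends the maximally mixed state to a valid state. Since $\qad{\chn}$ equals the product of the moduli of the eigenvalues of $\gtm{\chn}$, the hypothesis $\qad{\chn}=1$ together with (i) forces every eigenvalue to lie on the unit circle, and a standard argument rules out nontrivial Jordan structure on the peripheral spectrum of a CPTP map. This means $\chn$ acts as an isometry on all Hermitian operators and hence preserves the trace distance between any two quantum states. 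A CPTP map that is an isometry in trace distance is, by the Wigner/Kadison characterization of state-space isometries (with the anti-unitary/transpose branch excluded by complete positivity), a unitary conjugation, yielding $\chn=\mathcal{U}$.

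The main obstacle is precisely this passage from $\qad{\chn}=1$ to full isometry behaviour on states, since volume preservation of $\gtm{\chn}$ happens on an affine slice of operator space while unitarity is a statement about the action on pure states. I would close the gap via the spectral route just outlined; alternatively, one can argue geometrically that the boundary of the convex body of states (the pure states) can only be pushed strictly inward by a genuinely contractive CPTP map, so $\qad{\chn}=1$ forces pure states to remain pure, after which Wigner's theorem together with complete positivity finishes the argument.
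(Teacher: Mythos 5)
Your proposal is correct in substance but takes a genuinely different---and considerably more rigorous---route than the paper. The paper's entire proof lives in a footnote: for the forward direction it observes, much as you do, that $\gtm{\mathcal{U}}$ is the same (orthogonal) matrix in every Hilbert--Schmidt-orthonormal operator basis and hence has unit absolute determinant; for the converse it simply \emph{asserts} that $\qad{\chn}=1$ means the whole state space is preserved, so $\chn$ ``is a rotation of space'' and therefore a change of basis, i.e.\ unitary. That assertion is exactly the nontrivial content: for $d>2$ the state space is not a ball, and one must rule out, say, a determinant-one linear map that shears the set of states while preserving volume. Your spectral route supplies precisely this missing argument: spectral radius at most $1$, the hypothesis $\qad{\chn}=1$ equal to the product of the $\abs{\lambda_i}$ forcing a unimodular spectrum, power-boundedness excluding peripheral Jordan blocks, and the Wigner/Kadison characterization (with the antiunitary branch excluded by complete positivity) closing the argument. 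This is essentially the standard textbook proof (e.g.\ in Wolf's lecture notes on quantum channels), and it is what the paper's footnote implicitly appeals to; what it buys over the paper's geometric sketch is that every step is checkable, at the cost of invoking heavier machinery than the paper's one-line picture.

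One step in your writeup deserves more care than you give it: a semisimple matrix with all eigenvalues on the unit circle is \emph{not} automatically an isometry on Hermitian operators---it is unitary only in an inner product adapted to its eigenbasis, so ``this means $\chn$ acts as an isometry'' does not follow as stated. The standard fix uses the channel structure: diagonalizability with unimodular spectrum yields a subsequence $\chn^{n_k}\to\mathrm{id}$; then for any states $\rho,\sigma$, trace-norm contractivity applied to $\chn^{n_k-1}$ gives $\|\rho-\sigma\|_1=\lim_k\|\chn^{n_k}[\rho]-\chn^{n_k}[\sigma]\|_1\leq\|\chn[\rho]-\chn[\sigma]\|_1\leq\|\rho-\sigma\|_1$, so $\chn$ is a trace-distance isometry (equivalently, $\chn^{-1}=\lim_k\chn^{n_k-1}$ is itself CPTP, and an invertible channel with CPTP inverse is a unitary conjugation). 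With that recurrence argument inserted, your proof is complete; your geometric fallback---that a volume-preserving CPTP map cannot push pure states strictly inward---is essentially a precise rendering of the paper's own footnote sketch.
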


As argued more thoroughly in \cite{AwBS} and \cite{petzisking2022axioms}, it is known that applying \eqref{bayes} and \eqref{petz} on bijections and unitary channels respectively gives some notable insights. For classical maps,
\begin{lemm}\label{thm1}
These three statements are equivalent: 
\begin{enumerate}
    \item[\emph{(\textbf{I})}] $\cchn^\tpose\cchn = \one$ i.e. the channel is a bijection. 
    \item[\emph{(\textbf{II})}] $ \forall \rf : \rvp = \hat{\cchn}$ i.e. the channel's Bayesian inverse is independent of the reference prior. 
    \item[\emph{(\textbf{III})}] $\exists\rf:\rvp = \cchn^{-1}$ i.e. there exists a reference prior, for which the channel's Bayesian inverse is its matrix inverse. 
\end{enumerate}
\end{lemm}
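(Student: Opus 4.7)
The plan is to prove the three-way equivalence by establishing the cycle $(\textbf{I}) \Rightarrow (\textbf{II}) \Rightarrow (\textbf{III}) \Rightarrow (\textbf{I})$, leaning on Bayes' rule \eqref{bayes} for the forward steps and invoking Lemma \ref{thm:bij-abs-det} to close the loop.

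For $(\textbf{I}) \Rightarrow (\textbf{II})$, I would parameterise the bijection by a permutation $\pi$ so that $\cchn(a'|a)=\delta_{a',\pi(a)}$, and hence $\cchn[\rf](a')=\rf(\pi^{-1}(a'))$. Substituting into \eqref{bayes} makes the prior-dependent factors cancel cleanly, leaving $\rvp(a|a')=\delta_{a,\pi^{-1}(a')}$, an $\rf$-independent map as claimed.

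For $(\textbf{II}) \Rightarrow (\textbf{III})$, I would rewrite Bayes' rule as $\rvp(a|a')\,\cchn[\rf](a')=\cchn(a'|a)\,\rf(a)$ and sum over $a'$. Column-stochasticity of $\cchn$ collapses the right-hand side to $\rf(a)$, yielding the matrix identity $\mt{\rvp}\mt{\cchn}\vr{\rf}=\vr{\rf}$. By $(\textbf{II})$, $\mt{\rvp}$ is the same fixed matrix for every choice of prior, so letting $\vr{\rf}$ range over $d$ linearly independent full-support priors yields $\mt{\rvp}\mt{\cchn}=\one$, i.e.\ $\mt{\rvp}=\mt{\cchn}^{-1}$; in particular, $(\textbf{III})$ holds for (any, hence some) such prior.

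The closing step $(\textbf{III}) \Rightarrow (\textbf{I})$ is where I expect the main—if mild—obstacle to sit. By construction, $\mt{\rvp}$ is always a column-stochastic matrix, since each of its columns is a valid posterior distribution. If moreover $\mt{\rvp}=\mt{\cchn}^{-1}$, then $\mt{\cchn}^{-1}$ inherits column-stochasticity, and the standard Perron--Frobenius spectral bound on stochastic matrices gives both $\cad{\cchn}\le 1$ and $\cad{\cchn^{-1}}=1/\cad{\cchn}\le 1$. Together these force $\cad{\cchn}=1$, and Lemma \ref{thm:bij-abs-det} upgrades this equality to the conclusion that $\cchn$ is a bijection, closing the cycle. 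The one remaining bit of care is well-definedness of \eqref{bayes} in $(\textbf{III})$: one restricts attention to priors $\rf$ of full support, for which $\cchn[\rf]$ inherits the image-support of $\cchn$ and the denominator in \eqref{bayes} causes no trouble.
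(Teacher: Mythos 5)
Your proof is correct, and a direct comparison with ``the paper's proof'' is not quite possible here: the paper states Lemma \ref{thm1} without an in-text argument, deferring to \cite{AwBS} and \cite{petzisking2022axioms}, so your write-up stands as a self-contained derivation of a result the paper imports. A few observations on its internals. The step (\textbf{I})$\Rightarrow$(\textbf{II}) via the permutation parameterization $\cchn(a'|a)=\delta_{a',\pi(a)}$ is exactly right. In (\textbf{II})$\Rightarrow$(\textbf{III}), note that the identity $\mt{\rvp}\mt{\cchn}\vr{\rf}=\vr{\rf}$ you obtain by summing Bayes' rule \eqref{bayes} over $a'$ is precisely the recoverability property $\rvp\circ\cchn[\rf]=\rf$, which holds for \emph{every} prior regardless of (\textbf{II}) (the paper invokes it in proving Lemma \ref{res:c-erase}); the only work hypothesis (\textbf{II}) does is to pin $\mt{\rvp}$ to a single prior-independent matrix, so that recoverability on $d$ linearly independent full-support priors upgrades to $\mt{\rvp}\mt{\cchn}=\one$, and for square matrices a left inverse is the inverse. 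Your (\textbf{III})$\Rightarrow$(\textbf{I}) is the genuinely nontrivial closure and it is sound: columns of $\mt{\rvp}$ are posteriors, hence $\mt{\cchn}^{-1}$ is column-stochastic, the spectral-radius bound gives $\cad{\cchn}\le 1$ and $\cad{\cchn}^{-1}\le 1$ simultaneously, forcing $\cad{\cchn}=1$, and Lemma \ref{thm:bij-abs-det} finishes; this is a clean instance of the folklore fact that a stochastic matrix with stochastic inverse is a permutation. Your closing support caveat is also genuinely needed and correctly resolved: under (\textbf{III}) invertibility of $\mt{\cchn}$ excludes zero rows, so any full-support $\rf$ makes $\cchn[\rf]$ full-support and every column of \eqref{bayes} well-defined. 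The only cost of your route relative to citing the references is reliance on Lemma \ref{thm:bij-abs-det}, whose own justification in the paper is only a footnote sketch; if you wanted full self-containment you could replace the spectral bound by Hadamard's inequality, which shows directly that a stochastic matrix with $\cad{\cchn}=1$ must have standard-basis columns, i.e.\ be a permutation.
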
 
Unitary channels have similar relationships to their Petz transpose:
\begin{lemm}\label{thm2}
These three statements are equivalent:
\begin{enumerate}
    \item[\emph{(\textbf{I})}]  $\chn\circ \chn^{\dagger} = \one$ i.e. the channel is unitary.
    \item[\emph{(\textbf{II})}] $\forall\rf : \petz =\hat{\chn}$ i.e. the channel's Petz recovery map is independent of the reference state.
    \item[\emph{(\textbf{III})}] $\exists \rf : \petz ={\chn}^{-1}$ i.e. there exists a reference state, for which the channel's Petz recovery map is its inverse channel $\chn^{-1}$.
\end{enumerate}
\end{lemm}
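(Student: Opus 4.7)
The plan is to close the cyclic chain $(\text{I}) \Rightarrow (\text{II}) \Rightarrow (\text{III}) \Rightarrow (\text{I})$, in direct analogy with the classical Lemma~\ref{thm1} but with features specific to the Petz map replacing matrix inversion.

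For $(\text{I}) \Rightarrow (\text{II})$, I would take $\chn[\bullet] = U\bullet U^\dagger$ for some unitary $U$, so that $\chn^\dagger[\bullet] = U^\dagger \bullet U$, and since conjugation by $U$ preserves spectra one has $\chn[\rf]^{-1/2} = U\rf^{-1/2}U^\dagger$. Substituting into \eqref{petz} and letting the consecutive $U U^\dagger$ factors cancel yields $\petz[\bullet] = U^\dagger \bullet U$, which is manifestly independent of $\rf$ and moreover coincides with $\chn^{-1}$, so in fact (I) directly implies both (II) and (III).

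For $(\text{II}) \Rightarrow (\text{III})$, I would invoke the defining recovery property of the Petz map, $\petz[\chn[\rf]] = \rf$, which follows by direct substitution into \eqref{petz} whenever $\chn[\rf]$ is full rank, using unitality of $\chn^\dagger$. Denote the common recovery map from (II) by $\hat{\chn}$. Specialising the recovery identity to $\rf = \rho$ for arbitrary full-rank $\rho$ gives $\hat{\chn}[\chn[\rho]] = \rho$. Since full-rank states are dense and both sides are linear in the input, $\hat{\chn}\circ\chn$ agrees with the identity on every operator, and in finite dimensions this forces $\hat{\chn} = \chn^{-1}$, proving (III).

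For $(\text{III}) \Rightarrow (\text{I})$, I would use that $\petz$ is CPTP by construction, so $\petz = \chn^{-1}$ makes $\chn^{-1}$ CPTP. It then suffices to invoke the structural result that a CPTP map whose inverse is also CPTP must be a unitary conjugation: a short Kraus-rank argument shows that $\chn$ admits a single Kraus operator $V$, and combining $V^\dagger V = \one$ (trace preservation of $\chn$) with $V V^\dagger = \one$ (trace preservation of $\chn^{-1}$) forces $V$ to be unitary. An alternative avenue, closer to this paper's geometric framing, is to observe that CPTP maps satisfy $\qad{\cdot} \leq 1$ and that multiplicativity gives $\qad{\chn^{-1}}\,\qad{\chn} = \qad{\one} = 1$, so $\qad{\chn} = 1$, at which point Lemma~\ref{thm:quni-abs-det} concludes. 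This last step is the main obstacle: the implication ``$\chn^{-1}$ CPTP $\Rightarrow$ $\chn$ unitary'' is not a consequence of elementary Petz manipulations but a genuine structural constraint on completely positive maps.
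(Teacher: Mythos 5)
Your proof is sound, and it is worth noting that the paper never proves Lemma~\ref{thm2} in-house---it is imported from Refs.~\cite{AwBS,petzisking2022axioms}---so any self-contained argument is necessarily a different route. Structurally, yours parallels the paper's own proofs of the erasure analogues (Lemma~\ref{res:c-erase} and Appendix~\ref{app-qer}): direct Petz computation for the easy implications, with the recoverability identity $\petz\circ\chn[\rf]=\rf$ closing the cycle---the same identity you use for (\textbf{II})$\Rightarrow$(\textbf{III}), though the paper invokes it unconditionally while you derive it assuming $\chn[\rf]$ full rank. One small repair there: the right hypothesis is full-rank $\rf$, not full-rank $\chn[\rf]$ (a channel can send faithful states to singular ones, e.g.\ erasure to a pure state); in that case the substitution leaves $\sqrt{\rf}\,\chn^\dagger[\Pi]\sqrt{\rf}$ with $\Pi$ the support projector of $\chn[\rf]$, and $\chn^\dagger[\Pi]=\one$ follows since $\chn^\dagger[\Pi]\preceq\chn^\dagger[\one]=\one$ while $\Tr\big[\chn^\dagger[\Pi]\,\rf\big]=\Tr\big[\Pi\,\chn[\rf]\big]=\Tr[\rf]$ and $\rf\succ 0$; with that one-line patch your density-plus-linearity argument goes through (using the paper's standing restriction to equal input and output dimensions to upgrade the left inverse to a two-sided one). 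The genuinely non-elementary step, which the classical Lemma~\ref{thm1} does not prepare one for, is (\textbf{III})$\Rightarrow$(\textbf{I}): classically ``stochastic with stochastic inverse implies permutation'' is elementary, whereas here one needs the structural theorem that a CPTP map with a CPTP inverse is a unitary conjugation. You isolate this correctly, and both of your closures work: the Kraus argument (Kraus operators $\{B_j\},\{A_i\}$ of $\chn^{-1},\chn$ satisfy $B_jA_i\propto\one$ because the identity channel's Choi operator is pure, forcing Kraus rank one and then $V^\dagger V = V V^\dagger=\one$), and the determinant route, where $\qad{\chn}\,\qad{\chn^{-1}}=1$ together with $\qad{\cdot}\le 1$ for channels (transfer matrices have spectral radius one) hands the conclusion to Lemma~\ref{thm:quni-abs-det}. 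The second closure is the more natural one within this paper's information-geometric framing, though it inherits the fact that Lemma~\ref{thm:quni-abs-det} is itself only argued in a footnote; the Kraus closure is the more self-contained. Finally, your (\textbf{I})$\Rightarrow$(\textbf{II}) computation already yields $\petz=\chn^{-1}$ outright, so (\textbf{III}) comes for free and the cyclic ordering is redundant---harmless, but worth saying explicitly.
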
 

For both Lemmas, we emphasize here the equivalence of (\textbf{I}) and (\textbf{II}). Total reversibility is identified with independence from one's prior when performing Bayesian inversion.

\subsection{Erasure Channels}\label{sec:er-chns}
While bijective channels are reversible and preserve the whole space of states, we can consider the opposite: channels for which the entire space of states reduces to a point. We may call these \textit{erasure} channels (otherwise known as \textit{discard-and-prepare} channels), they represent the extreme of information geometric irreversibility. We define classical erasure channels $\cer$ in the following way: 
\begin{eqnarray}
    \forall p \,\exists\tau: \; \cchn[p] = \tau  \;  &\Leftrightarrow& \;   \cchn \text{ is a classical erasure},
\end{eqnarray}
where $\tau$ is some probability distribution which the entire space of distributions is erased to. On the level of individual probability transitions, 
\begin{equation}
    \label{eq:cer-indiv} \forall (a,a'): \cer(a'|a)=\tau(a'), 
\end{equation} which implies, from \eqref{eq:c-posterior},  that all $\cer[p](a')=\sum_a p(a)\tau(a')=\tau(a')$. Thus, $\cchn$ is a classical erasure if and only if $\rank{(\mt{\cchn})}=1$ \footnote{From \eqref{eq:cer-indiv}, $\mt{\cer}$ have $\vr{\tau}$ as all of their columns, which in turn implies that the object is rank-$1$ and has zero absolute determinant. The opposite implication is also easily verified. All rank-$1$ matrices can be written as an outer product of two vectors: $\mt{M}=\ketbra{u}{v} \Rightarrow \mt{M}_{ij} = u_i v_j$. Since we are dealing with column-stochastic rank-$1$ matrices, then $\forall j : \sum_i \mt{\cchn}_{ij}=1 \Rightarrow \forall j : \mt{\cchn}_{ij} = {u_i}/{\sum_i u_i}$, which is to restate \eqref{eq:cer-indiv}.}. 
% Now, one can show that the following holds:
% \begin{eqnarray}\label{eq:rank-cer}
%     \rank(\mt \cchn)=1 \;\; &\Leftrightarrow& \;\;  \cchn \text{ is a classical erasure.}
% \end{eqnarray}

Now, quantum erasure channels $\qer$ may be written as:
\begin{equation}
    \forall \rho : \; \chn[\rho] = \tau \;\; \Leftrightarrow \;\;  \chn \text{ is a quantum erasure,}
\end{equation}
where $\tau$ here is a density operator, of which the entire space of states is erased to. It can be realized by the dilation $\qer[\rho] = \TrB[\swap[\rho \otimes \tau]\swap^\dag]$, with $\swap$ as the swap operator with the unitary action $\swap \ket{\psi} \otimes \ket{\phi} = \ket{\phi} \otimes \ket{\psi}$ for all $\ket{\psi}, \ket{\phi} \in \mathbb{C}$ \cite{scarani-ziman-2001quantum-homogenization,scarani2002thermalizing}. Now, when $\gtm{\chn}$ is some linear representation of the quantum channel $\chn$ \eqref{eq:gtm-def}, $\chn$ is a quantum erasure if and only if $\rank{(\gtm{\chn})}=1$ \footnote{Noting \eqref{eq:gtm-def}, $\forall (j \neq d^2): \qer[\gpaul_j] = \Tr[\gpaul_j] \tau=\mathbb{0}$. Hence for these values of $\forall (j\neq 1 \vee i =d^2)$, $(\gtm{\qer})_{ij} = 0$. Meanwhile, for any other entry,  $(\gtm{\qer})_{ij} = \Tr[\gpaul_i \tau]$---leaving only one non-zero column. This ensures that $\rank(\gtm{\qer})=1$. Likewise, if $\rank(\gtm{\chn})=1$, $(\gtm{\chn})_{ij} = u_i v_j$. Now, it is always the case that $(\gtm{\chn})_{d^2d^2}=1$ and for all other $j$, $(\gtm{\chn})_{d^2j}=0$, $v_j$ is $0$ for all $j$ but $d^2$. Which also implies that all $i\neq d^2 \wedge j\neq d^2$ $(\gtm{\chn})_{ij}=0$, which is to say the state space is totally erased.}. 

Finally, we remark that erasure channels always have zero absolute determinant ($\cad\cer =0, \qad\qer=0$) \footnote{It is however, not the case that any map with a zero absolute determinant is an erasure. For higher dimensions, we may have $d$-dimensional channels for which their corresponding matrices are rank-$n$ for $1<n<d$. Such channels are not erasing over the whole state space but nevertheless have an absolute determinant of zero, as they erase a \textit{subspace}. We may refer to such maps as \textit{partial erasure} maps. We note here that, in the case of \textit{bits}, there are no partial erasure channels. For this elementary class of stochastic maps, it is the case that $\cad{\cchn}=0$ if and only if $\cchn$ is a classical erasure.}.

We now turn to how Bayes' rule treats erasure channels. For the classical erasure, we can prove the following equivalences, 
% Now, just as with \eqref{eq:rank-cer}, one finds that, 
% \begin{eqnarray}\label{eq:rank-qer}
%     \rank(\gtm \chn)=1 \;\; &\Leftrightarrow& \;\;  \chn \text{ is a quantum erasure.}
% \end{eqnarray}
% The proof for this is included as a footnote . Together, we also have,
% \begin{coro}\label{thm:erasure-collapse} Erasures have a determinant of zero,
%     \begin{eqnarray*}
%     \cchn \text{ is a classical erasure} \quad &\Rightarrow& \quad \cad{\cchn} =0, \\
%     \chn \text{ is a quantum erasure} \quad &\Rightarrow& \quad \qad{\chn} =0.
% \end{eqnarray*}
% \end{coro} 
% \begin{proof}
%     This is simply the outcome of the earlier theorems which establish the rank-$1$ nature of such maps, as in \eqref{eq:rank-cer} and \eqref{eq:rank-qer}.
% \end{proof}

\begin{lemm}
\label{res:c-erase}
These three statements are equivalent: 
\begin{enumerate}
    \item[\emph{(\textbf{I})}] $ \forall p\,\exists\tau : \, \cchn[p] = \tau$ i.e. the channel is a classical erasure. 
    \item[\emph{(\textbf{II})}] $ \forall\rf\,\forall p : \, \rvp[p] = \rf$ i.e. the channel's Bayesian inverses are always classical erasures that erase toward the reference prior.
    \item[\emph{(\textbf{III})}] $ \exists\rf\, \exists\mu \, \forall p: \, \rvp[\rho] = \mu$ i.e. there exists some reference prior, for which the channel's Bayesian inverse is a classical erasure. 
\end{enumerate}
\end{lemm}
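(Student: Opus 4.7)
The plan is to argue cyclically, $(\textbf{I})\Rightarrow(\textbf{II})\Rightarrow(\textbf{III})\Rightarrow(\textbf{I})$. The first two implications are direct computations from the Bayes formula \eqref{bayes}; the third carries the substance of the claim.

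For $(\textbf{I})\Rightarrow(\textbf{II})$: assuming $\cchn$ is an erasure with output distribution $\tau$, equation \eqref{eq:cer-indiv} gives $\cchn(a'|a)=\tau(a')$ independently of $a$, so that $\cchn[\rf](a')=\tau(a')$ as well. Substituting into \eqref{bayes} collapses the ratio and yields $\rvp(a|a')=\rf(a)$ for every $a'\in\supp(\tau)$. Consequently $\rvp[p](a)=\sum_{a'}\rf(a)\,p(a')=\rf(a)$ for every input distribution $p$, so $\rvp$ is itself an erasure whose target is precisely the chosen prior $\rf$. The implication $(\textbf{II})\Rightarrow(\textbf{III})$ is then immediate: pick any $\rf$ and set $\mu=\rf$.

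For $(\textbf{III})\Rightarrow(\textbf{I})$, assume $\rf$ and $\mu$ exist with $\rvp[p]=\mu$ for all $p$. I would first feed in the point mass $p=\delta_{c'}$ for each $c'\in\supp(\cchn[\rf])$; this isolates a single column of the reverse channel and forces $\rvp(a|c')=\mu(a)$, independently of $c'$. Rearranging \eqref{bayes} then produces the key identity
\[
\cchn(a'|a)\,\rf(a)=\mu(a)\,\cchn[\rf](a')
\]
valid for every $a$ and every $a'\in\supp(\cchn[\rf])$. For $a\in\supp(\rf)$, this gives the rank-one factorization $\cchn(a'|a)=(\mu(a)/\rf(a))\,\cchn[\rf](a')$; imposing column-stochasticity $\sum_{a'}\cchn(a'|a)=1$ pins $\mu(a)=\rf(a)$ on $\supp(\rf)$, whence $\cchn(a'|a)=\cchn[\rf](a')=:\tau(a')$ independently of $a$, which is exactly \eqref{eq:cer-indiv}.

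The main obstacle is the extension to inputs $a\notin\supp(\rf)$: on those rows \eqref{bayes} imposes no constraint, so a pathological $\cchn$ could in principle fail to be an erasure off the support while still having an erasing Bayesian inverse. The natural resolution is to recognize that throughout the paper the reference prior is implicitly taken to have full support---otherwise the denominator $\cchn[\rf](a')$ in \eqref{bayes} is degenerate and $\rvp$ is not well-defined on a portion of the output space. Under this standard Bayesian convention, the argument above already covers every $a$, yielding $\rank(\mt\cchn)=1$ and completing the cycle. If one prefers to leave the prior unrestricted, the cleanest workaround is to strengthen (III) to require a fully supported $\rf$, or to observe that any erasing $\rvp$ forces $\cchn$ to agree with $\tau$ on all inputs that can be reached by convex combinations arbitrarily close to $\supp(\rf)$, which suffices by continuity.
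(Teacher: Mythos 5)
Your proof is correct, and it departs from the paper's at exactly one point: the exit from (\textbf{III}). The paper proves (\textbf{I})$\Leftrightarrow$(\textbf{II}) directly---crucially using the $\forall\rf$ quantifier in (\textbf{II}) to cancel the ratio in \eqref{bayes} and conclude that $\cchn(a'|a)$ is independent of $a$---and then handles (\textbf{III}) by invoking the general recoverability property of Bayesian inverses, $\forall\rf:\rvp\circ\cchn[\rf]=\rf$ \cite{petzisking2022axioms}: since $\cchn[\rf]$ is an admissible input $p$, the fixed point must satisfy $\mu=\rf$. It never writes out a direct (\textbf{III})$\Rightarrow$(\textbf{I}) computation. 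You instead feed in point masses to extract the identity $\cchn(a'|a)\,\rf(a)=\mu(a)\,\cchn[\rf](a')$, and use column-stochastic normalization to force $\mu(a)=\rf(a)$ on $\supp(\rf)$ and hence the rank-one form $\cchn(a'|a)=\tau(a')$. Your route is more elementary and self-contained (it does not import recoverability as a black box), and it surfaces something the paper's terser argument leaves implicit: passing from the single-prior statement (\textbf{III}) to (\textbf{I}) genuinely requires $\rf$ to have full support. Your worry is not hypothetical---for instance, the identity channel with a point-mass prior $\rf=\delta_1$ has a Bayesian inverse that erases to $\rf$ on the support of $\cchn[\rf]$, yet is no erasure---and note that the recoverability argument does not remove this loophole either, since it only pins $\mu=\rf$ (indeed $\rvp[\cchn[\rf]]=\rf$ holds in that example). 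So your explicit appeal to the full-support convention, under which the denominator of \eqref{bayes} is nondegenerate, is a gain in rigor rather than a defect; what the paper's approach buys in exchange is brevity and a conceptually transparent reason why $\mu$ must be the prior (the inverse perfectly recovers the prior from its image).
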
 
\begin{proof}
The relationship (\textbf{I}) $\to$ (\textbf{II}) can be easily shown:.
\begin{eqnarray*}
    \hat{\cer}_\rf(a|a') &=& \cer(a'|a)\frac{\rf(a)}{\cer[\rf](a')} = \tau(a')\frac{\rf(a)}{\tau(a')} = \rf(a),
\end{eqnarray*}
which is yet another erasure channel \eqref{eq:cer-indiv}. Specifically, it is a channel that \textit{erases to the reference prior}. This is sensible since all information is lost in the channel (there is nothing to be learnt from action of the channel), Bayes' rule defers to the reference prior. (\textbf{II}) $\to$ (\textbf{I}) holds because the retrodiction erases to the prior \textit{for any prior}:
\begin{align*}
\forall(a',a,\rf) \quad & \rvp(a|a') = \rf(a) \\
&\cchn(a'|a)\frac{\rf(a)}{\cchn[\rf](a')} = \rf(a) \\
& \cchn(a'|a) = \cchn[\rf](a') \quad \text{noting $\forall\rf$} \\
\Rightarrow \qquad &  \cchn(a'|a) = \tau(a') 
\end{align*}
Now, (\textbf{II}) $\to$ (\textbf{III}) holds by mere instantiation. Finally, (\textbf{III}) $\to$ (\textbf{II}) holds because of the recoverability condition that Bayesian inversions always fulfill \cite{petzisking2022axioms}: $\forall\rf:\rvp\circ\cchn[\rf]=\rf$. Together with (\textbf{III}), since $\cchn[\rf]$ is an instantiation of $p$, then the fixed point $\mu$ must be $\rf$.
\end{proof}

\begin{center}
\begin{figure*}[t]
    \centering
    \includegraphics[width=0.82\linewidth]{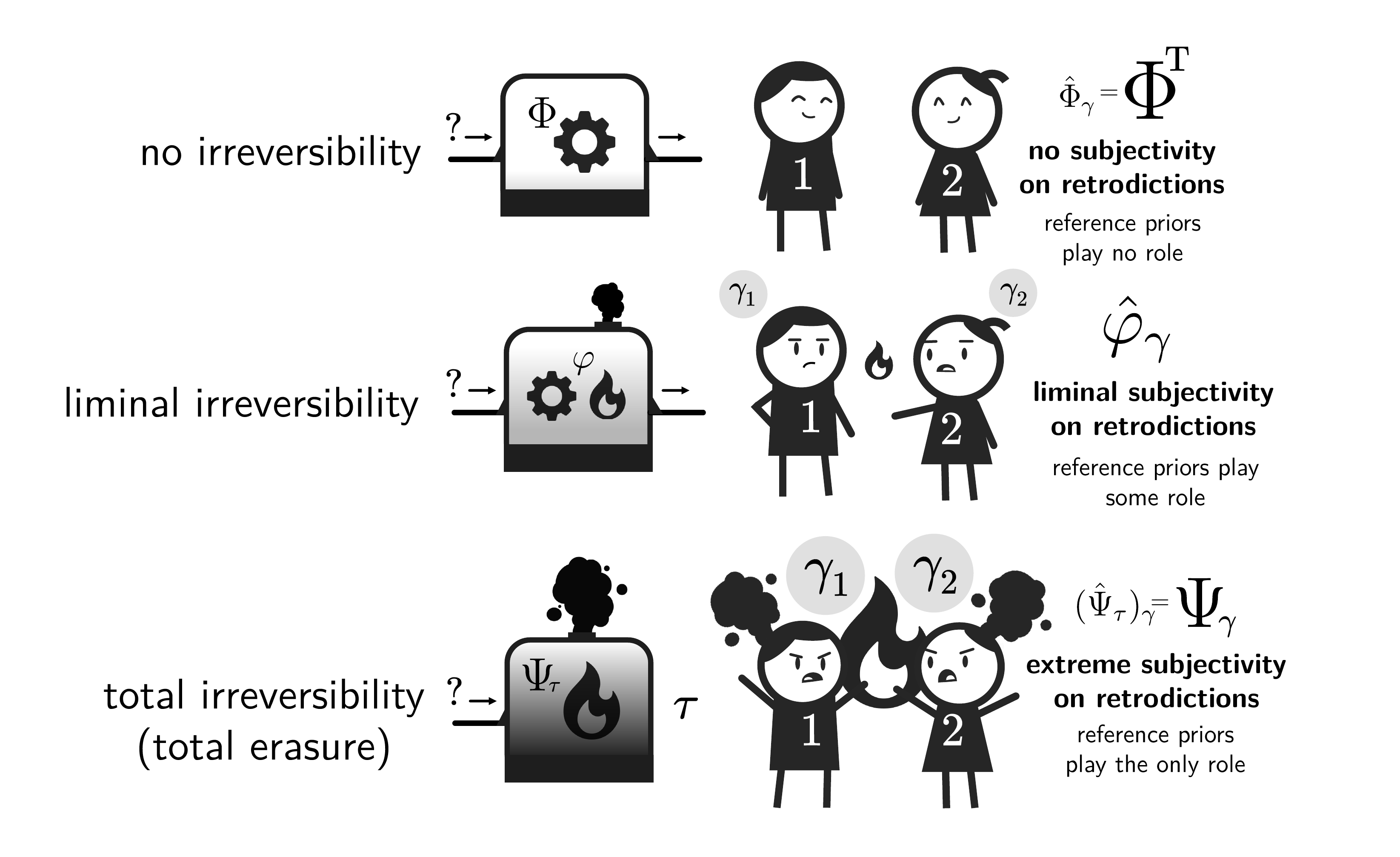}
    \caption{A cartoon illustration of the key interest in this work: quantifying the irreversibility of a map by a formal measure of the dependence of its Bayesian inversion on reference priors. That is, understanding physical irreversibility of processes through the subjectivity involved in doing Bayesian inference for their past inputs.}
    \label{fig:subjectivity}
\end{figure*}
\end{center}

When the Petz map is applied to the quantum erasure channel, we have similar set of statements:
\begin{lemm} \label{res:q-erase}
These three statements are equivalent: 
\begin{enumerate}
    \item[\emph{(\textbf{I})}] $ \forall \rho\,\exists\tau : \, \chn[\rho] = \tau$ i.e. the channel is a quantum erasure. 
    \item[\emph{(\textbf{II})}] $ \forall\rf\,\forall \rho : \, \petz[\rho] = \rf$ i.e. the channel's Petz transposes are always quantum erasures that erase toward the reference state.
    \item[\emph{(\textbf{III})}] $ \exists\rf\, \exists\mu \, \forall \rho: \, \petz[\rho] = \mu$ i.e. there exists some reference state, for which the channel's Petz transpose is a quantum erasure. 
\end{enumerate}
\end{lemm}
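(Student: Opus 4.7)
The plan is to prove the cycle (I) $\to$ (II) $\to$ (III) $\to$ (I), directly mirroring the classical argument in Lemma~\ref{res:c-erase} but with the Petz formula~\eqref{petz} and the adjoint relation~\eqref{eq:adj} replacing the stochastic-matrix manipulations used there.

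For (I) $\to$ (II), I would first extend the erasure property $\chn[\rho]=\tau$ by linearity to all operators, $\chn[X]=\tr{X}\tau$, and then use~\eqref{eq:adj} to compute $\chn^\dagger[Y]=\tr{\tau Y}\,\one$ for every $Y$. Inserting this into~\eqref{petz}, the sandwiched input becomes $\tfrac{1}{\sqrt{\chn[\rf]}}\rho\tfrac{1}{\sqrt{\chn[\rf]}}=\tfrac{1}{\sqrt{\tau}}\rho\tfrac{1}{\sqrt{\tau}}$, whose trace against $\tau$ collapses to $\tr{\rho}=1$; hence $\chn^\dagger[\cdots]=\one$, and the outer factors give $\sqrt{\rf}\,\one\,\sqrt{\rf}=\rf$. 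So $\petz[\rho]=\rf$ for every input $\rho$ and every reference $\rf$. The step (II) $\to$ (III) is immediate: choose any $\rf$ and set $\mu=\rf$.

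For (III) $\to$ (I), I would first invoke the recoverability identity $\petz[\chn[\rf]]=\rf$---a defining property of the Petz map---which, combined with (III), forces $\mu=\rf$. Thus for this particular reference, $\petz[\rho]=\rf$ for all states $\rho$, and by linearity $\petz[\sigma]=\tr{\sigma}\rf$ for every Hermitian $\sigma$. Substituting $\sigma=\sqrt{\chn[\rf]}\,Y\,\sqrt{\chn[\rf]}$ into~\eqref{petz} and matching both sides yields $\chn^\dagger[Y]=\tr{Y\chn[\rf]}\,\one$ for all $Y$. Applying~\eqref{eq:adj} then gives $\tr{\chn[X]Y}=\tr{X\chn^\dagger[Y]}=\tr{X}\tr{Y\chn[\rf]}$ for all Hermitian $X,Y$, from which one reads off $\chn[X]=\tr{X}\chn[\rf]$, which is precisely (I) with $\tau=\chn[\rf]$.

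The main obstacle I anticipate is handling rank-deficient references: if $\rf$ or $\chn[\rf]$ fails to be full-rank, the substitution $\sigma=\sqrt{\chn[\rf]}Y\sqrt{\chn[\rf]}$ no longer spans all operators, and the inverse square roots in~\eqref{petz} must be interpreted as pseudo-inverses on the support of $\chn[\rf]$. I would address this either by restricting attention to full-rank priors---generic within the space of references---and extending the structural conclusion by continuity, or by carrying out each step explicitly on the support of $\chn[\rf]$, a standard refinement that does not alter the logical flow.
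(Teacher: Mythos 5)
Your proof is correct under the standing convention---shared silently by the paper---that the reference $\rf$ and $\chn[\rf]$ are full rank, and it takes a genuinely different route. For (\textbf{I})$\to$(\textbf{II}), the paper works with the concrete swap dilation $\qer[\rho]=\TrB[\swap(\rho\otimes\tau)\swap^\dagger]$ and the dilation form \eqref{eq:adjdil} of the adjoint, reducing the Petz map by an explicit partial-trace computation; you instead extract $\chn^\dagger[Y]=\tr{\tau Y}\,\one$ abstractly from linearity and \eqref{eq:adj}, which is representation-independent and shorter. The converse directions are also organized differently: the paper proves (\textbf{II})$\to$(\textbf{I}) using the full $\forall\rf$ hypothesis and then disposes of (\textbf{III})$\to$(\textbf{II}) tersely via the recoverability identity $\petz\circ\chn[\rf]=\rf$, which pins $\mu=\rf$ but, strictly speaking, must still be routed back through (\textbf{I}) to recover the universal quantifier over $\rf$. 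Your cycle (\textbf{I})$\to$(\textbf{II})$\to$(\textbf{III})$\to$(\textbf{I}) makes that implicit step explicit: from a \emph{single} reference you derive $\chn^\dagger[Y]=\tr{Y\chn[\rf]}\,\one$ via the substitution $\sigma=\sqrt{\chn[\rf]}\,Y\sqrt{\chn[\rf]}$ and conclude $\chn[X]=\tr{X}\,\chn[\rf]$, which is essentially the dual of the paper's trace-duality manipulation in its (\textbf{II})$\to$(\textbf{I}) step, but logically tighter since it never uses the $\forall\rf$ quantifier.

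One caveat on your closing paragraph: full rank of $\rf$ is not a removable technicality for (\textbf{III}); it is a genuine hypothesis. Because (\textbf{III}) quantifies only existentially over $\rf$, neither genericity nor a continuity argument can help when the sole witness is rank-deficient---and the equivalence actually fails there. Concretely, take the qubit measure-and-prepare channel $\chn[\rho]=\bra{0}\rho\ket{0}\,\tau_1+\bra{1}\rho\ket{1}\,\tau_2$ with $\tau_1$ full rank, $\tau_2\neq\tau_1$, and $\rf=\ketbra{0}{0}$. Then $\chn[\rf]=\tau_1$, $\chn^\dagger[Y]=\tr{\tau_1 Y}\ketbra{0}{0}+\tr{\tau_2 Y}\ketbra{1}{1}$, and with the usual pseudo-inverse convention $\petz[\rho]=\sqrt{\rf}\,\chn^\dagger\!\left[\tau_1^{-1/2}\rho\,\tau_1^{-1/2}\right]\sqrt{\rf}=\tr{\rho}\,\ketbra{0}{0}$ for every state $\rho$, so (\textbf{III}) holds (with $\mu=\rf$, consistent with recoverability), yet $\chn$ is manifestly not an erasure. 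So your argument stands exactly as you run it---with faithful $\rf$, as the paper also implicitly requires when it cancels $\sqrt{\rf}$ and inverts $\sqrt{\chn[\rf]}$---but the proposed ``work on supports / extend by continuity'' fallback should be dropped rather than presented as preserving the logical flow.
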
 
The proof is similar to Lemma \ref{res:c-erase}'s (see Appendix \ref{app-qer}). 

\subsection{State Space Preservation \& Dependence \\ on Bayesian Priors}
These Lemmas are raised to make the following equivalence relations between information geometry and the more Bayesian notion of dependence on priors. From the Lemmas in Section \ref{sec:bijunichn} one easily derives that,
\begin{theorem}  \label{thm:adet-ref-rls}
Any map conserves the state space completely if and only if its Bayesian inversion is always independent of the reference prior. 
\begin{eqnarray*} 
    \cad{\cchn} = 1 \quad &\Leftrightarrow& \quad \forall \rf : \rvp = \hat{\cchn} \\
    \qad{\chn} = 1 \quad &\Leftrightarrow& \quad \forall \rf : \petz = \hat{\chn}
\end{eqnarray*}
\end{theorem}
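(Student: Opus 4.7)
The plan is to observe that Theorem~\ref{thm:adet-ref-rls} is a direct corollary of Lemmas \ref{thm:bij-abs-det}--\ref{thm2}, and to prove it by simple transitivity of logical equivalences. I would not attempt any independent calculation; instead, I would organize the proof into two parallel one-line derivations, one for the classical case and one for the quantum case, each invoking exactly two previous lemmas.

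First, for the classical statement, I would write that Lemma~\ref{thm:bij-abs-det} gives $\cad{\cchn} = 1 \Leftrightarrow \cchn \text{ is a bijection}$, and Lemma~\ref{thm1} (specifically the equivalence of statements (\textbf{I}) and (\textbf{II})) gives $\cchn \text{ is a bijection} \Leftrightarrow \forall \rf : \rvp = \hat{\cchn}$. Chaining these two biconditionals yields
\begin{equation*}
    \cad{\cchn} = 1 \;\Leftrightarrow\; \forall \rf : \rvp = \hat{\cchn},
\end{equation*}
which is precisely the classical half of the theorem.

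Next, for the quantum statement, I would apply the same reasoning with the quantum counterparts: Lemma~\ref{thm:quni-abs-det} yields $\qad{\chn} = 1 \Leftrightarrow \chn \text{ is unitary}$, while the equivalence of (\textbf{I}) and (\textbf{II}) in Lemma~\ref{thm2} yields $\chn \text{ is unitary} \Leftrightarrow \forall \rf : \petz = \hat{\chn}$. Combining them gives the quantum statement.

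I do not expect any genuine obstacle in this proof, since both directions have already been established in the lemmas that support it. The only subtlety worth flagging is to verify that the direction of the equivalences is consistent---i.e.\ that Lemmas \ref{thm1} and \ref{thm2} are genuine biconditionals (not merely implications in one direction) between statements (\textbf{I}) and (\textbf{II}), which the wording in Section~\ref{sec:bijunichn} confirms. If I wanted to give the theorem a bit more standalone weight, I might additionally remark that the contrapositive reading---any deviation from complete state-space preservation manifests as prior-dependence in the Bayesian reverse---is what motivates the quantification of irreversibility via subjectivity developed in the next section.
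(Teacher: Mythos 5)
Your proof is correct and matches the paper's own reasoning exactly: the paper derives Theorem~\ref{thm:adet-ref-rls} precisely by chaining Lemma~\ref{thm:bij-abs-det} with the (\textbf{I})$\Leftrightarrow$(\textbf{II}) equivalence of Lemma~\ref{thm1} in the classical case, and Lemma~\ref{thm:quni-abs-det} with Lemma~\ref{thm2} in the quantum case. Nothing further is needed.
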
 

When maps conserve all information, there is no Bayesian subjectivity for their retrodiction. Meanwhile, whenever the state space shrinks, such subjectivity \textit{must} enter in. Meanwhile, from the Lemmas in \ref{sec:er-chns}, we easily derive a counterpart, 

\begin{theorem}\label{thm:adet-er-ref-rls}
    Any map erases an entire state space to a single point (that is, a map is rank-$1$) if and only if its Bayesian inversion erases to the reference prior. 
    \begin{eqnarray}
     \normalfont{\rank}(\mt \cchn)=1 \quad &\Leftrightarrow& \quad   \forall\rf\,\forall p : \, \rvp[p] = \rf \\
     \normalfont{\rank}(\gtm \chn)=1  \quad &\Leftrightarrow& \quad \forall\rf\,\forall \rho : \, \petz[\rho] = \rf
    \end{eqnarray}
\end{theorem}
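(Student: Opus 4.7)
The plan is to observe that this theorem is a bridge between the information-geometric language of rank, already used in the footnotes of Section \ref{sec:er-chns}, and the Bayesian-inversion language of Lemmas \ref{res:c-erase} and \ref{res:q-erase}. The real content of both equivalences has already been proved; what remains is to chain together two biconditionals on each side.

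For the classical statement, I would proceed in two steps. First, I would invoke the identification given in the footnote beneath equation \eqref{eq:cer-indiv}: $\rank(\mt\cchn) = 1$ if and only if $\cchn$ is a classical erasure channel, i.e.~condition (\textbf{I}) of Lemma \ref{res:c-erase}. The forward direction uses that a rank-one column-stochastic matrix must have every column equal to a common probability vector $\vr{\tau}$, and the reverse direction is immediate from \eqref{eq:cer-indiv}. Applying the (\textbf{I}) $\Leftrightarrow$ (\textbf{II}) equivalence of Lemma \ref{res:c-erase} then delivers $\forall\rf\,\forall p : \rvp[p] = \rf$, closing the classical biconditional.

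For the quantum statement the structure is identical. First I would recall from the footnote following the dilation of $\qer$ that $\rank(\gtm\chn) = 1$ holds if and only if $\chn$ is a quantum erasure channel, which is condition (\textbf{I}) of Lemma \ref{res:q-erase}. The rank-one hypothesis combined with the trace-preserving and unital-column structure of $\gtm\chn$ in the chosen operator basis $\{\gpaul_k\}$ forces $\chn[\rho]$ to be independent of $\rho$, and conversely a constant channel produces a $\gtm\chn$ with a single nonzero column. Chaining this with Lemma \ref{res:q-erase} closes the biconditional.

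I do not anticipate any genuine obstacle, since the theorem is obtained by rephrasing the antecedents of the earlier Lemmas from the functional condition ``$\chn[\rho]$ is constant in $\rho$'' into the linear-algebraic condition ``$\gtm\chn$ has rank one.'' The one point I would verify in passing is that the rank characterization in the quantum case is independent of the choice of operator basis used to build $\gtm\chn$; but since any two such choices are related by an invertible change of basis, rank is preserved and the statement is unambiguous.
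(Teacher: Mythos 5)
Your proposal is correct and follows essentially the same route as the paper, which obtains Theorem \ref{thm:adet-er-ref-rls} precisely by chaining the footnote characterizations $\rank(\mt\cchn)=1 \Leftrightarrow \cchn$ is an erasure and $\rank(\gtm\chn)=1 \Leftrightarrow \chn$ is a quantum erasure with the (\textbf{I}) $\Leftrightarrow$ (\textbf{II}) equivalences of Lemmas \ref{res:c-erase} and \ref{res:q-erase}. Your closing remark on the basis-independence of $\rank(\gtm\chn)$ is a sound extra check that the paper leaves implicit.
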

These two theorems strongly connect information geometric intuitions with the Bayesian dependence on one’s choice of prior, at the limiting cases of reversibility and irreversibility. When a map \textit{preserves} the whole space, its retrodiction has \textit{no dependence on our prior}. Meanwhile, when a map \textit{erases} the whole space, its retrodiction \textit{depends on nothing but our prior}, so to speak. 

A natural question then arises: can this relationship be formalized through a measure of irreversibility that quantifies such prior dependence for arbitrary linear maps? This leads us to this work's central construct.

\section{Bayesian Subjectivity} 
\label{sec:m-irr}
We quantify the dependence of a map's Bayesian inversion on its prior, as an irreversibility measure, through the \textit{Bayesian Subjectivity} $\mathrm{I^\mathsf s}$ of that map. For the classical case, we formalize this as:
\begin{equation}\label{eq:cis-def}
    \cis\cchn= \int || \hat\cchn_{\rf_1} -\hat\cchn_{\rf_2} ||_\lambda \; d\rf_1 d\rf_2,
\end{equation}
where $||\cchn_1 -\cchn_2||_\lambda = \sqrt{\lambda_{\text{max}}\big((\mt{\cchn_1}-\mt{\cchn_2})^\tpose(\mt{\cchn_1}-\mt{\cchn_2})\big)}$ is largest singular value of the matrix $\mt{\cchn_1}-\mt{\cchn_2}$ (with $\lambda_\text{max}(S)$ denoting the largest eigenvalue of some matrix $S$), which is simply a distance measure between the matrices $\mt{\cchn_1}$ and $\mt{\cchn_2}$. In this way, $|| \hat\cchn_{\rf_1} -\hat\cchn_{\rf_2} ||_\lambda$ formalizes the retrodictive ``disagreement'' that emerges from the difference between the two postulated priors $\rf_1, \rf_2$. $\cis\cchn$ then is summation over all such disagreement elements, over across all pairs of all possible priors. Through this construction, the measure captures the sensitivity of map's retrodiction to the choice of priors. We illustrate these notions through a cartoon in Figure \ref{fig:subjectivity} \footnote{Gear, flame and smoke icons included in Figure \ref{fig:subjectivity} are made by \textit{Freepik} and \textit{C-mo Box} from \textit{www.flaticon.com}, under the Flaticon attribution license.}. 

In the quantum case, we have the corresponding expression for Bayesian subjectivity:
\begin{equation}\label{eq:qis-def}
    \qis\chn = \int || \hat\chn_{\rf_1} -\hat\chn_{\rf_2} ||_\diamond \; d\rf_1 d\rf_2,
\end{equation}
where, $||\chn_1 -\chn_2||_\diamond$ is the diamond norm distance between the two quantum channels $\chn_1$ and $\chn_2$. As with \eqref{eq:cis-def}, $|| \hat\chn_{\rf_1} -\hat\chn_{\rf_2} ||_\diamond$ then describes the corresponding quantum-theoretic discrepancies (the quantum ``disagreement'') between two Bayesian inversions (given by the Petz map) arising from different quantum-theoretic reference priors. 

\subsection{Bayesian Subjectivity for Extremal Maps}\label{ssec:irr-on-extremes}
With this, Lemmas \ref{thm1} and \ref{thm2} clearly imply that every disagreement element is zero for bijective and unitary maps. Hence we have, 
\begin{theorem} \label{thm:bij-abs-det-irrv-measure} Bayesian subjectivity is zero if and only if the map is bijective (or unitary),
\begin{eqnarray*}
    \cis\cchn=0 \; &\Leftrightarrow& \; \cchn \text{ is a bjiection.} \\
   \qis\chn=0 \; &\Leftrightarrow& \; \chn \text{ is a unitary.} 
\end{eqnarray*}
\end{theorem}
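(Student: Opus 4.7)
The overall strategy is to reduce the theorem directly to Lemmas \ref{thm1} and \ref{thm2}, together with the standard fact that a non-negative continuous integrand whose integral vanishes must be identically zero on its domain of continuity. Both the classical and quantum statements then reduce to the equivalence of conditions (\textbf{I}) and (\textbf{II}) in those lemmas.

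First I would dispatch the forward direction ($\Leftarrow$), which is essentially immediate. If $\cchn$ is a bijection, Lemma \ref{thm1} gives that $\rf \mapsto \hat{\cchn}_\rf$ is a constant map on the simplex, so $\hat{\cchn}_{\rf_1} = \hat{\cchn}_{\rf_2}$ for every pair $(\rf_1,\rf_2)$. Hence the integrand $\|\hat\cchn_{\rf_1}-\hat\cchn_{\rf_2}\|_\lambda$ in \eqref{eq:cis-def} is pointwise zero, and $\cis\cchn=0$. The unitary case is identical word-for-word: Lemma \ref{thm2} makes $\rf \mapsto \petz$ constant, killing $\|\hat\chn_{\rf_1}-\hat\chn_{\rf_2}\|_\diamond$ and hence $\qis\chn$.

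The backward direction ($\Rightarrow$) is only mildly more delicate. Assuming $\cis\cchn=0$, I would observe that $\|\,\cdot\,\|_\lambda$ is a genuine norm (not merely a pseudonorm), so the integrand is a non-negative, continuous function of $(\rf_1,\rf_2)$ on the open full-measure subset of full-support priors. Vanishing of the integral therefore forces the integrand to vanish on that open set, and hence $\hat{\cchn}_{\rf_1}=\hat{\cchn}_{\rf_2}$ for all full-support priors. This is precisely condition (\textbf{II}) of Lemma \ref{thm1} (extending to all priors by continuity of Bayes' rule on the closure), so $\cchn$ is a bijection. The quantum case is the same argument with $\|\,\cdot\,\|_\diamond$ in place of $\|\,\cdot\,\|_\lambda$ and Lemma \ref{thm2} in place of Lemma \ref{thm1}.

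The main obstacle is the behaviour of the Bayesian inverses at priors without full support, since \eqref{bayes} and \eqref{petz} involve dividing by $\cchn[\rf]$ or inverting $\sqrt{\chn[\rf]}$, which is ill-defined on the boundary of the simplex or the state space. To handle this cleanly I would restrict the relevant integration domain to the open, full-measure subset of full-rank priors, where continuity in $\rf$ of both the retrodiction map and the chosen norm is standard; this suffices to promote ``integral zero'' to ``pointwise zero,'' after which only the already-established Lemmas \ref{thm1} and \ref{thm2} are needed to close the loop. All remaining steps are routine.
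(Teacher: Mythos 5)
Your proposal is correct and follows essentially the same route as the paper: the paper's proof is precisely the observation that Lemmas \ref{thm1} and \ref{thm2} make every disagreement element vanish for bijections/unitaries, and conversely that a vanishing integral of the non-negative disagreement elements forces prior-independence, hence bijectivity/unitarity via the (\textbf{II})$\Rightarrow$(\textbf{I}) direction of those lemmas. Your only addition is to spell out the measure-theoretic step (non-negative continuous integrand with zero integral vanishes pointwise) and the restriction to full-support priors, which the paper leaves implicit.
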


Likewise, since every erasure channel's retrodiction is always an erasure channel \textit{to the prior} (regardless of the original map's fixed point), Lemmas \ref{res:c-erase} and \ref{res:q-erase} give the following:
\begin{theorem} \label{thm:erasure-irr-measure-all-equal}
For any given choice of dimension, erasures for the state space of that dimension always share the same values of Bayesian subjectivity:
\begin{eqnarray*}
    \forall(\cer_1,\cer_2)\in \mathbb{R}^d &:&  \cis{\cer_1} = \cis{\cer_2} \\
   \forall(\qer_1,\qer_2)\in \mathbb{C}^d &:&  \qis{\qer_1} = \qis{\qer_2} 
\end{eqnarray*}
\end{theorem}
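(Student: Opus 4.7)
The plan is to observe that the entire theorem follows almost immediately from Lemmas \ref{res:c-erase} and \ref{res:q-erase}, which tell us that the Bayesian inverse of an erasure channel depends only on the prior, not on the fixed point of the original erasure. Specifically, part (\textbf{II}) of Lemma \ref{res:c-erase} states that for any classical erasure $\cer$ and any prior $\rf$, $\hat{\cer}_\rf$ is the erasure channel sending every input to $\rf$. The crucial point is that this characterization makes no reference to the fixed point $\tau$ of $\cer$ itself.

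I would therefore first note that, for any two classical erasures $\cer_1,\cer_2\in\mathbb{R}^d$ and any prior $\rf$, one has $\hat{\cer_1}_\rf=\hat{\cer_2}_\rf$, namely both are the erasure-to-$\rf$ channel. Substituting this into the definition \eqref{eq:cis-def} then gives
\begin{equation*}
\cis{\cer_1}=\int \|\hat{\cer_1}_{\rf_1}-\hat{\cer_1}_{\rf_2}\|_\lambda\,d\rf_1 d\rf_2 = \int \|\hat{\cer_2}_{\rf_1}-\hat{\cer_2}_{\rf_2}\|_\lambda\,d\rf_1 d\rf_2 = \cis{\cer_2},
\end{equation*}
since the integrands are pointwise equal across the (shared) measure on pairs of priors.

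The quantum case proceeds identically, invoking Lemma \ref{res:q-erase}(\textbf{II}) instead: for any quantum erasure $\qer$ and any reference state $\rf$, the Petz recovery map $\hat{\qer}_\rf$ is the erasure to $\rf$, independent of $\qer$'s original fixed point. Hence $\hat{\qer_1}_\rf=\hat{\qer_2}_\rf$ for any two $\qer_1,\qer_2\in\mathbb{C}^d$, and substituting into \eqref{eq:qis-def} (with the diamond norm in place of the largest-singular-value norm) yields $\qis{\qer_1}=\qis{\qer_2}$.

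There is essentially no obstacle here: all the nontrivial content has already been absorbed into Lemmas \ref{res:c-erase} and \ref{res:q-erase}, so the proof is really just a one-line substitution in the integrand. The only thing one needs to be careful about is that the measure over priors used in \eqref{eq:cis-def} and \eqref{eq:qis-def} is a fixed, intrinsic choice that does not depend on the erasure under consideration, which is clear from the definitions.
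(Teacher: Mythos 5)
Your proposal is correct and follows essentially the same route as the paper: the paper justifies Theorem \ref{thm:erasure-irr-measure-all-equal} precisely by noting (via Lemmas \ref{res:c-erase} and \ref{res:q-erase}) that every erasure's Bayesian inverse is the erasure to the prior, independent of the original fixed point, so the disagreement elements---and hence the integrals \eqref{eq:cis-def} and \eqref{eq:qis-def}---coincide for all erasures of a given dimension. Your explicit remark that the measure over priors is fixed and channel-independent is a small but sound point of care that the paper leaves implicit.
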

These are instantiated in the numerics later.

Since Theorem \ref{thm:erasure-irr-measure-all-equal} holds, we normalize the measure for each dimensional context: for a given $d$, for all $\cchn,\cer \in \mathbb{R}^d$ and $\chn,\qer \in \mathbb C^d$, 
\begin{eqnarray}\label{eq:normalization-irrev-measures}
     \frac{\mathrm{I}^{\mathsf{s}}_\mathsf{c}(\cchn)}{\mathrm{I}^{\mathsf{s}}_\mathsf{c}(\cer)} \mapsto {\mathrm{I}^{\mathsf{s}}_\mathsf{c}(\cchn)}, \quad
     \frac{\mathrm{I}^{\mathsf{s}}_\mathsf{q}(\chn)}{\mathrm{I}^{\mathsf{s}}_\mathsf{q}(\qer)} \mapsto {\mathrm{I}^{\mathsf{s}}_\mathsf{q}(\chn)}.
\end{eqnarray}
This gives a clearer yardstick of irreversibility for any $d$-dimension. We also conjecture that, for any choice of $d$, erasures channels maximizes subjectivity. If this holds, then $\cis\cchn, \,  \qis\chn \in[0,1]$ always, with bijections and erasures at the lower and upper bounds respectively.

\subsection{Concatenations \& Data Processing Inequality}
Sound irreversibility measures ought to obey some expression of the data processing inequality: \textit{Markovian concatenations should never decrease a measure of irreversibility}. Now, we have numerically verified that this indeed holds for all cases we have sampled thus far (order of $10^3$ samples for arbitrary dimensions and channels). Specifically, we have numerically verified that Markovian concatenations both classical ($\psi \,\circ\,\cchn$) and quantum ($\mathcal{G}\circ\chn$) obey the following:
\begin{eqnarray} 
    \mathrm{I}_\mathsf{c}^\mathsf{s}(\psi\circ\cchn) \; &\geq& \; \mathrm{I}_\mathsf{c}^\mathsf{s}(\cchn) \label{eq:c-dpi} \\
    \mathrm{I}_\mathsf{q}^\mathsf{s}(\mathcal{G}\circ\chn) \; &\geq& \;  \mathrm{I}_\mathsf{q}^\mathsf{s}(\chn) \label{eq:q-dpi}
\end{eqnarray}
Analytical proof for this, however, remains elusive. For instance, it is not possible to prove this by showing that \textit{disagreement elements} never decrease over concatenations. There are tuples of priors and channels that \textit{decrease} disagreement over concatenations. Formally, $\exists(\cchn, \Omega=\psi\circ \cchn, \rf_1, \rf_2):$
\begin{equation}
|| \hat\Omega_{\rf_1} -\hat\Omega_{\rf_2} ||_\lambda
<
|| \hat\cchn_{\rf_1} -\hat\cchn_{\rf_2} ||_\lambda 
\end{equation}
This also occurs in quantum scenarios. That said, we emphasize that while this occurs for some pair of priors (that is, \textit{element-wise}), integration over \textit{all} priors afterward prevents subjectivity as a whole never decreases (that is, on the level of the \textit{measure}). Therefore, Bayesian subjectivity's adherence to the data processing inequality is an \textit{aggregate} property. 

Now, from the composability of Bayesian inverses \cite{petzisking2022axioms},
\begin{eqnarray}
    \forall(\cchn,\Omega=\psi\circ \cchn) &:& \hat{\Omega}_\rf=\rvp \circ \hat\psi_{\cchn[\rf]} \\
    \forall(\chn,\mathcal{T}=\mathcal{G}\circ \chn) &:& \hat{\mathcal{T}}_\rf=\petz\circ \hat{\mathcal{G}}_{\chn[\rf]}
\end{eqnarray}
alongside Lemmas \ref{thm1}, \ref{thm2} and Theorem \ref{thm:erasure-irr-measure-all-equal}, we have:
\begin{theorem}\label{thm:concat-preserve}
Bijections (or unitaries) always preserve Bayesian subjectivity, and the Bayesian subjectivity of an erasure channel can never be increased. 
    \begin{eqnarray}
    \forall(\Phi,\cchn) &:&  \cis{\Phi\circ\cchn} = \cis{\cchn} \label{rln1} \\
    \forall(\mathcal{U},\chn) &:& \qis{\mathcal{U}\circ\chn} = \qis{\chn} \label{rln2} \\
    \forall(\cer,\cchn) &:&  \cis{\cchn\circ\cer} = \cis{\cer} \label{rln3} \\
    \forall(\qer,\chn) &:&  \qis{\chn\circ\qer} =\qis{\qer} \label{rln4}
    \end{eqnarray}
\end{theorem}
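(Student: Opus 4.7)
The plan is to handle the four relations in parallel, relying on two structural inputs provided just above the statement: the composability formulas $\hat{\Omega}_\rf=\rvp\circ\hat\psi_{\cchn[\rf]}$ and $\hat{\mathcal{T}}_\rf=\petz\circ\hat{\mathcal{G}}_{\chn[\rf]}$, together with Lemmas \ref{thm1}--\ref{thm2} on bijections and unitaries and Theorem \ref{thm:erasure-irr-measure-all-equal} on erasures. The bijection and unitary cases will use the first pair of lemmas to strip out the ``extra'' factor from the Bayesian inverse, while the erasure cases will reduce to Theorem \ref{thm:erasure-irr-measure-all-equal} once the composition is recognised as being itself an erasure.

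For relation \eqref{rln1}, I would set $\Omega=\Phi\circ\cchn$ and apply composability to get $\hat\Omega_\rf=\hat\cchn_\rf\circ\hat\Phi_{\cchn[\rf]}$. Lemma \ref{thm1} then lets me replace $\hat\Phi_{\cchn[\rf]}$ by the prior-independent map $\Phi^{-1}$, so every disagreement element collapses to $\|(\hat\cchn_{\rf_1}-\hat\cchn_{\rf_2})\circ\Phi^{-1}\|_\lambda$. Because $\mt{\Phi}^{-1}$ is a permutation matrix and hence orthogonal, right multiplication by it leaves $M^\tpose M$ similar to itself and therefore preserves the largest singular value; the disagreement element is thus equal to $\|\hat\cchn_{\rf_1}-\hat\cchn_{\rf_2}\|_\lambda$, and integrating over $\rf_1,\rf_2$ yields $\cis{\Phi\circ\cchn}=\cis{\cchn}$. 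Relation \eqref{rln2} follows the identical script with Lemma \ref{thm2} replacing Lemma \ref{thm1} and the diamond norm replacing the spectral norm, invoking the standard fact that post-composition by a unitary channel leaves the diamond norm invariant.

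For relations \eqref{rln3} and \eqref{rln4} the argument is much shorter. I would first observe that composing any channel \emph{after} an erasure yields another erasure: if $\cer[p]=\tau$ for every $p$, then $(\cchn\circ\cer)[p]=\cchn[\tau]$ is independent of $p$, so $\cchn\circ\cer$ is a classical erasure onto $\cchn[\tau]$; the quantum case is identical. Since $\cchn\circ\cer$ (resp.\ $\chn\circ\qer$) is then an erasure on a state space of the same dimension as $\cer$ (resp.\ $\qer$), Theorem \ref{thm:erasure-irr-measure-all-equal} directly furnishes $\cis{\cchn\circ\cer}=\cis{\cer}$ and $\qis{\chn\circ\qer}=\qis{\qer}$.

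The only step I expect to need more than bookkeeping is the norm-invariance argument for the first two relations. In the classical case this is a one-line fact about singular values under orthogonal right multiplication, but in the quantum case one has to state explicitly that $\mathcal{U}^\dagger\otimes\mathbb{1}$ acts as a bijection on bipartite states, so the supremum defining the diamond norm is unchanged when $\mathcal{U}^\dagger$ is post-composed. Once this invariance is recorded as a preliminary remark, every equality in the theorem reduces either to an integrand-level identity (for bijections and unitaries) or to a direct invocation of Theorem \ref{thm:erasure-irr-measure-all-equal} (for erasures).
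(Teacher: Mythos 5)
Your proposal is correct and takes essentially the same route as the paper's proof in Appendix \ref{app-proof-concat-cis}: composability together with Lemmas \ref{thm1} and \ref{thm2} reduces \eqref{rln1} and \eqref{rln2} to the invariance of the spectral and diamond norms under a right-composed permutation or unitary, while \eqref{rln3} and \eqref{rln4} follow from the closure of erasures under post-composition ($\cchn\circ\cer_\tau=\cer_{\cchn[\tau]}$, $\chn\circ\qer_\tau=\qer_{\chn[\tau]}$) plus Theorem \ref{thm:erasure-irr-measure-all-equal}. Your explicit justifications of the norm invariances (orthogonality of $\mt{\Phi}^\tpose$ for the singular value, and $\mathcal{U}^\dagger\otimes\one$ acting bijectively on bipartite states for the diamond norm) are, if anything, slightly more careful than the paper's one-line remarks.
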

For detailed proofs of these relations, see Appendix \ref{app-proof-concat-cis}. From these we get the expected features of reversible process always making no difference to any original irreversibility value, and the irreversibility value of any maximally irreversible process being impossible to increase by the action of any other process.

\section{Results for Bits, Qubits and Trits} \label{sec:resul-discuss}
In the previous section, we have reviewed \textit{general} results that show that Bayesian subjectivity constitutes an insightful and viable irreversibility measure. We now move to discussing more specific dimensionalities. We do this for bit channels $\cchn\in\mathbb{R}^2$, qubit channels $\chn\in\mathbb{C}^2$ and finally trit channels $\cchn\in\mathbb{R}^3$.

\begin{figure}[h!] 
\begin{subfigure}{0.45\textwidth}
\centering
         \includegraphics[width=0.94\linewidth]{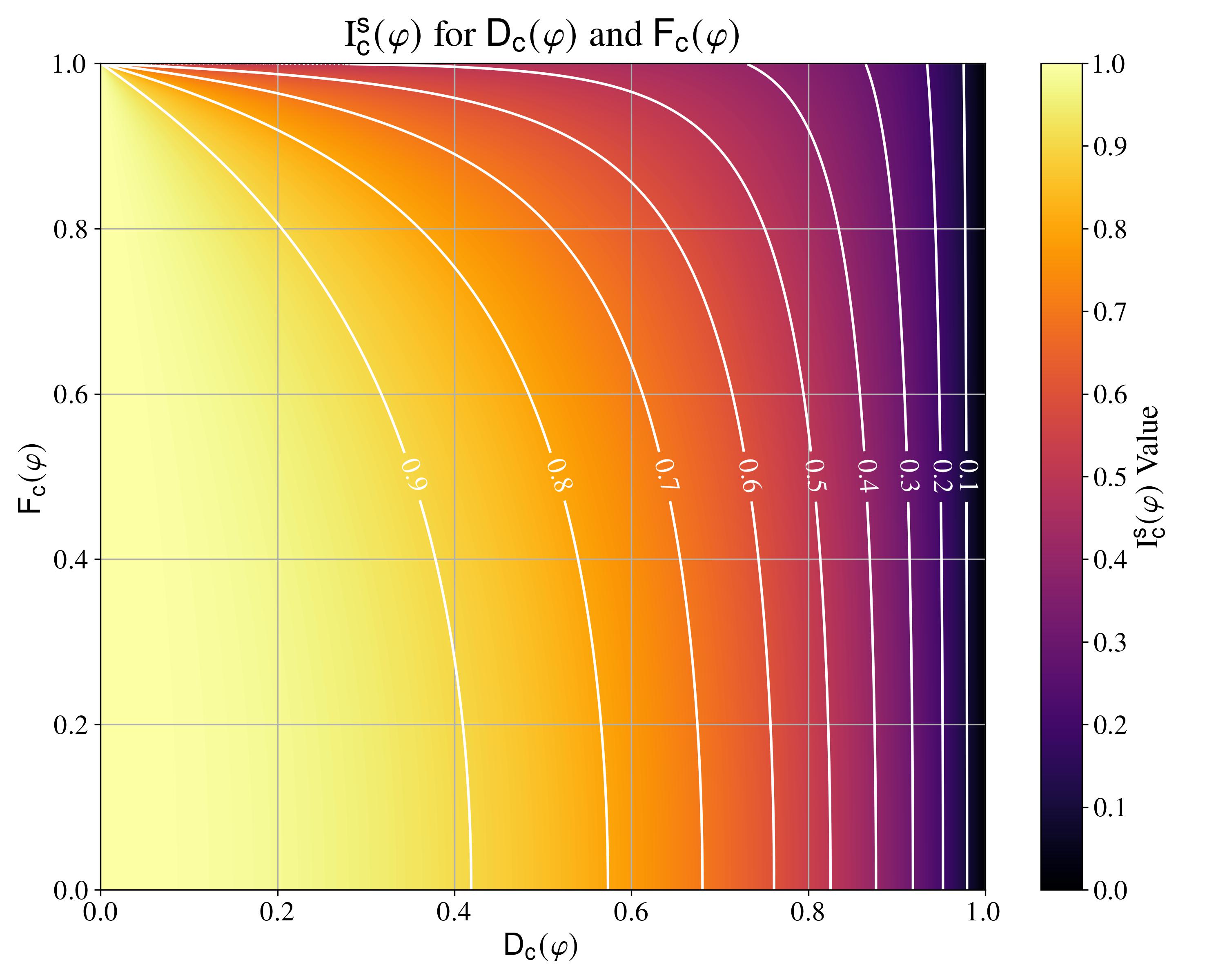}
         \caption{$\cis{\cchn}$ against $\cad \cchn$ and $\cfd \cchn$, for bit channels. Plotted from analytical expressions (see Appendix \ref{app-bit-chans-details}).} \label{fig:bit-cis-an}
\end{subfigure} 
\begin{subfigure}{0.45\textwidth}
\centering
\includegraphics[width=0.94\linewidth]{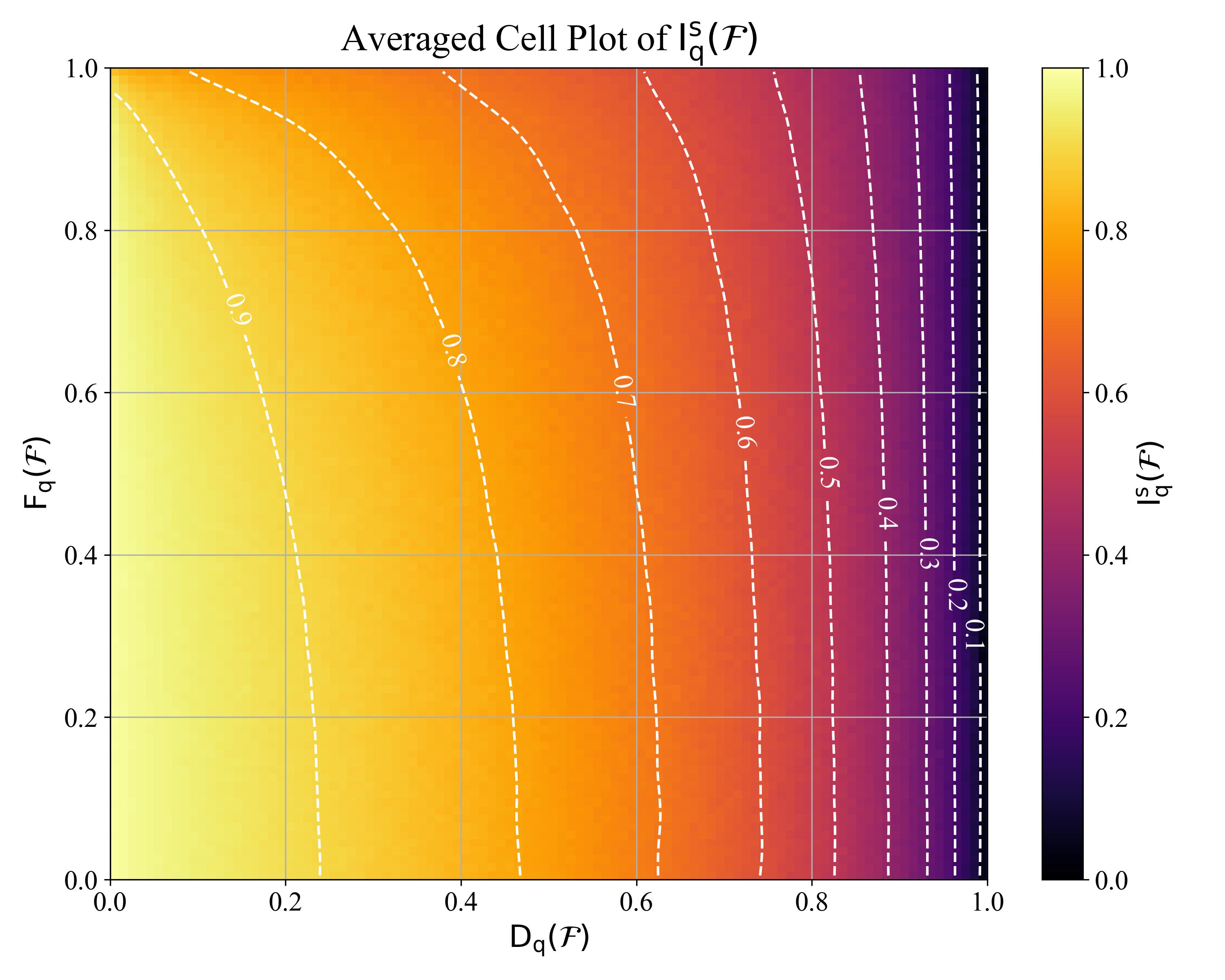}
\caption{Average $\qis{\chn}$ against $\qad \chn$ and $\qfd \chn$, for qubit channels.} \label{fig:qbit-qis}
\end{subfigure} 
\begin{subfigure}{0.45\textwidth}
\centering
         \includegraphics[width=0.94\linewidth]{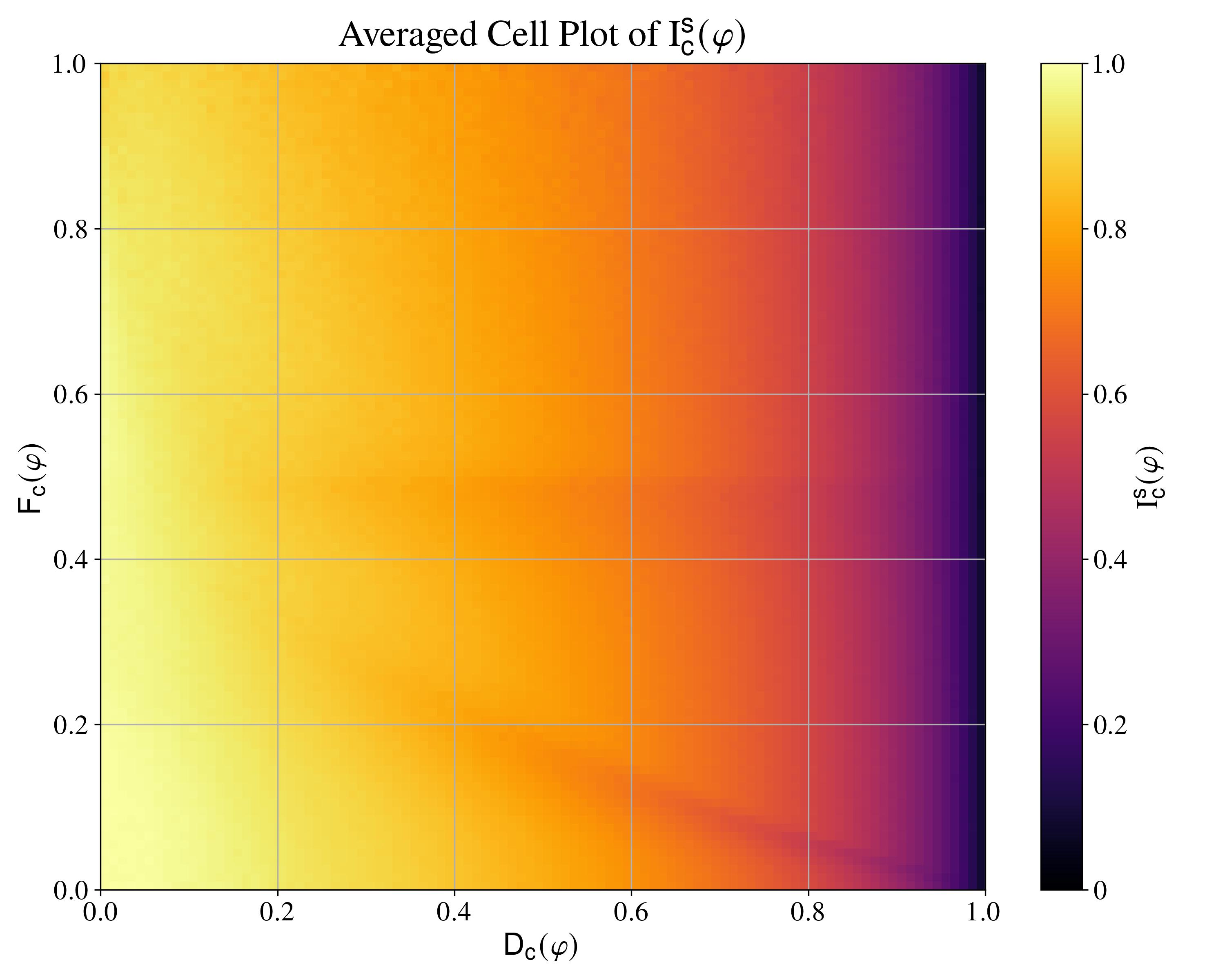}
 \caption{$\cis{\cchn}$ as a colour plot against $\cad \cchn$ and $\cfd \cchn$, for trit channels.} \label{fig:trit-cis}
\end{subfigure} 
\caption{Colour density plots for bit, qubit and trit channels of Bayesian subjectivity $\mathrm{I}^\mathsf{s}$ against absolute determinant and the fixed centroid distance.} \label{fig:subj-plots}
\end{figure}

\begin{figure}[h!] 
\begin{subfigure}{0.45\textwidth}
\centering
         \includegraphics[width=0.94\linewidth]{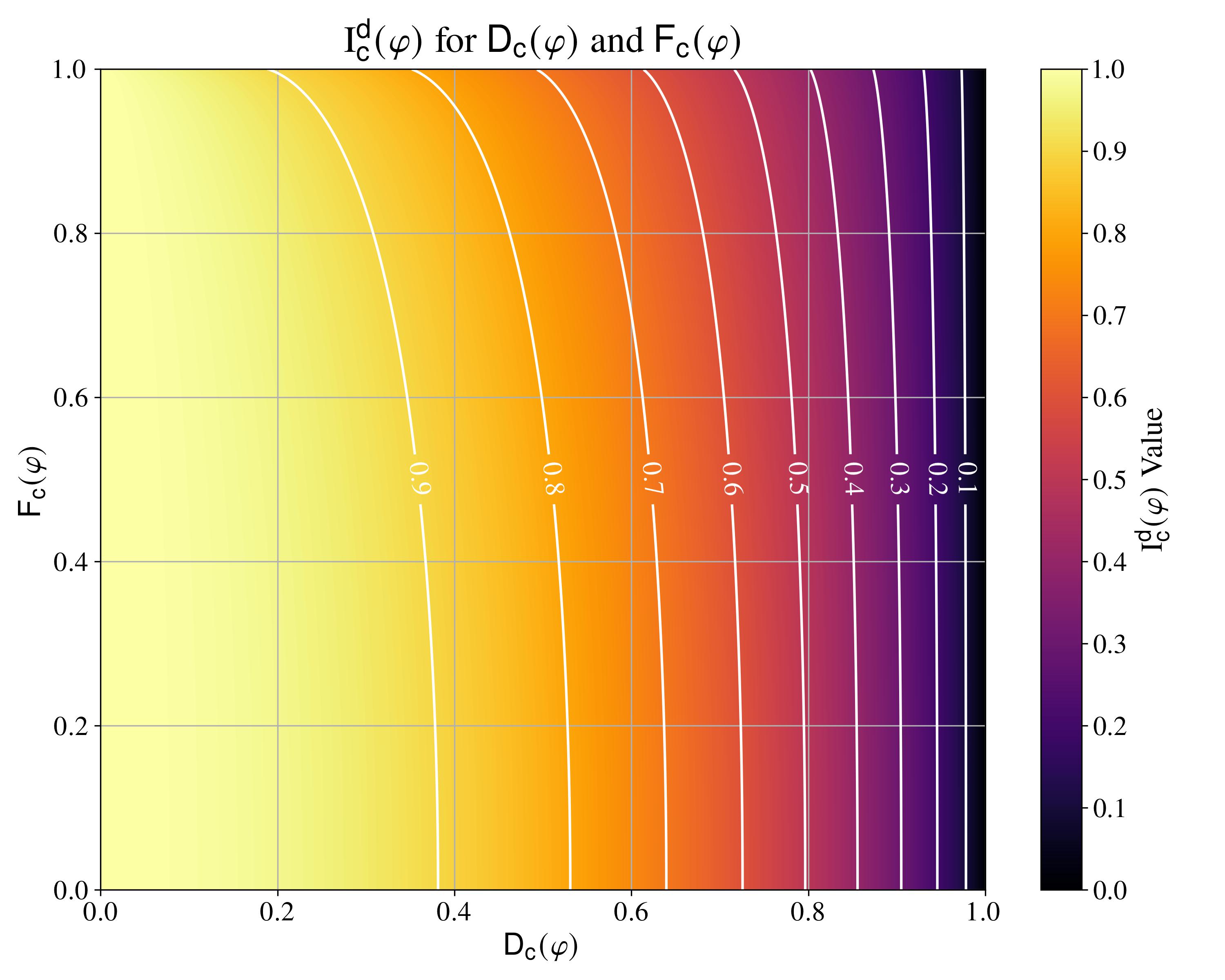}
         \caption{$\ccd{\cchn}$ against $\cad \cchn$ and $\cfd \cchn$, for bit channels. Plotted from analytical expressions (see Appendix \ref{app-bit-chans-details}).} \label{fig:bit-ccd-an}
\end{subfigure} 
\begin{subfigure}{0.45\textwidth}
\centering
\includegraphics[width=0.94\linewidth]{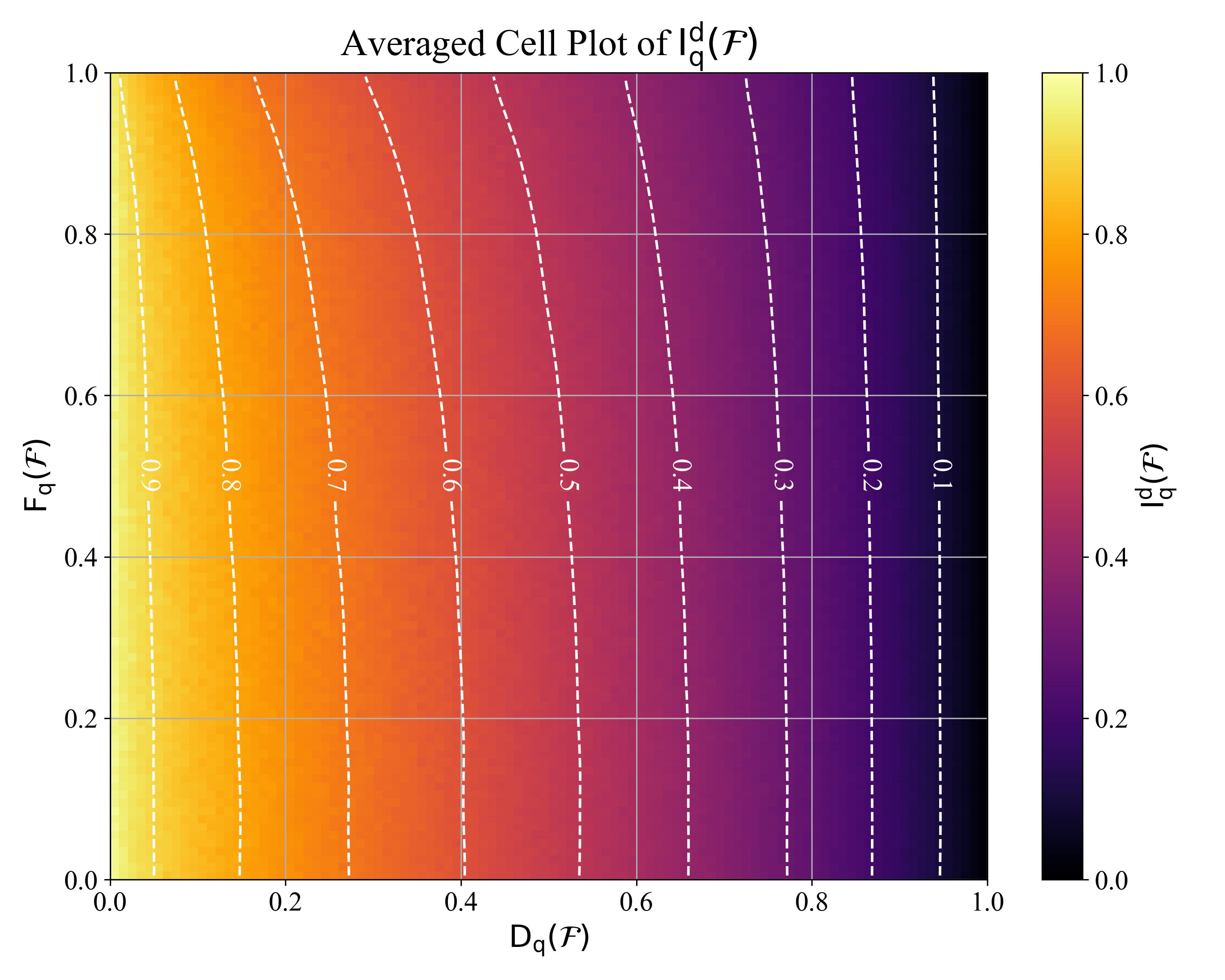}
\caption{Average $\qcd{\chn}$ against $\qad \chn$ and $\qfd \chn$, for qubit channels.} \label{fig:qbit-qcd}
\end{subfigure} 
\begin{subfigure}{0.45\textwidth}
\centering
         \includegraphics[width=0.94\linewidth]{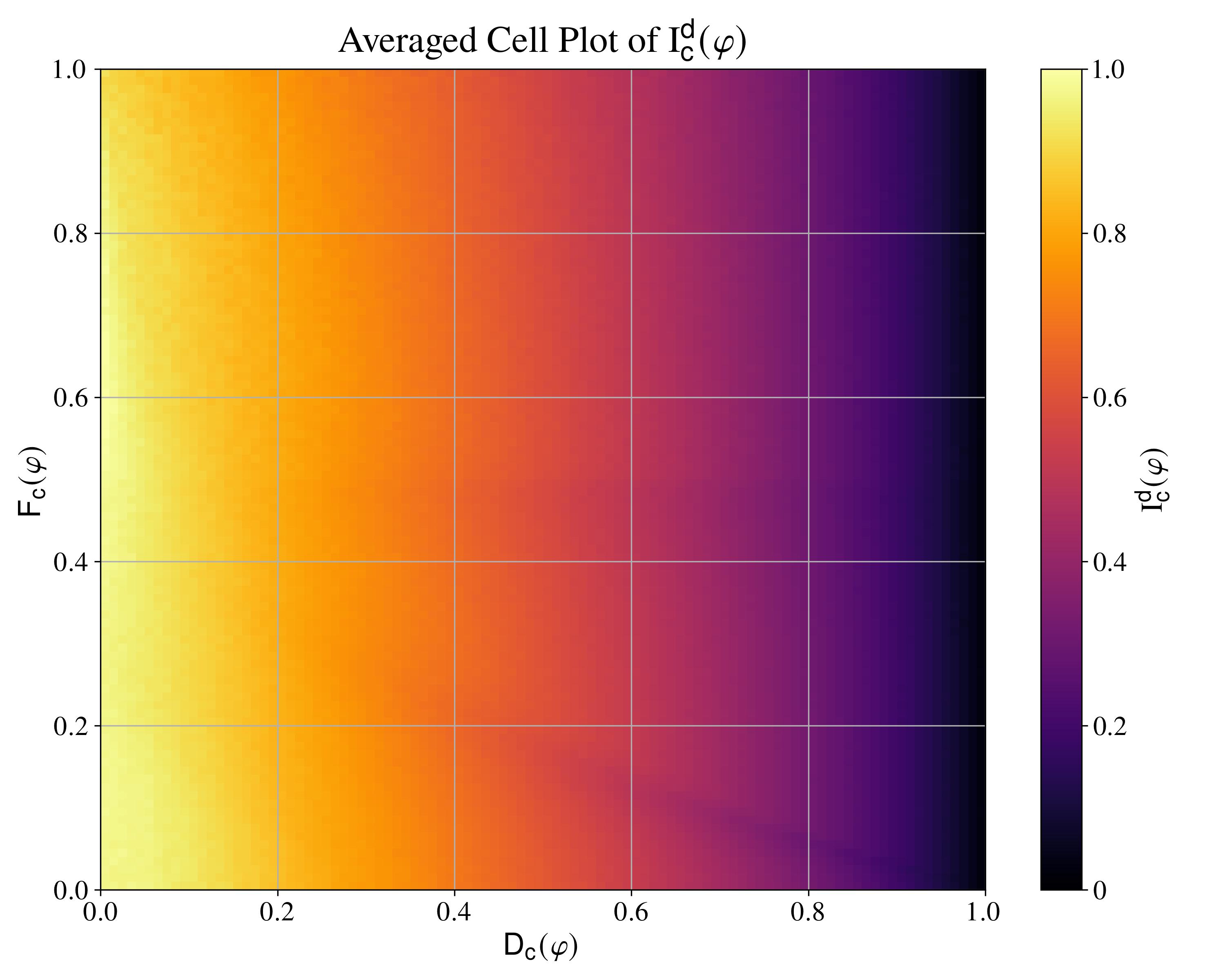}
 \caption{$\ccd{\cchn}$ as a colour plot against $\cad \cchn$ and $\cfd \cchn$, for trit channels.} \label{fig:trit-ccd}
\end{subfigure} 
\caption{Colour density plots for bit, qubit and trit channels of average change in divergence $\mathrm{I}^\mathsf{d}$ against absolute determinant and the fixed centroid distance.} \label{fig:change-in-div-plots}
\end{figure}

\subsection{Bit Channels}
From Theorems \ref{thm:adet-ref-rls} and \ref{thm:adet-er-ref-rls}, one may expect that there is a straightforward monotonic relationship between the absolute determinant of a channel and its subjectivity. It turns out, that even for simple bit channels, this does not obtain. Instead, we have found that there is a \textit{joint monotonicity} between $\cis\cchn$ and $\cad\cchn$ and a parameter we are calling the \textit{fixed centroid distance} $\cfd\cchn$. This is formally defined in \eqref{eq:cfd-def} but in this main text it suffices that we speak of its conceptual essence: $\cfd\cchn$ is simply the distance of the fixed point of $\cchn$ from the maximally mixed state (for details and definitions see Appendix \ref{fixed-centroid-sec}) and is a monotone of the \textit{purity} of the map's equilibrium.  

\subsubsection*{Joint monotonicity with \\ quasistaticity \& equilibrium entropy}

In the case of bit channels, the joint monotonicity of $\cis\cchn$ with respect to $\cad\cchn$ and $\cfd\cchn$ is seen plainly in the analytical plot in Figure \ref{fig:bit-cis-an}. One observes from the figure that if $\cad\cchn$ is fixed, $\cis\cchn$ increases as $\cfd \cchn$ decreases. Likewise, if one fixes $\cfd{\cchn}$, $\cis\cchn$ 
increases monotonically with increases in $\cad \cchn$. 

Now, since one may see $\cad\cchn$ a mathematical quantifier of quasistaticity (see Appendix \ref{relaxation-time-ssec}) and $\cfd\cchn$ is a quantifier of the entropy at steady state (see Appendix \ref{steady-state-equi-ssec}), this means our irreversibility measure $\cis\cchn$ properly captures two physical intuitions from thermodynamics: (1) the more approximately quasistatic a process is, the less irreversible it is and (2) the less entropy producing a process is, the less irreversible it is as well.  

We also remark that Theorems \ref{thm:bij-abs-det-irrv-measure} and \ref{thm:erasure-irr-measure-all-equal} are vividly demonstrated in Figure \ref{fig:bit-cis-an}. Furthermore, all points of the extreme right correspond to every erasure channel in this space of maps. From this, we see our conjecture that erasures always maximize $\cis\cchn$ is also met.

\subsubsection*{Comparison to Average Change in Divergence}
We also compare subjectivity to another measure which automatically satisfies the data processing inequality. We introduce this ``control'' or ``foil'' irreversibility measure here for both classical and quantum maps, calling it \textit{average change in divergence} $\mathrm{I}^\mathsf d$. It is defined as follows:
\begin{eqnarray}
    \ccd\cchn &=& \int\mathsf{div_c}(p||\rf) - \mathsf{div_c}(\cchn[p]||\cchn[\rf]) \, dp \, d \rf \label{eq:ccd-def} \\
    \qcd\chn &=& \int \mathsf{div_q}(\rho||\rf) - \mathsf{div_q}(\chn[\rho]||\chn[\rf])  \, d\rho \, d\rf,  \label{eq:qcd-def}
\end{eqnarray}
where $\mathsf{div_c}(p||\rf)$ is the Kullback-Leibler divergence between distributions $p$ and $\rf$ and $\mathsf{div_q}(\rho||\rf) = \Tr[\rho \log \rho -\rho \log \rf]$ is the Umegaki quantum relative entropy between density operators $\rho$ and $\rf$. The integrand describes irreversible entropy production when the evolution is a linear map (see e.g.~\cite{kolwol21}). In the classical case at least, $\gamma$ is indeed the Bayesian prior used to define the reverse process, if the entropy is defined as $\ln(P_F/P_R)$; the formal extension to the quantum case is discussed in Ref.~\cite{fullyquantum}. So the average change in divergence can also be called average entropy production.

One can verify that $\ccd\cchn$ and $\qcd\chn$ satisfy equivalent statements of Theorems \ref{thm:bij-abs-det-irrv-measure}, \ref{thm:erasure-irr-measure-all-equal} and \ref{thm:concat-preserve} (see Appendix \ref{app-avg-div-equiv-thms}). Hence, we normalize it the way we did for subjectivity \eqref{eq:normalization-irrev-measures}. Similarly, $\ccd\cchn$ displays joint monotonicity to $\cad\cchn$ and $\cfd{\cchn}$ for bit channels. As with Figure \ref{fig:bit-cis-an}, we plot a colour density graph, Figure \ref{fig:bit-ccd-an}, for $\ccd\cchn$. Comparing these two figures, one can see how $\cis\cchn$ is much more sensitive than $\ccd\cchn$ in the high $\cfd\cchn$ regime. This regime corresponds to a class of channels of special interest: \textit{absorbing maps} (see Appendix \ref{sec:absorbers}). We return to this point (and other comparisons between $\mathrm{I}^\mathsf{d}$ and $\mathrm{I}^\mathsf{s}$) in later discussions. For now, it suffices to say that $\cis\cchn$ is not equivalent to $\ccd\cchn$ to render it trivial, while also having enough similarities to strengthen its viability as an irreversibility measure.
\begin{center}
    \begin{figure}[h!] 
\begin{subfigure}{0.46\textwidth}
\centering
         \includegraphics[width=0.95\linewidth]{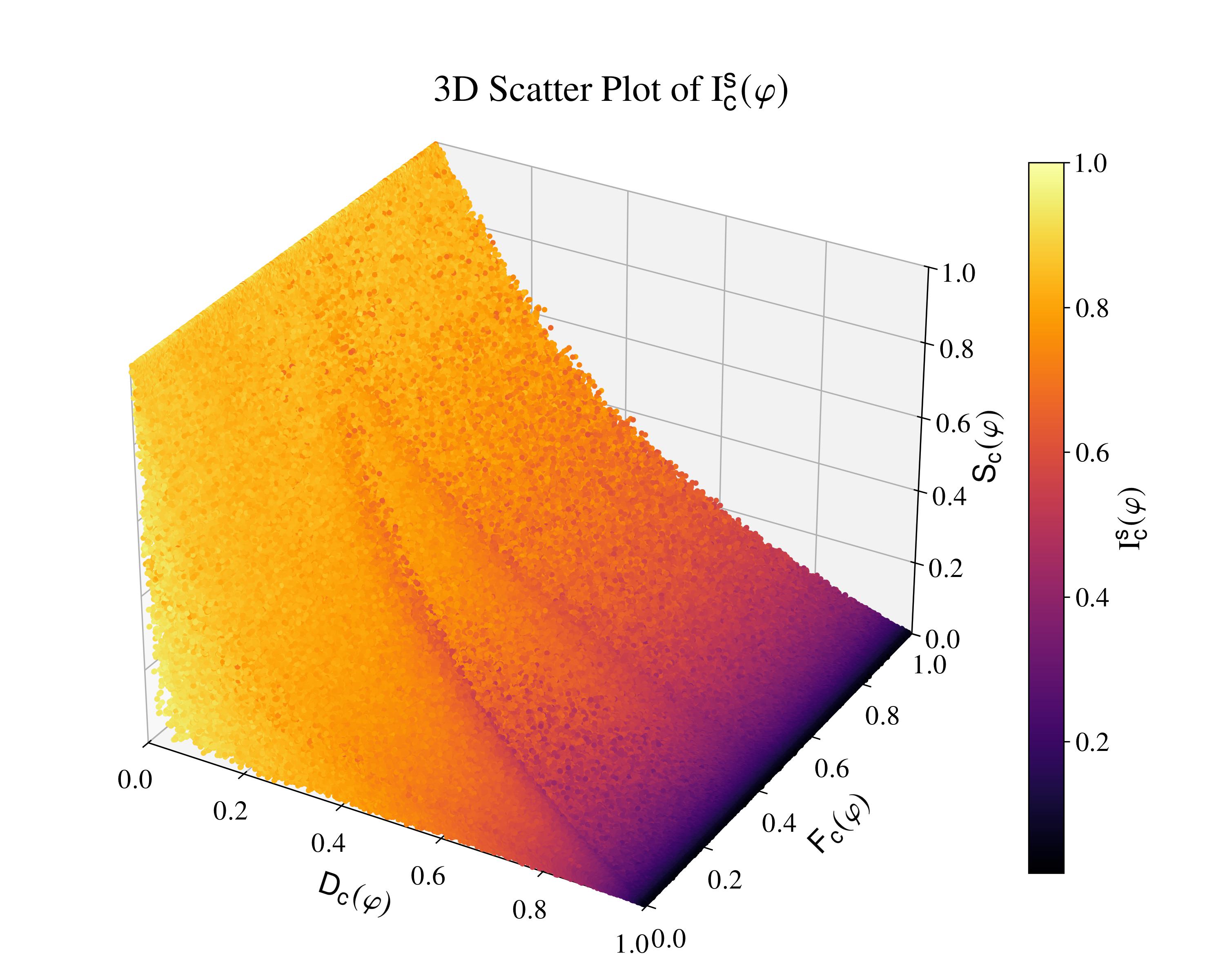}
         \caption{$\cis{\cchn}$ as a 3D colour density plot against $\cad \cchn$, $\cfd \cchn$ and $\skw(\cchn)$, for trit channels.} \label{fig:trit-cis-3d}
\end{subfigure} 
\begin{subfigure}{0.46\textwidth}
\centering
         \includegraphics[width=0.95\linewidth]{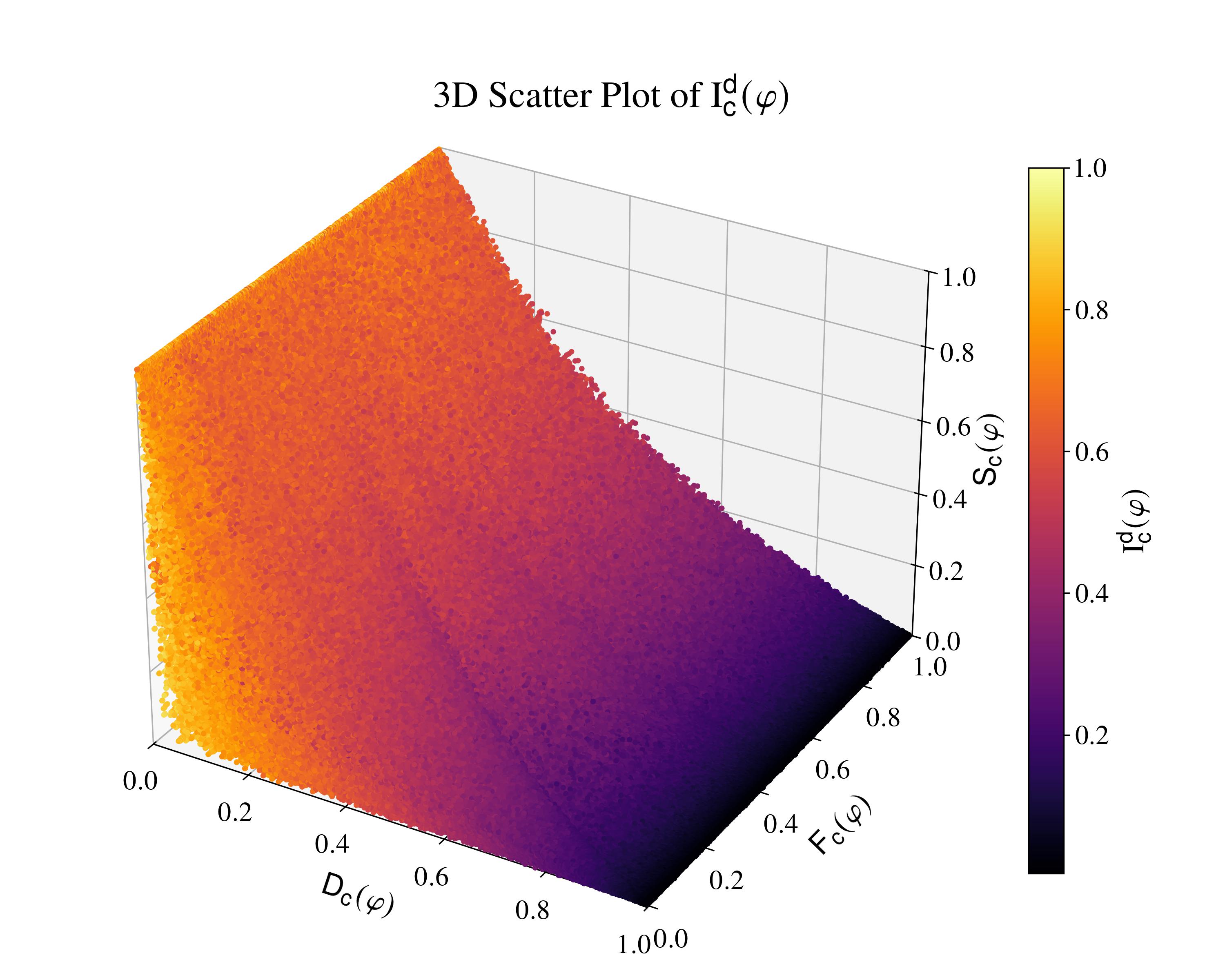}
         \caption{$\ccd{\cchn}$ as a 3D colour density plot against $\cad \cchn$, $\cfd \cchn$ and $\skw(\cchn)$, for trit channels.} \label{fig:trit-ccd-3d}
\end{subfigure} 
\begin{subfigure}{0.45\textwidth}
    \centering
    \includegraphics[width=0.93\linewidth]{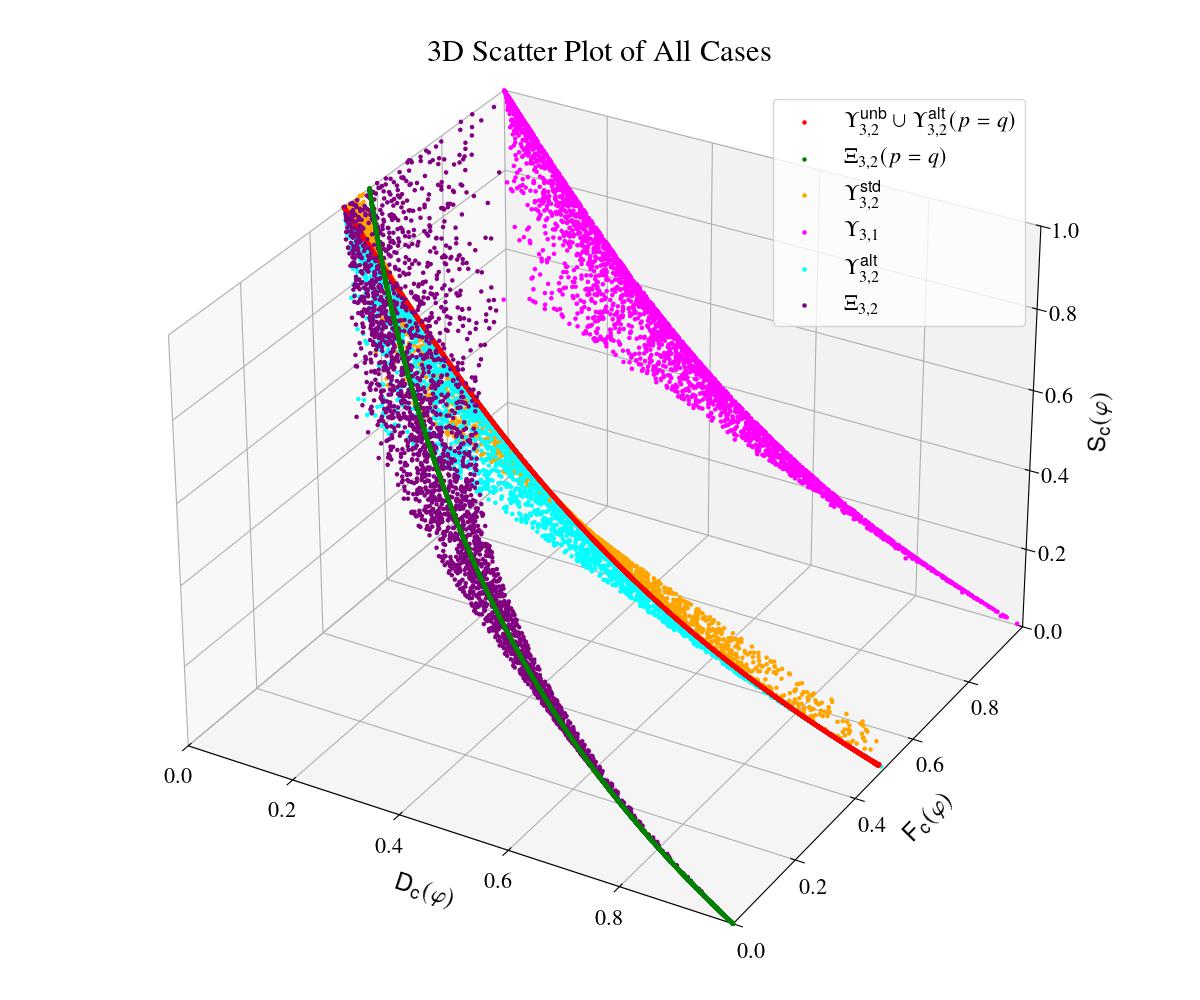}
    \caption{Notable trit channels are isolated to explain features of figures above. The red  and green lines correspond to the ``bisecting'' and ``arcing ridge'' respectively.}
    \label{fig:ridges}
\end{subfigure} 
\caption{3D colour plots for trit channels, from numerical methods. See Section \ref{ssec:trit} and Appendix \ref{app-explain-the-ridges}.} 
\label{fig:trit-result-3D}
\end{figure} 
\end{center}
\subsection{Qubit Channels}
We turn now to \textit{qubit} channels. We plot the average Bayesian subjectivity $\langle \qis\chn\rangle$ against $\qad{\chn}$ and $\qfd{\chn}$ in Figure \ref{fig:qbit-qis}. We do the same for $\langle\qcd\chn\rangle$ in Figure \ref{fig:qbit-qcd}. Taking the mean is necessary since $\qad{\chn}$ and $\qfd{\chn}$ underdetermines qubit channels, which take 12 real parameters to define \cite{choi1975choi-theorem,homa2024choi}. Due to this very large parameter space, we no longer see a strong, joint monotonic relationship between the parameters of the plots. That said, we do see significant \textit{heuristic} evidence that this joint monotonic relationship (in comparison to Figure \ref{fig:bit-cis-an}), on average or in some typical sense, carries over to the quantum regime for $d=2$. The numerical results also agree with Theorems \ref{thm:bij-abs-det-irrv-measure} and \ref{thm:erasure-irr-measure-all-equal}, as well as our conjecture that all $\cis\chn$ is maximum if and only if $\chn$ is a quantum erasure. As with the case of bit channels, Figures \ref{fig:qbit-qis} and \ref{fig:qbit-qcd} show how the two irreversibility measures differ. Once again, $\qis\chn$ appears to be more sensitive in the higher $\qfd\chn$ regime. $\qcd\chn$ seems more uniform for each choice of $\qad\chn$, while $\qis\chn$ is less so. Future work can be done to expand these observations and explore what other quantum-theoretic parameters are at play in determining values of $\qis\chn$. 

\subsection{Trit Channels} \label{ssec:trit}
In Figure \ref{fig:trit-cis}, we plot $\langle\cis{\cchn}\rangle$ against $\cad{\cchn}$ and $\cfd{\cchn}$, for trit channels $\cchn \in \mathbb{R}^3$, randomly sampling over the $6$ variables that define trit channels (details in Appendix \ref{app-random}). Once again Theorems \ref{thm:bij-abs-det-irrv-measure} and \ref{thm:erasure-irr-measure-all-equal} are instantiated, but while there is certainly some resemblance to Figure \ref{fig:bit-cis-an}, there are interesting differences. In particular, we notice two streaks of low subjectivity. 

To have a better picture of this, we introduce a new geometric measure in order to see a 3D diagram and explain these peculiarities. Since the simplexes for the classical trits are generally triangles, we opt for the \textit{skew} $\skw(\cchn)$ or irregularity of the channel's image. It shall be defined in the following way:
\begin{eqnarray}
    \skw(\cchn)= \max&\bigg(&\frac{\max(\theta^\cchn_{12},\theta^\cchn_{23},\theta^\cchn_{31})}{120\degree}, \nonumber \\ &&\frac{60\degree -\min(\theta^\cchn_{12},\theta^\cchn_{23},\theta^\cchn_{31})}{60\degree}\bigg)-\frac{1}{2},
\end{eqnarray}
where $\theta^\cchn_{ij} = \arccos\left[\frac{(\mt{\cchn}_i-\mt{\cchn}_k)\cdot (\mt{\cchn}_j-\mt{\cchn}_k)}{|\mt{\cchn}_i-\mt{\cchn}_k|_\mathsf{2} |\mt{\cchn}_j-\mt{\cchn}_k|_\mathsf{2}} \right]$. One can see how $\skw(\cchn) \in [0,1)$ with the lower bound corresponding to equilateral triangles of every size and the upper bound corresponding to any isosceles triangle with a vertex angle approaching $\pi$ radians. 

Setting $\skw(\cchn)$ as a new degree of freedom, we expand Figure \ref{fig:trit-cis} to \ref{fig:trit-cis-3d}. We see that the two streaks from the 2D colour plot goes to two ``ridges'' in the 3D plot. We remark here that this also occurs (though less distinctly) in our plot of $\ccd\cchn$ in Figure \ref{fig:trit-ccd-3d}. 

What explains these regions of unexpectedly high reversibility? We will not encumber the main text by getting into the technicalities. Suffice to say that, as with the behavior at high $\cfd\cchn$ in Figure \ref{fig:bit-cis-an}, these ridges correspond to \textit{absorbing channels} $\cabs_{d,n}$, and what might be called \textit{pseudo-absorbers} $\Xi_{d,n}$. This is identified visually in Figure \ref{fig:ridges}. We discuss these in detail in Appendix \ref{sec:absorbers}. The upshot is that the ridges occur due to the deterministic transitions that exist in these maps. Indeed, for these maps there exists $a,a'$ and $b\neq a,b'\neq a'$ such that $\cchn(a'|a)=\varphi(b'|b)=1$. The structure of these two unique deterministic transitions results in an embedded reversibility detected by $\cis\cchn$ and $\ccd\cchn$ which, in turn, form the ridge features. 

All this to say that these irreversibility measures (though more starkly in the case of Bayesian subjectivity) do capture reversibility features beyond the geometric properties we plotted before ($\cad\cchn$, $\cfd\cchn$), and pick up these subtler forms of reversibility. 

\section{Conclusions}\label{sec:concl}
Having reviewed analytical and numerical results, we draw this study on the irreversibility of classical and quantum maps to a close. Theorems \ref{thm:adet-ref-rls} and \ref{thm:adet-er-ref-rls} set the stage for this work's central research question: can we quantify the irreversibility of a linear transformation in terms of its Bayesian inversion's dependence on priors? We answered this question with the measure of Bayesian subjectivity $\mathrm{I}^\mathsf{s}$ given by \eqref{eq:cis-def} and \eqref{eq:qis-def}. 

Analytical results in Theorems \ref{thm:bij-abs-det-irrv-measure}, \ref{thm:erasure-irr-measure-all-equal} and \ref{thm:concat-preserve} show that the measure adheres, at every dimensionality, to physical intuitions for the extremes of reversibility and irreversibility. We have also discussed numerical evidence that Bayesian subjectivity (1) obeys the data processing equality, (2) detects physically intuitive properties like quasistaticity and entropy production (doing so in both the classical and quantum regimes) and (3) is sensitive to subtler sources of reversibility, such as those that resulted in the ridge-like features in the case of trits, seen vividly in Figure \ref{fig:trit-cis-3d}.  

We have also made comparisons between our subjectivity measure $\mathrm{I}^\mathsf{s}$ and the average change in divergence $\mathrm{I}^\mathsf{d}$, defined in \eqref{eq:ccd-def} and \eqref{eq:qcd-def}---which obeys the data processing inequality by construction. Our results show that subjectivity and this comparator measure are not trivially equivalent, but also feature sufficient similarities (such as the same properties for extreme maps for both the classical and quantum regime, the joint monotonicties for the bit channel case, and the detection of the ridges for the trits) so as to bolster the viability of $\mathrm{I}^\mathsf{s}$.

Together, we have seen strong evidence that the subjectivity of a map that is inherent to its retrodiction is a viable irreversibility, for both classically and quantum-theoretically modeled transformations. We emphasize that while Bayesian retrodiction is in general subjective (retrodiction depends on the choice of prior), the \textit{subjectivity} of a map is not itself subjective. As we have defined it, due to the integration over the space of all possible priors, the measure is finally prior-independent: Bayesian subjectivity is a function of the linear map alone. The measure \textit{objectively quantifies the subjectivity} inherent to Bayesian inference on each transformation. 

With this, we conclude by making some remarks on avenues for future work. The most obvious extension would be to analytically prove the data processing inequality for Bayesian subjectivity, as in \eqref{eq:c-dpi} and \eqref{eq:q-dpi}. If this can be done, it would imply that \textit{decreases in Bayesian subjectivity} over concatenations would constitute a non-Markovianity witness \cite{li2018-hierarchy-non-Markov, abiuso2023characterizing-non-Markov}. This has a nice interpretation: \textit{objective memory} from ``backflow'' decreases Bayesian \textit{subjectivity}. Ongoing numerical simulations have been promising in this regard, but the keystone of an analytic proof for the data processing inequality remains difficult to capture. Besides concerns about the inequality, one can also consider investigating what physically insightful quantum-theoretic parameters constitute joint monotonicity to quantum Bayesian subjectivity in the qubit regime and beyond. 

\vspace{1em}

{\em Acknowledgments} --- We thank Valerio Scarani, Lin Htoo Zaw, Ge Bai, Ray Garnadi, Nelly Ng, Paolo Abiuso, Hyukjoon Kwon and Haruki Emori for helpful discussions. This project is supported by the National Research Foundation, Singapore through the National Quantum Office, hosted in A*STAR, under its Centre for Quantum Technologies Funding Initiative (S24Q2d0009); and by the Ministry of Education, Singapore, under the Tier 2 grant ``Bayesian approach to irreversibility'' (Grant No.~MOE-T2EP50123-0002).

\clearpage 
\bibliography{refsret}

\begin{widetext}
\appendix

\section{Two Geometric Variables of Classical \& Quantum Linear Maps} \label{sec:geom}

As they become pertinent in our notes on how Bayesian subjectivity, as an irreversibility measure, satisfies intuitions about quasistaticity, information dissipation and entropy production, we explore two relevant geometric parameters that characterize every linear map. The first is the absolute determinant, that captures the preservation of the input state space, denoted by $\cad\cchn$ and $\qad\cchn$. The second is the fixed centroid displacement denoted by $\cfd\cchn$ and $\qfd\chn$. 

\subsection{Preservation of the State Space}

As illustrated in Figure \ref{fig:deform}, one can see how maps transform the state space that they act on. The relation between irreversibility and state space preservation is well studied according to several natively appropriate quantifiers of changes in state space volume \cite{hoover1994irr-to-vol,daems1999entropy-irr-to-vol,ramshaw2017entropy-irr-to-vol,gzyl2020geometry-entropy,wolfer2021information-geom-reverse-markov,van2021geometrical-info-irreversibility-markov,ito2018stochastic-thermo-info-geom,nicholson2018nonequilibrium-info-geom,kim2021information-geom-fluctuations-non-eq}. In the channel-theoretic framework we employ here, the most straightforward measure is the target map's determinant magnitude \cite{tao2012topics-magnitude-determinant,rodriguez2022optimization-magnitude-determinant-unitary} or the \textit{absolute determinant} of some matrix representing its the contractive action on the input state space. 

For classical maps, the absolute determinant $\cad\cchn$ is easily obtained since the map can be described by its stochastic matrix. For this, see \eqref{eq:cad} in the main text. This quantity gives the \textit{ratio} of the output space to the input space. For quantum maps, we first note that any channel $\chn$ acting on $d$-dimensional quantum systems corresponds, through affine transformation, to what is sometimes referred to as the channel's \textit{transfer matrix} \cite{nielsen2021gateTFRM,caro2024learningTFRM,hantzko2024pauliTFRM,roncallo2023pauliTFRM,greenbaum2015introductionTFRM}. The transfer matrix of such $\chn$ has entries:
\begin{equation}\label{eq:gtm-def}
    (\gtm{\chn})_{ij} = \frac{1}{d}\tr{\gpaul_i \chn[\gpaul_j^\dagger]}, \quad i,j\in\{1,2 \dots d^2\},
\end{equation}
where, for $k<d^2$, the $\gpaul_k$ operators are Hilbert-Schmidt orthogonal $\Tr[\gpaul_i \gpaul_j^\dagger] \propto \delta_{ij}$ and form some operator basis that can be used to define the generalized Bloch vector of any quantum state in $\mathbb{C}^d$. For $k = d^2$, we have the corresponding identity operator. As an affine representation, we get the expected $(\gtm{\chn})_{d^2d^2}=1$, while all other $(\gtm{\chn})_{d^2j}=0$. Likewise, translation of the state space's centroid is given by the translative component of $\gtm\chn$, which is the vector $\qfp{\chn}$ (we drop $\gpaul$ from the notation to avoid encumbrance) with entries:  
\begin{equation}\label{eq:q-translation-pt}
   \left(\qfp{\chn}\right)_{i} = (\gtm{\chn})_{i,d^2}= \tr{\gpaul_i \chn[\one/d]},
\end{equation} for $i \in \{1,2 \dots d^2-1\}$.
As an aside, when $d=2^n$, a canonical and orthonormalized choice would be the generalized Pauli basis: \begin{equation}
    \gpaul_k=\bigotimes_{s =1}^n \sigma_{k_s}, \quad \forall s : \; k_s \in\{1,2,3,4\},
\end{equation} with $\sigma_1,\sigma_2,\sigma_3$ corresponding to the Pauli matrices, and $\sigma_4 =\one$. Hence, for qubits, this reduces to 
\begin{equation}
    (\ptm{\chn})_{ij} = \frac{1}{2}\tr{\sigma_i \chn[\sigma_j]}, \quad i,j\in\{1,2 \dots 4\},
\end{equation}
and $\qfp{\chn}$ becomes the qubit channel's output given a maximally mixed state. 

With this, we have the preservation of the state space for quantum channels:
\begin{equation}\label{qad}
    \qad{\chn} = \frac{|\det(\gtm \chn)|}{|\det{\gtm{\one}}|}.
\end{equation}
The denominator is simply a normalization which goes to $1$ for orthonormal choices of operator basis $\{\gpaul_k\}$.  As with $\cad{\cchn}$ is, the state space is totally preserved if and only if $\qad{\chn}=1$ and, sends to $\qad{\chn}=0$ if and only if $\chn$ sends all states in $\mathbb{C}^d$ to some lower operator basis.

\subsubsection{Relation to Relaxation Time \& Quasistaticity} \label{relaxation-time-ssec}
The extent to which the state space is preserved by some map can be related to the quench or \textit{relaxation time} of its corresponding physical process. This is the time it takes for a process to send input states to equilibrium. This may be described in terms of how many iterations some map needs to be applied before all input states become constrained into a small state volume (or subspace) containing fixed points. With this, we may state the relaxation time $\rlx \in \mathbb{Z}^+$ of some map $\cchn$, defined for some precision $z \in \mathbb{R}^+$, as:
\begin{equation}\label{eq:tqdef}
    \rlx: \; \underset{t}{\arg\!\min} \; \cad{\cchn^t} \; \leq \; 10^{-z}
\end{equation}
This object's relationship with $\cad{\cchn}$ may be determined in the following way. Noting that for $x,y \in \mathbb{R} : \, |x^y|=|x|^y$ and that $\det(A^y) = \det(A)^y$, \eqref{eq:tqdef} goes to
\begin{eqnarray}
    \rlx: \; \;
    %& \underset{t}{\arg\!\min} & \; |\det(\mt{\cchn^t})| \; \leq \; 10^{-z} \\
    %& \underset{t}{\arg\!\min} & \;|\det(\mt{\cchn})^t| \; \leq 10^{-z} \\
    & \underset{t}{\arg\!\min} & \; \; \cad{\cchn}^t \; \leq 10^{-z} \\
    & \underset{t}{\arg\!\min} & \; \; t\log\cad{\cchn} \; \leq -z \\ 
    & \underset{t}{\min} & \, t \, \geq \, -\frac{z}{\log \cad{\cchn}}
\end{eqnarray}
We are dealing with stochastic maps, so $0\leq\cad{\cchn}\leq1$. Since $t$ is a positive, non-zero integer, this gives us: 
\begin{eqnarray}
    \rlx \, = \, \left\lceil - \frac{z}{\log \cad{\cchn}}\right\rceil_{>0}
\end{eqnarray}
$\rlx$ is thus a positive monotone of $\cad{\cchn}$. As $\cad{\cchn}$ decreases from $1$ toward $0$, $\rlx$ monotonically decreases from infinity (which corresponds to genuine quasistaticity) to its lower bound in one (which corresponds to instantaneous quenching). $\cad{\cchn} \mapsto \qad{\chn}$ gives the same principle, but for quantum channels.  

In this way, these objects quantifying information-theoretic spatial preservation can be understood as an expression of how long it takes for a physical process to bring the input states to equilibrium. In thermodynamics, this captures deviations from quasistaticity, resulting in irreversible entropy production. 
%Therefore, spatial preservation objects such as $\cad{\cchn}$ and $\qad{\chn}$ reflect one aspect of physical reversibility.  

\subsection{Fixed Centroid Displacement} \label{fixed-centroid-sec}
While $\cad{\cchn}$ and $\qad{\chn}$ capture how rapidly maps send physical systems to equilibrium, we now consider geometric ways of describing the equilibrium itself. It is well-known that any stochastic map $\cchn$ (and any quantum channel $\chn$) has at least one fixed point $p_\tau$ ($\rho_\tau$) \footnote{These properties are established by the so-called Krylov-Bogoliubov argument \cite{kryloff1937theorie}, which involves the Bolzano-Weierstrass property \cite{oman2017short-Bolzano}. Furthermore, by the Perron-Frobenius theorem, a \textit{unique} steady state exists for irreducible matrices \cite{perron-frob-borobia1992geometric,shur2016detailed-perron,cheng2012note-perron}.}:
\begin{equation}
    \forall\cchn\;\exists p_\tau : \cchn[p_\tau] = p_\tau, \quad \forall\chn \; \exists \rho_\tau : \chn[\rho_\tau] = \rho_\tau.
\end{equation}
The bounds on the absolute determinant of these maps show that, unless $\cad\cchn$ (or $\qad{\chn}$) is unity, the iterations of the map eventually collapses all input states into a subspace of fixed points, which we call the map's \textit{fixed space}. This may not consist of a single point, but is a bounded space (since the state space is). Rather than tracking the whole fixed space, we track its centroid, which we call the \textit{fixed centroid}, denoted ($ \cfp{\ite \cchn}$, $\qfp{\ite\chn}$), where
\begin{equation}\label{eq:ite}
    \ite \cchn = \lim_{t\to\infty} \cchn^t, \quad \ite \chn = \lim_{t\to\infty} \chn^t, 
\end{equation} 
The fixed centroid is the fixed point to which the iteration of the map converges when the input is the maximally mixed state ($v^{\one/d}, \one/d$). Obviously, if a map's fixed point is unique, the fixed centroid will be that fixed point.

For the classical case, the fixed centroid for $\cchn$ is given by taking the mean of component columns $\{\mt{\ite \cchn}_i\}$ of the corresponding matrix $\mt{\ite \cchn}$:
\begin{equation}
    \cfp{\ite \cchn} = \frac{1}{d}\sum^d_{i=1} \mt{\ite \cchn}_i=\sum^d_{i=1}\mt{\ite \cchn}_iv^{\one/d}_i=S^{\ite \cchn} v^{\one/d}.
\end{equation}
This approach generally selects a unique state that can be used to obtain our desired measure, even when there are degeneracies in terms of eigenstates of $\mt{\cchn}$. However, when maps are non-stabilizing (i.e. $\cchn$ such that $\forall t\in \mathbb{Z}^+:\cchn^t \not\simeq
 \cchn^{t+1}$), more computationally costly, general approaches can be employed (see Appendix \ref{app:tech-cfd} for details). This is largely irrelevant for our studies of classical bits and trits. 

For the quantum case, the fixed centroid for $\chn$ can be inferred from \eqref{eq:q-translation-pt}, and expressed as such:
\begin{equation}
\left(\qfp{\ite\chn}\right)_{i} = \lim_{t \to \infty}(\gtm{\chn})^t_{i,d^2} = \tr{\gpaul_i \ite\chn[\one/d]}.
\end{equation}

For reversible maps, while $\ite \cchn$ and $\ite \chn$ are ill-defined, the maximally mixed state can be called their fixed centroid, since the iteration of such maps leaves it unchanged. Together, we now define the \textit{displacement of the fixed centroid} as follows:
\begin{eqnarray}
    \label{eq:cfd-def} \cfd{\cchn} = \frac{|\cfp{\ite \cchn}-\vr{\one/d}|_2}{|\vr\Ydown-\vr{\one/d}|_2}, \quad
    \qfd{\chn} = |\qfp{\ite \chn}|_2, 
\end{eqnarray}
where $\vr{\Ydown}$ is any vector that is pure (i.e. with a single $1$-entry and all zeros) and $|\vr{x}-\vr{y}|_2$ is the Euclidean $\ell_2$-norm between two vectors $\vr{x},\vr{y}$. This is such that $\cfd{\cchn}, \qfd{\chn} \in [0,1]$ obtains for every given $d$-dimensional setting. 

%While the fixed centroid tells us a complete description of a map's fixed point, a reversible process being applied to such a fixed point should not impact its channel's reversibility features. Thus, it is not so much about the fixed centroid itself but its displacement from the original state space's centroid. We may call this

\subsubsection{Relation to Steady State \& Purity} \label{steady-state-equi-ssec}
Technicalities about the multiplicity of fixed points aside, it is true for the overwhelming majority of physical maps that $\cfd{\cchn}$ and $\qfd{\chn}$ correspond to the distance between the state space's center and the map's equilibrium or, more generally, its steady state. Physically, this captures how \textit{purifying} a process is. When $\cfd{\cchn}$ (or $\qfd{\chn}$) is closer to unity, the more well-defined the steady state is. For a thermodynamic analogy, lower values of $\cfd{\cchn}$ (or $\qfd{\chn}$) correspond to states for which the target system has increased in entropy, due to interactions with a hot environment for instance.

\subsubsection{A Technical Note for Non-Stabilizing Maps}\label{app:tech-cfd}
While \eqref{eq:cfd-def} avoids the ambiguity that emerges from multiple fixed points, it fails to obtain a unique value for non-stabilizing maps (i.e. $\cchn \text{ s.t. } \forall \mathsf{t} \in \mathbb{Z}^+ : \cchn^\mathsf{t}  \not\simeq
 \cchn^{\mathsf{t}+1}$). This makes it such that $\mt{\cchn}\cfp{\ite \cchn}\neq \cfp{\ite \cchn}$. This occurs when $\cchn$ sends the state simplex to a subspace of rank larger than one, and then permute the pure states in that subspace. These are ``alternating absorbers'', given by \eqref{eq:cabs-gen} with $\Phi_n \neq \one_n$. In this sense, $\cfp{\ite \cchn}$ is not generally a fixed point. In order to identify a unique value of $\cfd{\cchn}$ for such channels, we could always algorithmically select for an average over $\cfp{\cchn^{t+s}}$ for a large $\mathsf t$ and $\mathsf s\in[0,r-1]$ where $r$ is the number required for $\cfp{\cchn^{t+r}}=\cfp{\cchn^{t}}$. This allows us to resolve both the issue of degeneracy in fixed points and the issue of non-stabilizing channels. 

For the purposes for our investigations however, \eqref{eq:cfd-def} is the same as such a definition for essentially all sampled maps. For bit channels, such maps do not exist. For trits, non-stabilizing maps occur only such that an average over $\cfp{\cchn^{t+s}}$ is the same as the unique eigenstates of those maps \cite{perron-frob-borobia1992geometric,cheng2012note-perron,shur2016detailed-perron}. This is because non-stabilizing maps only have multiple fixed points when the permutation cycles in the absorbing space are disconnected. For the case of trits, absorbing spaces are either rank-$1$ or rank-$2$, so any permutation across the pure absorbing states are always connected, thus yielding unique fixed points. For these particular cases, we will use this more general approach (that is, taking unique fixed point)  in order to disambiguate between choices of $\cfp{\ite \cchn}$.

\section{Proofs for Various theorems}
Here we include various proofs (or proof sketches) for certain claims in the main text.
\subsection{Proof for Lemma \ref{res:q-erase}}\label{app-qer}
It is straightforward to see that the adjoint \eqref{eq:adjdil} of the swap channel must be a swap channel.
With that, the Petz reduces in the following way: 
\begin{align*}
\forall  \bullet \, \hat{\qer}_\rf[\bullet] & = \sqrt{\rf} \;  \TrB \left[\sqrt{\one \otimes  \tau } U^\dag \left(\frac{1}{\sqrt{\qer[\rf]}} \bullet \frac{1}{\sqrt{\qer[\rf]}} \otimes \one\right) U \sqrt{\one \otimes  \tau }\right]\sqrt{\rf} \\
  & =\sqrt{\rf} \;  \TrB \left[\sqrt{\one \otimes \tau} U^\dag \left(\frac{1}{\sqrt\tau} \bullet \frac{1}{\sqrt\tau} \otimes \one\right) U \sqrt{\one \otimes \tau}\right]\sqrt{\rf}\\
  & =  \sqrt{\rf} \;  \TrB \left[\sqrt{\one \otimes \tau} \;  \one \otimes\frac{1}{\sqrt\tau} \bullet \frac{1}{\sqrt\tau} \; \sqrt{\one \otimes \tau}\right]\sqrt{\rf} = \, \rf
\end{align*}
Which is (\textbf{I}) $\to$ (\textbf{II}). Now, (\textbf{II}) $\to$ (\textbf{I}) is proven as follows:
\begin{align*}
    \forall(\rf,\bullet) \quad & \petz[\bullet] =  \,\rf  \\
    & \sqrt{\rf} \; \chn^\dagger\!\left[\frac{1}{\sqrt{\chn[\rf]}} \bullet \frac{1}{\sqrt{\chn[\rf]}}\right] \sqrt{\rf} =   \,\rf  \\
    & \chn^\dagger\!\left[\frac{1}{\sqrt{\chn[\rf]}} \bullet \frac{1}{\sqrt{\chn[\rf]}}\right] = \,  \one  \\
    \forall(\rf,\bullet,\rho) \quad & 
    \Tr\left[ \chn^\dagger\!\left[\frac{1}{\sqrt{\chn[\rf]}} \bullet \frac{1}{\sqrt{\chn[\rf]}} \right] \rho \right] = \Tr\left[ \frac{1}{\sqrt{\chn[\rf]}} \bullet \frac{1}{\sqrt{\chn[\rf]}} \chn[\rho] \right], \quad \because \eqref{eq:adj} \\    
    \forall(\rf,\rho) \quad & \Tr \Biggr[\underbrace{\chn^\dagger\!\left[\frac{1}{\sqrt{\chn[\rf]}} \rho \frac{1}{\sqrt{\chn[\rf]}} \right]}_{  \one} \rho \Biggr] = \Tr\left[  \rho \frac{1}{\sqrt{\chn[\rf]}} \chn[\rho] \frac{1}{\sqrt{\chn[\rf]}}\right] \\
    & 
    \Tr\left[\rho \left(   \one - \frac{1}{\sqrt{\chn[\rf]}} \chn[\rho] \frac{1}{\sqrt{\chn[\rf]}} \right) \right] = 0 \quad \text{noting $\forall \rho$} \\
    &  \one - \frac{1}{\sqrt{\chn[\rf]}} \chn[\rho] \frac{1}{\sqrt{\chn[\rf]}} = 0 \quad \Rightarrow \quad \forall(\rf,\rho) \; \; \chn[\rho] = \chn[\rf]  \\
    & \Rightarrow \forall \bullet \quad \chn[\bullet] = \tau
\end{align*}
Finally, (\textbf{II}) $\to$ (\textbf{III}) holds trivially, (\textbf{III}) $\to$ (\textbf{II}) invokes, as with the classical case, the recoverability property of the Petz map \cite{petz,petz1,wilde-recov,petzisking2022axioms}: $\forall \rf: \petz\circ \chn [\rf] = \rf$. With (\textbf{III}), since $\chn [\rf]$ is an instantiation of $\rho$, this recoverability feature implies that the fixed point $\mu$ \textit{must} be $\rf$. Which concludes our proof for Theorem \ref{res:q-erase}.

\subsection{Proofs for Theorem \ref{thm:concat-preserve}} \label{app-proof-concat-cis}
\begin{proof}
    For \eqref{rln1}, consider $\Omega = \Phi \circ \cchn$. By composability, $\hat\Omega_\rf= \rvp \circ \hat\Phi_{\cchn[\rf]} = \rvp \circ \Phi^\text{T}$. And so each disagreement element becomes $|| \hat\Omega_{\rf_1}-\hat\Omega_{\rf_2} ||_\lambda=|| (\hat\cchn_{\rf_1}-\hat\cchn_{\rf_2})\Phi^\text{T}||_\lambda=|| \hat\cchn_{\rf_1}-\hat\cchn_{\rf_2}||_\lambda$, since eigenvalues are not changed the action of a bijection. Thus, as per \eqref{eq:cis-def}, \eqref{rln1} holds. Similarly, for the quantum theoretic equivalent in \eqref{rln2}, the diamond norm between two channels is preserved if both channels are transformed by the same unitary action. Thus, \eqref{rln2} holds. As for \eqref{rln3} and \eqref{rln4}, we simply note that any general transformation acting after an erasure (that erases to some choice of fixed point $\tau$) is still an erasure: $\chn\circ \qer_\tau = \qer_{\chn[\tau]}$, $\cchn\circ \cer_\tau=\cer_{\cchn[\tau]}$. So via Theorem \ref{thm:erasure-irr-measure-all-equal}, this gives \eqref{rln3} and \eqref{rln4}. 
\end{proof}

\subsection{For Equivalent Statements of Theorems \ref{thm:bij-abs-det-irrv-measure}, \ref{thm:erasure-irr-measure-all-equal} and \ref{thm:concat-preserve} for Average Change in Divergence $\mathrm I^\mathsf{d}$} \label{app-avg-div-equiv-thms}

\begin{theorem} \label{ccdthm3} The average change in divergence is zero if and only if the map is bijective (or unitary),
\begin{eqnarray*}
    \ccd\cchn=0 \; &\Leftrightarrow& \; \cchn \text{ is a bjiection.} \\
   \qcd\chn=0 \; &\Leftrightarrow& \; \chn \text{ is a unitary.} 
\end{eqnarray*}
\end{theorem}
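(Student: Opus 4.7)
The plan is to prove both equivalences by combining the data processing inequality (DPI) for relative entropy with the characterization of its saturation, then invoking Lemmas \ref{thm1} and \ref{thm2} to close the argument.

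For the easy direction I would first observe that whenever $\cchn$ is a bijection, the matrix $\mt{\cchn}$ is a permutation $\pi$, so $\cchn[p](\pi(a)) = p(a)$ and likewise for $\rf$; relabelling the summation index immediately gives $\mathsf{div_c}(\cchn[p]||\cchn[\rf]) = \mathsf{div_c}(p||\rf)$. Analogously, for a unitary channel $\mathcal{U}$ the unitary invariance of the Umegaki relative entropy yields $\mathsf{div_q}(\mathcal{U}[\rho]||\mathcal{U}[\rf]) = \mathsf{div_q}(\rho||\rf)$. The integrands in \eqref{eq:ccd-def} and \eqref{eq:qcd-def} therefore vanish pointwise, so $\ccd{\cchn} = 0$ and $\qcd{\chn} = 0$.

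For the reverse direction, the DPI guarantees that the integrands are nonnegative on all pairs for which the divergence is finite. These integrands are continuous on the interior of the state space (where $\rf$ is full rank), so a vanishing integral forces pointwise vanishing on a dense open set, and hence for all full-rank references by continuity. The heart of the argument is then the saturation condition for DPI: classically this is the elementary fact that $\mathsf{div_c}(p||\rf) = \mathsf{div_c}(\cchn[p]||\cchn[\rf])$ for every $p$ is equivalent to the Bayesian reverse being a genuine left inverse, $\rvp \circ \cchn = \mathrm{id}$; in the quantum case the analogous statement is Petz's celebrated equality theorem, which gives $\petz \circ \chn = \mathrm{id}$ whenever the Umegaki relative entropy is preserved on all states. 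In finite dimension with $d = d'$ a left inverse is automatically a two-sided inverse, so for every full-rank reference one has $\rvp = \cchn^{-1}$ (respectively $\petz = \chn^{-1}$) --- exactly statement (\textbf{III}) of Lemma \ref{thm1} (respectively Lemma \ref{thm2}), which by the equivalence with (\textbf{I}) forces $\cchn$ to be a bijection and $\chn$ to be unitary.

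The main obstacle is the quantum equality case: Petz's saturation theorem for the Umegaki relative entropy is a nontrivial input that must be cited rather than reproven. A minor technical subtlety concerns the boundary of the state space, where the divergence can blow up or the Petz recovery map can be ill-defined; this is benign because full-rank references are dense, the divergence is jointly lower semicontinuous, and the boundary contributes measure zero to the integrals in \eqref{eq:ccd-def} and \eqref{eq:qcd-def}.
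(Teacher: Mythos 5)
Your proof is correct, and its skeleton matches the paper's: the ``if'' direction via invariance of the divergence under permutations and unitaries, and the ``only if'' direction via the data processing inequality making the integrand non-negative, so that a vanishing integral forces pointwise preservation of divergence. Where you genuinely differ is in how the argument is closed. The paper's proof simply asserts that pointwise preservation of relative entropy ``is only fulfilled when $\chn$ is unitary, as they uniquely saturate the inequality,'' leaving that uniqueness claim as a black box. You substantiate it: classical sufficiency (respectively Petz's equality theorem) upgrades divergence preservation at a full-rank reference to $\rvp\circ\cchn=\mathrm{id}$ (respectively $\petz\circ\chn=\mathrm{id}$), finite dimensionality with $d=d'$ promotes the left inverse to a two-sided inverse, and this lands exactly on statement (\textbf{III}) of Lemma \ref{thm1} (respectively Lemma \ref{thm2}), whose equivalence with (\textbf{I}) delivers bijectivity and unitarity. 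This route buys rigor precisely where the paper's proof is thinnest, and it is economical in that it recycles the paper's own lemmas, so the only external input is Petz's equality theorem, which you correctly flag as something to cite rather than reprove; the paper's version buys brevity at the cost of the unjustified saturation claim. Your treatment of the boundary (full-rank references are dense, the integrand is continuous there, and the boundary has measure zero in the integrals \eqref{eq:ccd-def} and \eqref{eq:qcd-def}) also patches a step the paper passes over silently when it jumps from $\mathrm{I}^{\mathsf{d}}=0$ to the universally quantified equality of divergences.
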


\begin{proof}
The ``if'' direction is straightforward.  For $\ccd{\mathcal{U}}$, one simply notes how unitaries preserve the divergence between any two density operators, which gives $\ccd{\mathcal{U}}=0$. 
As for the ``only if'' direction, the data processing inequality assures that the integrand of $\ccd \chn$ must always be non-negative This means $\ccd \chn =0$ implies $\forall (\rho,\sigma):  \mathsf{div_q}(\chn[\rho]||\chn[\sigma])=\mathsf{div_q}(\rho||\sigma)$, which is only fulfilled when $\chn$ is unitary, as they uniquely saturate the inequality.
\end{proof}

\begin{theorem} \label{ccdthm4}
For any given choice of dimension, erasures for the state space of that dimension always share the same values of average change of divergence:
\begin{eqnarray*}
    \forall(\cer_1,\cer_2)\in \mathbb{R}^d &:&  \ccd{\cer_1} = \ccd{\cer_2} \\
   \forall(\qer_1,\qer_2)\in \mathbb{C}^d &:&  \qcd{\qer_1} = \qcd{\qer_2} 
\end{eqnarray*}
\end{theorem}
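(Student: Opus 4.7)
The plan is to leverage the defining property of erasure channels established in Lemmas \ref{res:c-erase} and \ref{res:q-erase}: for any classical erasure $\cer$ there exists a fixed distribution $\tau$ such that $\cer[p]=\tau$ for every $p$, and analogously for quantum erasures $\qer[\rho]=\tau$ for every $\rho$. The crucial consequence is that applying an erasure to \emph{both} arguments of a divergence always yields $\mathsf{div}(\tau\|\tau)=0$, which will eliminate the entire output-side contribution from the integrand of $\mathrm I^{\mathsf d}$.

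Concretely, I would first substitute this property directly into \eqref{eq:ccd-def}. Since both $\cer[p]$ and $\cer[\rf]$ equal the \emph{same} state $\tau$ (regardless of $p$ or $\rf$), we have $\mathsf{div_c}(\cer[p]\|\cer[\rf])=\mathsf{div_c}(\tau\|\tau)=0$ pointwise across the domain of integration. The expression for $\ccd{\cer}$ therefore collapses to
\begin{equation}
\ccd{\cer}=\int \mathsf{div_c}(p\|\rf)\, dp\, d\rf.
\end{equation}
The right-hand side depends only on the integration measure over the $d$-simplex and not on the specific erasure target $\tau$. Hence for any two classical erasures $\cer_1,\cer_2\in\mathbb{R}^d$, both sides equal this same universal constant, giving $\ccd{\cer_1}=\ccd{\cer_2}$.

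The quantum argument proceeds identically. Applying $\qer$ to both slots gives $\qer[\rho]=\qer[\rf]=\tau$, so that $\mathsf{div_q}(\qer[\rho]\|\qer[\rf])=\tr{\tau\log\tau - \tau\log\tau}=0$, and one obtains
\begin{equation}
\qcd{\qer}=\int \mathsf{div_q}(\rho\|\rf)\, d\rho\, d\rf,
\end{equation}
which depends only on the dimension $d$ through the integration measure over density operators. Thus $\qcd{\qer_1}=\qcd{\qer_2}$ for any pair of quantum erasures in $\mathbb{C}^d$.

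There is essentially no hard step here; the result is almost immediate once one notices that an erasure forces the ``post-processing'' divergence term to vanish identically. The only minor subtlety is ensuring that the integration measures over priors $\rf$ and states $p$ (or $\rho$) are specified consistently with how the measure $\mathrm I^{\mathsf d}$ is defined on the rest of the map space, so that the resulting universal constant is truly the value computed when normalizing as in \eqref{eq:normalization-irrev-measures}; but this is a matter of bookkeeping rather than a genuine obstacle. The argument also mirrors exactly the reasoning one would use for the analogous Theorem \ref{thm:erasure-irr-measure-all-equal} for $\mathrm I^{\mathsf s}$, which reinforces that the two irreversibility measures behave identically at this extremal class of maps.
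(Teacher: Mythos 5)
Your proposal is correct and is essentially identical to the paper's own proof: both observe that an erasure sends every state to the same fixed point $\tau$, so $\mathsf{div}(\tau\|\tau)=0$ kills the output-side term, leaving $\int \mathsf{div}(p\|\rf)\,dp\,d\rf$ (resp.\ $\int \mathsf{div_q}(\rho\|\sigma)\,d\rho\,d\sigma$), a constant depending only on the dimension and the integration measure, not on $\tau$. Your added remark about keeping the integration measures consistent with the normalization in \eqref{eq:normalization-irrev-measures} is a minor bookkeeping point the paper leaves implicit, but it does not change the argument.
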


\begin{proof} 
    $\forall \qer:\qcd\qer= \int \mathsf{div_q}(\rho||\sigma)  \, d\rho \, d\sigma$ as $\mathsf{div_q}(\qer[\rho]||\qer[\sigma])=0$ always as it always erases to the same state. Hence, $\qcd\qer$ is also independent of the choice of $\qer$. The proof for the classical case goes the same.
\end{proof}
 
\begin{theorem}\label{thm:concat-preserve-ccd}
Bijections (or unitaries) always preserve the average change in divergence \eqref{eq:ccd-def}, and the average change in divergence of an erasure channel can never be increased. 
    \begin{eqnarray}
    \forall(\Phi,\cchn) &:&  \ccd{\Phi\circ\cchn} = \ccd{\cchn} \\
    \forall(\mathcal{U},\chn) &:& \qcd{\mathcal{U}\circ\chn} = \qcd{\chn} \\
    \forall(\cer,\cchn) &:&  \ccd{\cchn\circ\cer} = \ccd{\cer} \\
    \forall(\qer,\chn) &:&  \qcd{\chn\circ\qer} =\qcd{\qer}
    \end{eqnarray}
\end{theorem}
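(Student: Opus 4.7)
The plan is to mirror the structure of the proof of Theorem \ref{thm:concat-preserve} and reduce each of the four claimed identities either to an invariance property of the divergence under the reversible map on the left, or to the observation that post-composition of an erasure with any map yields another erasure, at which point Theorem \ref{ccdthm4} closes the argument. Throughout I would use only \eqref{eq:ccd-def}, \eqref{eq:qcd-def} and the definition of erasures; no calculus of the integrals themselves is needed.

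First I would address $\ccd{\Phi\circ\cchn}=\ccd{\cchn}$ and $\qcd{\mathcal{U}\circ\chn}=\qcd{\chn}$. Since $\Phi$ is a bijective stochastic map and $\mathcal{U}$ is a unitary quantum channel, each preserves the corresponding divergence:
\[
\mathsf{div_c}(\Phi[p]\|\Phi[\rf]) = \mathsf{div_c}(p\|\rf),\qquad \mathsf{div_q}(\mathcal{U}[\rho]\|\mathcal{U}[\sigma]) = \mathsf{div_q}(\rho\|\sigma).
\]
The classical statement is immediate from the relabelling interpretation of $\Phi$; the quantum statement follows from $\mathcal{U}^{\dagger}\mathcal{U}=\one$ and cyclicity of the trace applied to $\tr{\mathcal{U}[\rho](\log\mathcal{U}[\rho]-\log\mathcal{U}[\sigma])}$. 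Substituting $p\mapsto\cchn[p]$ (resp.\ $\rho\mapsto\chn[\rho]$) shows that the integrand of \eqref{eq:ccd-def} for $\Phi\circ\cchn$ is pointwise equal to that for $\cchn$, and similarly for \eqref{eq:qcd-def}. The integration domain (priors and input states) is unchanged, so the equalities follow.

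Next I would handle $\ccd{\cchn\circ\cer}=\ccd{\cer}$ and $\qcd{\chn\circ\qer}=\qcd{\qer}$. The structural observation is that if $\cer$ sends every input to a fixed $\tau$, then $\cchn\circ\cer$ sends every input to the fixed distribution $\cchn[\tau]$, hence $\cchn\circ\cer$ is itself a classical erasure; analogously $\chn\circ\qer$ is a quantum erasure to $\chn[\tau]$. By Theorem \ref{ccdthm4}, the value of $\mathrm{I}^{\mathsf{d}}$ on any erasure of a given dimension is fixed, so both sides of each identity coincide. Under the normalization convention \eqref{eq:normalization-irrev-measures} nothing changes, since the denominator is precisely an erasure reference.

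No serious obstacle is expected: each relation reduces either to a textbook invariance of $f$-divergences under reversible maps, or to the direct appeal to Theorem \ref{ccdthm4}. If anything merits care, it is the quantum divergence invariance under $\mathcal{U}$, which is standard but worth spelling out once to keep the paper self-contained. Notably, unlike the Bayesian subjectivity case, no aggregate argument over priors is required: the identities for $\mathrm{I}^{\mathsf{d}}$ hold at the level of the integrand itself, which is why this theorem is strictly easier than Theorem \ref{thm:concat-preserve}.
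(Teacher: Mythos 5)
Your proposal is correct and takes essentially the same route as the paper's own proof: the first two identities follow from the pointwise invariance of $\mathsf{div_c}$ and $\mathsf{div_q}$ under bijections and unitaries applied to the integrands of \eqref{eq:ccd-def} and \eqref{eq:qcd-def}, and the last two from the observation that $\cchn\circ\cer_\tau=\cer_{\cchn[\tau]}$ (resp.\ $\chn\circ\qer_\tau=\qer_{\chn[\tau]}$) combined with Theorem \ref{ccdthm4}. Your closing remark that, unlike for Bayesian subjectivity, the identities here hold at the level of the integrand is an accurate and worthwhile observation, but it does not change the argument.
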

\begin{proof}
    Bijections and unitaries always preserve the relative entropies between states (i.e $\mathsf{div_c}(\Phi[p]||\Phi[q])= \mathsf{div_c}(p||q)$, $\mathsf{div_q}(\mathcal U [\rho]||\mathcal U[\eta])= \mathsf{div_c}(\rho||\eta)$). So the first two relations are shown easily. The other two relations hold for the exact same reasons as their counterparts in Theorem \ref{thm:concat-preserve} (see Appendix \ref{app-proof-concat-cis}).
\end{proof}

\section{Analytical expressions of $\cis\cchn$ and $\ccd\cchn$ against $\cad\cchn$ and $\cfd\cchn$, for $\cchn \in \mathbb{R^2}$}\label{app-bit-chans-details}

\begin{eqnarray}
\mathrm{I}^\mathsf{s}_\mathsf{c}(\varphi) &=& 
f\left( \frac{\mathsf{D_c}(\varphi)}{1 + \sqrt{2}\,\mathsf{F_c}(\varphi)\,(1 - \mathsf{D_c}(\varphi))} \right)
+ f\left( \frac{\mathsf{D_c}(\varphi)}{1 - \sqrt{2}\,\mathsf{F_c}(\varphi)\,(1 - \mathsf{D_c}(\varphi))} \right) \\[0.9em]
&\text{where }& f(z) = \left(z^{-2} - 1\right) \left[ 1 - \left(z^{-2} - 1\right) \operatorname{arctanh}^2(z) \right] \nonumber \\ \nonumber \\
\mathrm{I}^\mathsf{d}_\mathsf{c}(\varphi) &=& \frac{\mathsf{D_c}(\varphi)}{4} \left[
f\left( \frac{\mathsf{D_c}(\varphi)}{1 + \sqrt{2}\,\mathsf{F_c}(\varphi)\,(1 - \mathsf{D_c}(\varphi))} \right)
+ f\left( \frac{\mathsf{D_c}(\varphi)}{1 - \sqrt{2}\,\mathsf{F_c}(\varphi)\,(1 - \mathsf{D_c}(\varphi))} \right)
\right] \\[0.9em]
&\text{where }& f(z) = \left(z^{-2} - 1\right) \operatorname{arctanh}(z). \nonumber \\ \nonumber 
\end{eqnarray}

\section{Random sampling for qubits and trits in Section \ref{sec:resul-discuss}}
\label{app-random}

\subsection{Methods for Qubit Channels}

% The random channel is generated through a dilation approach \eqref{eq:chndil}, that is, by generating a two-qubit unitary $U$ and an ancillary qubit state $\beta$. Independently, we generate the input state $\rho$ and the prior for retrodiction $\gamma$. 

% We generated \(10^6\) channel samples for each case. For each channel, we first construct a complex matrix \(X \in \mathbb{C}^4\), where each element is independently drawn from a random complex Gaussian distribution $X_{ij} \sim \mathcal{N}(0,1) + i\,\mathcal{N}(0,1)$, where \(\mathcal{N}(0,1)\) is the standard Gaussian distribution. We then perform a QR decomposition $X = QR$, and set \(U = Q\). This procedure yields \(U\) uniformly distributed with respect to the Haar measure over the unitary group in the specific dimension (here $d=2$). 

% The states \(\beta \), \(\rho \), and \(\gamma\) are generated independently as follows. For each, we draw a random complex matrix \(A\) with $A_{ij} \sim \mathcal{N}(0,1) + i\,\mathcal{N}(0,1)$, and set the desired state to be $\frac{A A^\dagger}{\operatorname{Tr}(A A^\dagger)}$. This process ensures that all the objects are sampled independently according to an unbiased distribution.

Fix a grid cell $[u_i,u_{i+1})\times[f_j,f_{j+1})\subset[0,1]^2$ and draw
$u\sim{\rm Unif}[u_i,u_{i+1})$, $f\sim{\rm Unif}[f_j,f_{j+1})$; map to GAD parameters by
$\gamma=1-\sqrt{u}$ and $p=\tfrac{1\pm f}{2}$ (the sign chosen with probability $1/2$). 
Let the ancillary state be $\beta=\mathrm{diag}(p,1-p)$ and the two–qubit dilation
\[
U_{\rm AD}(\gamma)=
\begin{pmatrix}
1&0&0&0\\
0&\sqrt{1-\gamma}&\sqrt{\gamma}&0\\
0&-\sqrt{\gamma}&\sqrt{1-\gamma}&0\\
0&0&0&1
\end{pmatrix},
\qquad 
\mathcal E_0(\rho)=\operatorname{Tr}_B\!\big[U_{\rm AD}(\gamma)\,(\rho\otimes\beta)\,U_{\rm AD}(\gamma)^\dagger\big].
\]
In affine form, $T_{\rm GAD}=\mathrm{diag}\big(\sqrt{1-\gamma},\sqrt{1-\gamma},1-\gamma\big)$ and 
$t_{\rm GAD}=(0,0,\gamma(2p-1))$, hence $|\det T_{\rm GAD}|=(1-\gamma)^2=u$ and 
$\|(I-T_{\rm GAD})^{-1}t_{\rm GAD}\|=|2p-1|=f$. 
Apply a unitary sandwich on the system side with a fixed probability to enrich within–cell variability,
\[
U=(U_1\!\otimes I)\,U_{\rm AD}(\gamma)\,(U_2\!\otimes I),\qquad U_1,U_2\in U(2),\qquad
\mathcal E(\rho)=\operatorname{Tr}_B\!\big[U\,(\rho\otimes\beta)\,U^\dagger\big],
\]
under which $T'=R(U_1)\,T_{\rm GAD}\,R(U_2)$ and $t'=R(U_1)\,t_{\rm GAD}$ with $R(\cdot)\in SO(3)$, so 
$|\det T'|=|\det T_{\rm GAD}|=u$ while $\|(I-T')^{-1}t'\|$ may vary. 
Accept the sandwiched sample iff $\big(|\det T'|,\ \|(I-T')^{-1}t'\|\big)\in [u_i,u_{i+1})\times[f_j,f_{j+1})$; otherwise resample $(U_1,U_2)$ (finite retries) or revert to the unsandwiched $U_{\rm AD}(\gamma)$. 
Persist the final pair $(U,\beta)$ along with the derived coordinates, and repeat until each grid cell reaches its quota.

\subsection{Methods for Trit Channels}

For a given lower bound of determinant value equal to \(D\), we define the restricted simplex as
\begin{equation}
\{ (r_1, r_2) \in [0, 1-D]^2 \mid r_1 + r_2 \leq 1-D \}.
\end{equation}
To generate a trit channel, we construct the random matrix
\begin{equation}
M^{\cchn} = \begin{pmatrix}
s & a & p \\
t & b & q \\
1-s-t & 1-a-b & 1-p-q
\end{pmatrix},
\end{equation}
where each pair \((s,t)\), \((a,b)\), and \((p,q)\) is independently sampled uniformly from the restricted simplex defined above.

\begin{center}
\begin{figure}[h!]
    \centering
\includegraphics[width=0.7\linewidth]{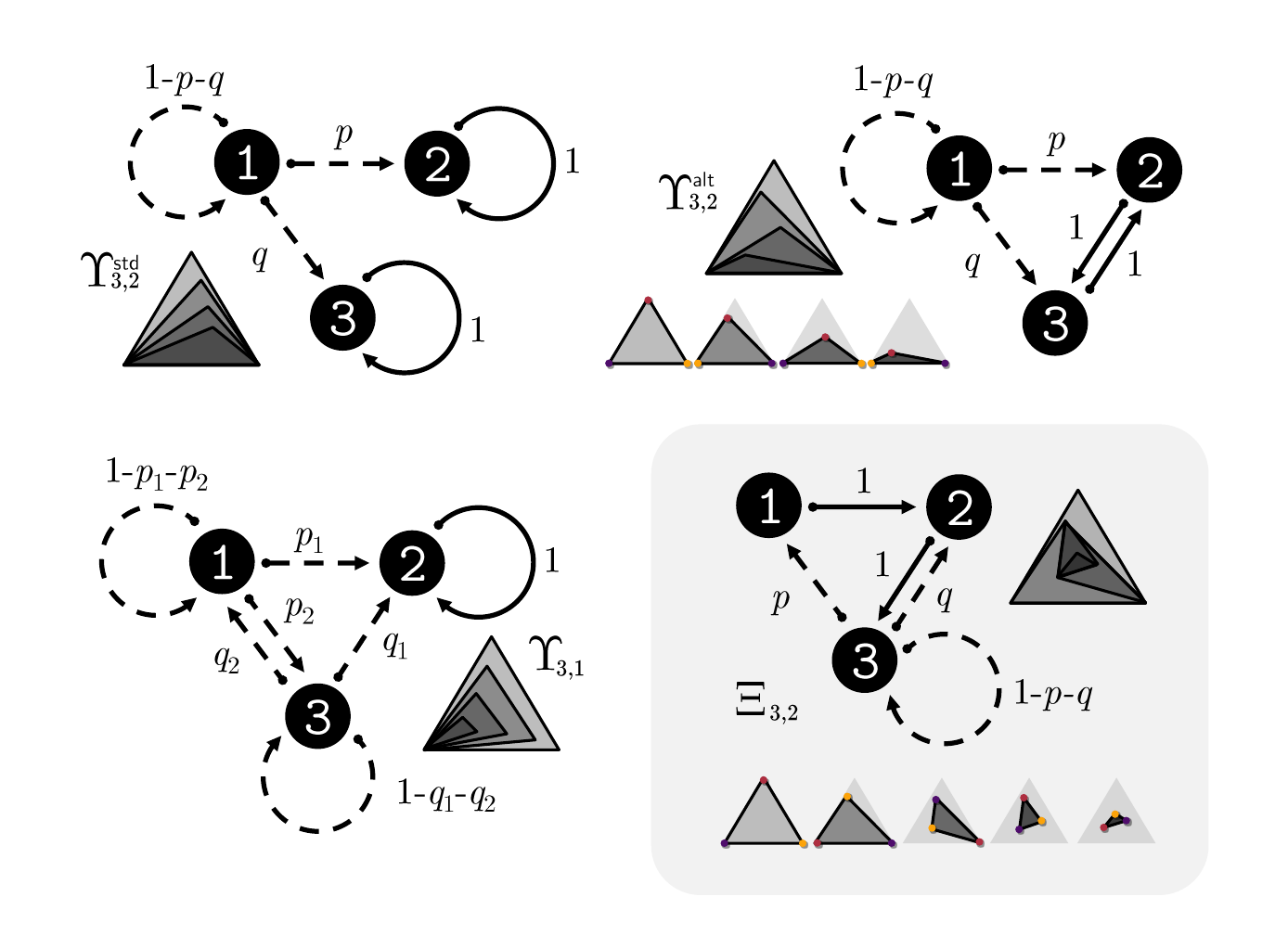}
    \caption{Illustrations (in terms of Markov models and channel images on the state simplex) for various kinds of absorbing maps $\cabs_{3,n}$ for trit processes. $\spi$ are pseudo-absorbing \eqref{eq:spiral-def} and becomes relevant when explaining the so-called ``bisecting ridge'' found in Figure \ref{fig:trit-cis-3d} and \ref{fig:trit-ccd-3d}.}
    \label{fig:special-maps}
\end{figure}
\end{center}

\section{Absorbing Channels \& Related Discussions}\label{sec:absorbers}
In the main text (particularly when discussing Figures \ref{fig:bit-cis-an} and \ref{fig:trit-cis-3d}, we alluded to the role of a special family of maps. These are a superset of erasure and partial erasure channels. These are \textit{absorbing} maps \cite{kemenysnell1969finite-absorbingmaps}. We discuss these in some detail here, introducing them generally and noting their characteristics under Bayesian retrodiction. In Appendix \ref{app-explain-the-ridges}, we focus into explaining the ridge-like features in Figure \ref{fig:trit-cis-3d}. 

\subsection{Absorbing Maps \& Properties Under Bayesian Inversion}
We first define what we mean by absorbing maps. 
\begin{itemize}
    \item Classical absorbing maps $\cabs_{\!d,n}$ are those that, over an arbitrarily large number of iterations, erase a $d-$dimensional space to a subspace that still includes some $n$ vertices (pure states) of the state space, but strictly less than the number of vertices of the original state space ($n<d$). Formally, for any $\cchn \in \mathbb{R}^d$,
    \begin{eqnarray}\label{eq:cabs-def} 
     \rank (\mt{\ite{\cchn}}) =  n<d \quad
     \wedge \quad  \left(\textstyle{\bigvee_{ij}^d} \; \mt{\cchn}_{ij}=1\right) \quad
     \Leftrightarrow   \quad \cchn \text{ is classically absorbing},
    \end{eqnarray}
    where $\vee$ refers to logical disjunction, and so $\textstyle{\bigvee_{ij}^d} \; \mt{\cchn}_{ij}=1$ is simply saying that there are some transitions in $\cchn$ that are deterministic. 
    \item Quantum absorbing maps $\qabs_{d,n}$ are those where states in the span of a basis in a $\mathbb{C}^d$ (i.e. some generalized Bloch hypersphere) become reduced, over an arbitrarily large number of iterations, to a subspace spanned by new basis with cardinality $n<d$:
    \begin{eqnarray}
        \ite \chn:\mathcal{B}(\mathbb{C}^d) \mapsto \mathcal{B}(\mathbb{C}^{n<d}) \quad \Leftrightarrow \quad \chn \text{ is quantum absorbing},
    \end{eqnarray}
    where $\mathcal{B}(\mathbb{C}^x)$ is the space of bounded linear operators on the Hilbert space $\mathbb{C}^x$. 
\end{itemize} One can consider Figure \ref{fig:special-maps} for illustrations of the geometric action of such maps that act on a trit space. Now, the $n$-dimension spaces are called the \textit{absorbing spaces}. The exclusion of the original state space and absorbing space is the \textit{transient space} of dimension $m=d-n$. In this way, absorbing maps go to at least partial erasure channels after many iterations.

\subsubsection*{On Classical Absorbing Maps}

Any map $\cabs_{\!d,n}^\mathsf{std}$ written in the following way is a classical absorbing channel for states in a $d$-dimensional space:
\begin{equation}\label{eq:cabs-usual}
\mt{\cabs_{\!d,n}^\mathsf{std}} = \left(
\begin{array}{r|cc}
\one_n & \tfrm   \\ \hline
\mathbb{0}_{m,n}\! & \trnm
\end{array}\right) 
\end{equation}
where, alongside the column stochasticity of $\mt{\cabs_{\!d,n}}$, the following must hold:
\begin{itemize}
    \item $\mathbb{0}_{m,n}$ is an $m \times n$ zeroes matrix 
    \item $\tfrm$ is an $n \times m$ matrix for which there is at least one non-zero entry,
    \item $\trnm$ is an $m\times m$ matrix for which $\det(\one_{m} -\trnm)\neq 0$. This is all to say that the $\trnm$ is any viable block for which its difference with an identity matrix of the same size is not singular. Of course, in such a way that the overall matrix remains stochastic.
\end{itemize}

Now \eqref{eq:cabs-usual} can be seen as absorbing channels that ``damp'' toward to lower indexed states, we clarify in passing that \eqref{eq:cabs-def} is generally fulfilled by a broader definition of absorbing maps:
\begin{equation}\label{eq:cabs-gen}
\mt{\cabs_{\!d,n}} = \mt{\Phi_d}\left(
\begin{array}{r|cc}
\mt{\Phi_n} & \tfrm   \\ \hline
\mathbb{0}_{m,n}\! & \trnm
\end{array}\right) (\mt{\Phi_d})^\tpose
\end{equation}
which frees how we define the states that compose the absorbing space through some choice of $d$-dimensional permutation ${\Phi_d}$, while allowing for deterministic transitions within the absorbing space through some other choice of permutation ${\Phi_n}$. Due to our geometric emphasis and since these essentially boil down to relabeling of states, we will largely go with the expression used in \eqref{eq:cabs-usual} for our proofs, though this more general form will be mentioned when relevant. We note a subclass of classical absorbing maps that may be referred to as \textit{deterministic absorbers} $\cabs_{\!d,1}$:
\begin{equation} \label{eq:det-abs}\mt{\cabs_{\!d,1}} = \left(
\begin{array}{r|cc}
1  & \tfrr \\ \hline
 \vr{\mathbb{0}}  & \trnm 
\end{array}\right)
\end{equation}
where, $\vr{\mathbb{0}}$ is a zero vector (with $d-1$ entries in this case) and $\tfrr$ is row with at least one non-zero entry. These maps go at least asymptotically close to an erasure channel with a maximum fixed centroid distance (for that dimensionality of channels), \begin{equation} \label{eq:dabs-cfd}
    \cfd{{\cabs}_{\!d,1}}=\max_{\cchn \in \mathbb{R}^{d}}\, \cfd{\cchn} = 1
\end{equation}

This geometric note can be extended to general classical absorbers $\cabs_{\!d,n}$:
\begin{theorem}\label{thm:min-cfd-abs}
The fixed centroid displacement of a classical (d,n)-absorbing map is bounded as such,
\begin{equation}\label{eq:min-cfd-abs}
\sqrt{\frac{d-n}{(d-1) n}}   \leq \cfd{{\cabs}_{\!d,n}} < \sqrt{\frac{(d-n)(d+1-n)}{(d-1)d}}
\end{equation}
\end{theorem}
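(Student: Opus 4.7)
The plan is to exploit the block structure of the standard form \eqref{eq:cabs-usual} and reduce the computation of $\cfd{\cabs_{d,n}}$ to a classical constrained $\ell_2$ optimisation. Observe first that the outer permutations $\mt{\Phi_d}$ in \eqref{eq:cabs-gen} only relabel states and thus preserve distances to the maximally mixed vector, so without loss of generality one may work with $\cabs_{d,n}^{\mathsf{std}}$. Similarly, the permutation $\mt{\Phi_n}$ inside the absorbing block only cycles pure fixed states and does not affect the time-averaged fixed centroid prescribed in Appendix \ref{app:tech-cfd}, so we may also take $\mt{\Phi_n}=\one_n$.

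Next I would compute $\cfp{\ite \cabs}$ via the fundamental matrix $N=(\one_m-\trnm)^{-1}$ of the absorbing Markov chain, which is well-defined because the condition $\det(\one_m-\trnm)\neq 0$ built into \eqref{eq:cabs-usual} gives $\trnm$ spectral radius strictly below one. Starting from $\vr{\one/d}$ and iterating, all transient mass is absorbed, so that
\begin{equation*}
    \cfp{\ite \cabs} \;=\; \tfrac{1}{d}\bigl(\vr{1}_n + \tfrm N\, \vr{1}_m,\; \vr{0}_m\bigr)^\tpose,
\end{equation*}
where $\vr{1}_k$ denotes the all-ones vector of length $k$. Column stochasticity of $\mt{\cabs}$ yields the identity $\vr{1}_n^\tpose \tfrm = \vr{1}_m^\tpose(\one_m - \trnm)$, from which $\vr{1}_n^\tpose \tfrm N\, \vr{1}_m = m$, confirming probability conservation.

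I would then introduce the absorption vector $\vr{r} \equiv (1/d)\,\tfrm N\, \vr{1}_m$, which inherits the two constraints $\vr{r}_i \geq 0$ and $\sum_i \vr{r}_i = m/d$ with $m=d-n$. A direct calculation yields
\begin{equation*}
    \lvert\cfp{\ite \cabs} - \vr{\one/d}\rvert_2^{\,2} \;=\; \lvert\vr{r}\rvert_2^{\,2} + m/d^{2},
\end{equation*}
reducing the entire problem to extremising $\lvert\vr{r}\rvert_2$ over the scaled simplex. Cauchy--Schwarz gives $\lvert\vr{r}\rvert_2^{\,2} \geq (m/d)^2/n$ with equality at the uniform $\vr{r}=(m/(nd))\vr{1}_n$, while non-negativity together with fixed $\ell_1$-mass gives $\lvert\vr{r}\rvert_2^{\,2} \leq (m/d)^2$ with equality only at a concentrated $\vr{r}$. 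Substituting these into the distance formula and normalising by $\lvert\vr{\Ydown}-\vr{\one/d}\rvert_2^{\,2} = (d-1)/d$ reproduces the two bounds in \eqref{eq:min-cfd-abs} after elementary algebra.

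The main obstacle will be justifying the strictness of the upper bound. Saturation of $\lvert\vr{r}\rvert_2 = m/d$ requires $\vr{r}$ to be supported on a single coordinate, which forces $n-1$ rows of $\tfrm N$, and hence of $\tfrm$ itself (since $N$ has strictly positive diagonal), to vanish. In that case the absorber block-decomposes into a trivial identity on $n-1$ isolated fixed states plus a $(d-n+1,1)$-absorber on the remaining states, so it effectively collapses to a smaller-$n$ object. The strict inequality therefore presupposes an irreducibility assumption---that every absorbing state receive positive mass from at least one transient state---under which all entries of $\vr{r}$ are strictly positive and the $\ell_2$ bound is strict; I would state this condition explicitly in the proof. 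As a consistency check, for $n=1$ the lower and upper bounds collapse together to unity, correctly recovering $\cfd{\cabs_{d,1}}=1$ from \eqref{eq:dabs-cfd}, for which the irreducibility requirement is vacuous.
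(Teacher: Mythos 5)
Your argument is correct and arrives at exactly the bounds in \eqref{eq:min-cfd-abs}, but it takes a genuinely different and more rigorous route than the paper. The paper's proof is a two-line extremal-configuration argument: it asserts that the minimum occurs when the absorbed mass is uniform on the $n$-dimensional absorbing face and the supremum when all transient mass funnels to a single vertex, plugs the two resulting centroids (the second being the vector with one entry $(d-n+1)/d$, $n-1$ entries $1/d$, and zeros) into \eqref{eq:cfd-def}, and simplifies. You instead compute the centroid exactly for \emph{every} member of the class via the fundamental matrix $N=(\one_m-\trnm)^{-1}$, verify mass conservation from column stochasticity, and reduce the problem to extremising $\lvert\vr{r}\rvert_2$ over the scaled simplex $\{\vr{r}\geq 0,\ \sum_i \vr{r}_i=m/d\}$, with Cauchy--Schwarz giving the lower bound and the $\ell_1$--$\ell_2$ comparison the upper. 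What your approach buys is twofold: the bounds are proved for all $(d,n)$-absorbers rather than exhibited at two configurations, and you obtain an exact characterisation of when each bound is (or is not) attained --- something the paper's sketch cannot deliver. In particular, your observation about the upper bound is sharper than the paper's own treatment: the paper writes a strict inequality while its proof constructs precisely the saturating centroid (``asymptotically close'' notwithstanding, a map with $\tfrm$ supported on a single absorbing row, the other $n-1$ absorbing states being isolated fixed points, still satisfies the formal definition \eqref{eq:cabs-def} with $\rank(\mt{\ite\cchn})=n$ and attains equality). Your explicit irreducibility condition --- every absorbing state receives positive inflow from the transient space --- is exactly what is needed to make the strict inequality honest, and naming it is an improvement over the published sketch.

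Two small slips are worth fixing. First, your claim that $\mt{\Phi_n}$ ``does not affect the time-averaged fixed centroid'' is not literally true: for a fixed $\tfrm$, cycling the absorbing states replaces $\vr{r}$ by its average over each permutation orbit (this is why alternating absorbers sit exactly at the lower bound regardless of $\tfrm$). The reduction survives, however, because orbit-averaging maps the constraint simplex into itself, so the extremisation --- and hence both bounds --- is unchanged; state it that way rather than as invariance of the centroid. Second, your strictness argument needs $n\geq 2$: for $n=1$ the vector $\vr{r}$ has a single coordinate, $\ell_2=\ell_1$ holds identically, and $\cfd{\cabs_{\!d,1}}=1$ is attained exactly (consistent with \eqref{eq:dabs-cfd}), even though your irreducibility condition is satisfied --- so the strict upper bound fails there for you and for the theorem as stated, and the case should be excluded or the inequality weakened to $\leq$ at $n=1$.
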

\begin{proof}
    Absorbing maps reduce to some lower subspace. The fixed centroid distance will be minimized at the midpoint of such a subspace, which corresponds to a uniform distribution in that $n$-dimensional subspace. Thus, take \eqref{eq:cfd-def}, with $\mathsf{p_c}^{\!\ite{\cabs}_{\!d,n}}$ given by a uniform distribution on the absorbing space only. It simplifies to lower bound above. The upper bound is when the whole transient space goes asymptotically close to absorbing to \textit{one} of the vertices. This sends the space to a $\mathsf{p_c}^{\!\ite{\cabs}_{\!d,n}}$ as close as possible to that vertex, which is also furthest away possible to the centroid of the state simplex. This is given by a vector that has a single $(d-n+1)/d$ entry, $n-1$ entries with $1/d$, and zeroes for everything else. This gives a $\cfd{{\cabs}_{\!d,n}}$ that simplifies to the upper bound above. 
\end{proof}
Note how the result agrees with \eqref{eq:dabs-cfd}.

\subsubsection*{On Quantum Absorbing Maps}
Quantum absorbing maps $\qabs_{d,n}$ are maps that send all states from $\mathbb{C}_d$ to $\mathbb{C}_{1\leq n<d}$. For our investigation, we simply take special note of quantum deterministic absorbers for qubits $\qabs_{2,1}$, which is defined for any qubit unitary $U_2$ and transition weight $s$ in this way:
\begin{eqnarray}
    & \qabs_{2,1}[\rho] =\sum_{x=0}^1 K_x \rho K_x^\dagger , \quad
    & K_0 = U_2 
    \begin{pmatrix}
    1 & 0 \\
    0 & \sqrt{1-s} 
    \end{pmatrix}, \quad 
    K_1 = U_2 
    \begin{pmatrix}
    0 & \sqrt{s} \\
    0 & 0
    \end{pmatrix}.
\end{eqnarray}
For $U_2=\one_2$, $\qabs_{2,1}$ models a damping or spontaneous emission, sending all states to $\ketbra{0}{0}$. In any case, we have for generalized deterministic quantum absorbing maps,
\begin{equation}
    \qfd{\bar{\qabs}_{d,1}}=\max_{\chn \in \mathbb{C}^{d}}\, \qfd{\chn}=1,
\end{equation}
as they send every state to some pure state, which is as far as possible from the maximally mixed state, for that space of states.
\subsubsection*{Bayesian Inference on Absorbing Channels}
Here, we briefly consider the Bayesian inversions of absorbing channels. We first consider the classical case. The matrix representation of $(\hat\cabs_{\!d,n})_\rf$, by \eqref{bayes}, is as follows:
\begin{equation}\label{eq:cabs-ret}
    \mt{(\hat\cabs_{\!d,n})_\rf} = 
    \left(\begin{array}{r|c}
    D^{\rf/\cabs_{\!d,n}[\rf]}_n &  \mathbb{0}_{n,m}  \\ \hline
    \tfrm^{\rf} \! & \trnm^{\rf}
    \end{array}\right),
\end{equation}
where, the following holds
\begin{itemize}
    \item $D^{\rf/\cchn[\rf]}_n$ is a diagonal matrix with $ii$-entries $\rf(i)/\cchn[\rf](i)$ for $i\in\{1,2\dots n\}$, the absorbing space.
    \item $\tfrm^{\rf}$ is an $m \times n$ matrix that \textit{only} depends on $\rf$ and $\tfrm$. It represents the retrodictive transitions from the absorbing space of $\cabs_{\!d,n}$ into its respective transient space.
    \item $\trnm^{\rf}$ is an $m\times m$ matrix that \textit{only} depends on $\rf$ and $\trnm$. It represents the retrodictive transitions, internally speaking, of the transient space of $\cabs_{\!d,n}$.
\end{itemize}
We now make some notes on this Bayesian inverse of absorbing maps $(\hat\cabs_{\!d,n})_\rf$.  
\begin{theorem}
    $\cabs_{\!d,n}$ preserves zeroes in the transient space, 
    \begin{equation}
        \mt{\cabs_{\!d,n}}\left(\! 
        \begin{array}{cc}
        \vr{p}_{\in n} \\ \hline  \vr{0}_{\in m}  
        \end{array}\!\right) =
        \left(\! 
        \begin{array}{cc}
        \vr{p'}_{\in n} \\ \hline  \vr{0}_{m\in m}  
        \end{array}\!\right),   
    \end{equation}
    while $(\hat\cabs_{\!d,n})_\rf$ preserves zeroes in the absorbing space:
    \begin{equation}
         \mt{(\hat\cabs_{\!d,n})_\rf}\left(\! 
        \begin{array}{cc}
        \vr{0}_{\in n} \\ \hline  \vr{q}_{\in m}  
        \end{array}\!\right) =
        \left(\! 
        \begin{array}{cc}
        \vr{0}_{\in n} \\ \hline  \vr{q'}_{\in m}  
        \end{array}\!\right)
    \end{equation}
    where, $\vr{p}_{\in x}$ is a $x$-dimensional column vector representing a probability distribution $p$. 
\end{theorem}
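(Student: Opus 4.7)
The plan is a short block-matrix verification using the decompositions \eqref{eq:cabs-usual} (or, more generally, \eqref{eq:cabs-gen}) for $\mt{\cabs_{\!d,n}}$ and \eqref{eq:cabs-ret} for $\mt{(\hat\cabs_{\!d,n})_\rf}$. The whole content of the theorem is carried by two mirrored zero blocks: the bottom-left $\mathbb{0}_{m,n}$ in the forward matrix, which encodes ``mass in the absorbing subspace never leaks into the transient subspace,'' and the top-right $\mathbb{0}_{n,m}$ in the retrodiction matrix, which is the Bayesian counterpart: ``observed transient outputs are never retrodicted to absorbing inputs.''

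For the first claim, I would adopt the standard form \eqref{eq:cabs-usual} (the more general \eqref{eq:cabs-gen} follows by a global relabeling $\mt{\Phi_d}$ that commutes with the absorbing/transient split) and compute $\mt{\cabs_{\!d,n}}(\vr{p}_{\in n},\vr{0}_{\in m})^\tpose$ blockwise. The top block gives $\vr{p'}_{\in n} = \mt{\Phi_n}\vr{p}_{\in n}$ (or simply $\vr{p}_{\in n}$ when $\mt{\Phi_n}=\one_n$), and the bottom block is $\mathbb{0}_{m,n}\vr{p}_{\in n} + \trnm\,\vr{0}_{\in m} = \vr{0}_{\in m}$, which is the required statement.

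For the second claim, I would first confirm the zero upper-right block of \eqref{eq:cabs-ret} directly from Bayes' rule \eqref{bayes}: $\rvp(a \mid a') = \cchn(a' \mid a)\,\rf(a)/\cchn[\rf](a')$, so for $a\leq n$ (absorbing) and $a'>n$ (transient), the numerator contains $\cchn(a'\mid a) = (\mt{\cabs_{\!d,n}})_{a'a}$, which lives in the bottom-left block $\mathbb{0}_{m,n}$ of the forward matrix and therefore vanishes. This fills the entire $n\times m$ upper-right block of $\mt{(\hat\cabs_{\!d,n})_\rf}$ with zeros. Block multiplication of \eqref{eq:cabs-ret} with $(\vr{0}_{\in n},\vr{q}_{\in m})^\tpose$ then yields $(D^{\rf/\cabs_{\!d,n}[\rf]}_n\vr{0}_{\in n} + \mathbb{0}_{n,m}\vr{q}_{\in m},\, \tfrm^{\rf}\vr{0}_{\in n} + \trnm^{\rf}\vr{q}_{\in m})^\tpose = (\vr{0}_{\in n},\, \trnm^{\rf}\vr{q}_{\in m})^\tpose$, and we set $\vr{q'}_{\in m} := \trnm^{\rf}\vr{q}_{\in m}$.

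There is no substantive obstacle; the proof is essentially bookkeeping. The one point requiring mild care is the row/column convention: in $\mt{\cchn}$, rows index outputs and columns index inputs, so that the ``nothing leaks out of the absorbing subspace'' statement sits in the bottom-left block of the forward matrix, whereas in the retrodiction matrix the same structural fact manifests in the top-right block, the two blocks being transposed relative to each other by the swapping of input/output roles under Bayesian inversion. Once this indexing is aligned, both halves of the theorem collapse to one-line block-multiplication checks.
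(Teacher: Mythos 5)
Your proof is correct and takes essentially the same route as the paper, whose own proof is the one-line observation that the theorem ``simply follows from \eqref{eq:cabs-usual} and \eqref{eq:cabs-ret}'' (with the general case \eqref{eq:cabs-gen} handled by relabeling); you merely spell out the block multiplications and, helpfully, verify the zero block $\mathbb{0}_{n,m}$ of \eqref{eq:cabs-ret} directly from Bayes' rule \eqref{bayes}, which the paper takes as given. Your note on the transposed input/output indexing between the forward and retrodiction matrices is exactly the right bookkeeping point.
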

\begin{proof}
    This simply follows from \eqref{eq:cabs-usual} and \eqref{eq:cabs-ret}. It also applies generally for more complicated structures of transient and absorbing spaces as per \eqref{eq:cabs-gen}.
\end{proof}
\begin{theorem} \label{thm:abs-to-abs}
    $\trnm$ of an absorbing channel (from $d \to n$) $\cabs_{\! d,n}$ is diagonal if and only if, for all reference priors, $(\hat\cabs_{\!d,n})_\rf$ is an absorbing channel (but from $d \to m$):
    \begin{eqnarray*}
        \cabs_{\!d,n}  \; \; \text{\normalfont{s.t.}} \; \;& \forall&(i,f): (\trnm)_{if} = \delta_{if} \, z(i) \\ \Leftrightarrow \quad  &\forall& \rf :(\hat\cabs_{\!d,n})_\rf \text{ is a $(d,m)$-absorbing map.}
    \end{eqnarray*}
\end{theorem}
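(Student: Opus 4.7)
The plan is to exploit the explicit block structure of the Bayesian inverse given in \eqref{eq:cabs-ret} and reduce both directions to an algebraic analysis of the bottom-right block $\trnm^{\rf}$. For the forward direction, assume $\trnm=\mathrm{diag}(z_1,\ldots,z_m)$. Because the lower-left block of $\mt{\cabs_{\!d,n}}$ is zero, a direct evaluation gives, for each $k\in\{1,\ldots,m\}$, $\cchn[\rf](n+k)=\sum_{k'}(\trnm)_{kk'}\rf(n+k')=z_k\rf(n+k)$. Plugging this into Bayes' rule \eqref{bayes} yields $\rvp(n+k'|n+k)=(\trnm)_{kk'}\rf(n+k')/\cchn[\rf](n+k)=\delta_{kk'}$, so that $\trnm^{\rf}=\one_m$. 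Combined with \eqref{eq:cabs-ret}, this places $(\hat\cabs_{\!d,n})_\rf$ into the standard absorbing form \eqref{eq:cabs-usual}, with the absorbing states being the last $m$ states and the transient block given by $D^{\rf/\cabs_{\!d,n}[\rf]}_n$ (which for generic $\rf$ is a strict contraction on the transient subspace).

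For the converse, I would run the same block analysis backwards. Because the upper-right block of \eqref{eq:cabs-ret} is always $\mathbb{0}_{n,m}$, any absorbing subspace of $(\hat\cabs_{\!d,n})_\rf$ must lie inside the span of the last $m$ states, and for the Bayesian inverse to be $(d,m)$-absorbing this whole span must itself be absorbing. Equivalently, $\trnm^{\rf}$ (which is automatically column-stochastic, since the block directly above it is zero) must be a permutation matrix, and in the standard convention \eqref{eq:cabs-usual} this permutation is the identity $\one_m$. Using the identity $\trnm^{\rf}_{k'k}=(\trnm)_{kk'}\rf(n+k')/\cchn[\rf](n+k)$ together with the requirement $\trnm^{\rf}=\one_m$ for every prior $\rf$ then forces $(\trnm)_{kk'}=0$ for all $k\neq k'$ (since $\rf(n+k')>0$ generically), which is exactly to say that $\trnm$ is diagonal.

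The main obstacle is the delicate distinction between $\trnm^{\rf}$ being the identity versus being a nontrivial permutation matrix. A generic permutation in $\trnm^{\rf}$ would only force the weaker structure $\trnm=D\Phi$ for some diagonal $D$ and some permutation $\Phi$; such configurations can be absorbed into the outer $\Phi_n$ factor of the more general form \eqref{eq:cabs-gen}, so that after relabelling they still correspond to a diagonal $\trnm$ in the standard form \eqref{eq:cabs-usual}. A clean proof therefore opens by fixing this standard convention, then carries out the algebraic computation sketched above, with a brief technical check that the diagonal block $D^{\rf/\cabs_{\!d,n}[\rf]}_n$ is a strict contraction for sufficiently generic priors, ensuring that the Bayesian inverse's transient space has full dimension $n$ and its absorbing space has dimension exactly $m$.
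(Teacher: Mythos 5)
Your proposal is correct and takes essentially the same route as the paper's own proof: a direct Bayes'-rule computation on the block structure \eqref{eq:cabs-ret}, showing in the forward direction that $\cchn[\rf](n+k)=z_k\,\rf(n+k)$ forces $\trnm^{\rf}=\one_m$, and in the converse that demanding $\trnm^{\rf}=\one_m$ for every (full-support) prior annihilates the off-diagonal entries of $\trnm$. Your explicit treatment of the identity-versus-permutation ambiguity (absorbed into the $\Phi_n$ relabelling of \eqref{eq:cabs-gen} by fixing the standard form \eqref{eq:cabs-usual}) and your flagged contraction check on $D^{\rf/\cabs_{\!d,n}[\rf]}_n$ are points the paper leaves implicit, but they refine rather than alter the argument.
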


\begin{proof}
    Firstly, we simply note that for any given $d,n$ that defines $\cabs_{\!d,n}$, it is implicit that $z(i) \neq 1$. This is because if any $z(i)=1$, then $d,n$ must change. Secondly $\tfrm^{\rf}$ being the identity is equivalent to $(\hat\cabs_{\!d,n})_\rf$ being an absorbing channel, except to the transient space as opposed to the original absorbing space of $\cabs_{\!d,n}$. $\trnm^{\rf}$ fulfills the role of a new transfer block through Bayes rule $\eqref{bayes}$, together with the fact that the original $\trnm$ always has a non-zero entry.

   Now, if $\trnm$ is \textit{diagonal}, then it follows that for the entries of $\trnm^{\rf}$, i.e. $i,f \in [n+1,d]$ entries of $(\hat\cabs_{\!d,n})_\rf$ (written here as $\cabs$), we have
    \begin{eqnarray}
        \hat{\cabs}_{\!\rf}(i|f) = \delta_{if}z(i) \frac{\rf(i)}{\cabs [\rf] (f)}
    \end{eqnarray}
    which implies,
    \begin{eqnarray*}
        \forall(i&\neq& f) \; \hat{\cabs}_{\!\rf}(i|f) =  0 \\
        \forall(i&=& f) \; \hat{\cabs}_{\!\rf}(f|f) =  \frac{\rf(f)z(f)}{\sum_x \cabs(f|x) \rf(x)}
        = \frac{\rf(f)z(f)}{\displaystyle{\underbrace{\sum_{x'=1}^n \cabs(f|x') \rf(x')}_{0}} +\displaystyle{\sum_{x=n+1}^d \underbrace{\cabs(f|x)}_{\delta_{fx}\,z(f)} \rf(x)}} = 1,
    \end{eqnarray*}
    which simply means that $\trnm^{\rf}$ is the identity for $m$-dimensions, which means $\hat{\cabs}_{\!\rf}$ is absorbing. 

    Conversely, if it is the case that for all $i,f \in [n+1,d]$, $\hat{\cabs}_{\!\rf}(i|f) = \delta_{if}$, then
    \begin{eqnarray}
       \forall \rf: \cabs(f|i)\frac{\rf(i)}{\cabs [\rf] (f)}= \delta_{if}.
    \end{eqnarray}
    This means that when $i\neq f$, $\cabs(f|i)=0$ so as to fulfill the condition for all $\rf$. Meanwhile, when $i = f$, we note again that $\sum_{x'=1}^n \cabs(f|x') \rf(x')=0$ for $f \in [n+1,d]$. Together, this gives, 
        \begin{eqnarray}
       \forall \rf: \frac{\cabs(f|f)\rf(f)}{\sum_{x=n+1}^d \cabs(f|x)\rf{(x)}}= 1.
    \end{eqnarray}
    This implies that $ \cabs(f|x)=\delta_{fx} z(f)$ for all $x,f \in [n+1,d]$, implying the diagonality of $\trnm$.
\end{proof}

Two notable corollaries follow. Firstly,
\begin{coro} \label{cor-det-abs}
    Every $(\hat\cabs_{d,d-1})_\rf$ is a deterministic absorbing channel for all choices of $\rf$.
\end{coro}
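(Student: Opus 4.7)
The plan is to observe that the corollary follows almost immediately from Theorem \ref{thm:abs-to-abs} once we notice the trivial dimensionality of the transient block. Specifically, for $\cabs_{d,d-1}$ we have $n = d-1$, so the transient space has dimension $m = d - n = 1$. Consequently the transient block $\trnm$ appearing in the decomposition \eqref{eq:cabs-usual} is a $1 \times 1$ matrix, and every $1 \times 1$ matrix is vacuously diagonal.

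Having secured the hypothesis of Theorem \ref{thm:abs-to-abs}, I would invoke it directly: for every choice of $\rf$, the Bayesian inverse $(\hat\cabs_{d,d-1})_\rf$ is a $(d,m)$-absorbing map with $m = 1$. But a $(d,1)$-absorbing map is precisely what we mean by a deterministic absorber as defined in \eqref{eq:det-abs}, since its absorbing space is a single vertex and its retrodictive matrix therefore has the block form
\begin{equation*}
\mt{(\hat\cabs_{d,d-1})_\rf} = \left(\begin{array}{r|c} 1 & \tfrr^\rf \\ \hline \vr{\mathbb{0}} & \trnm^\rf \end{array}\right),
\end{equation*}
which is the structural definition of a deterministic absorber.

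The only subtlety worth double-checking is that $\trnm^\rf$ is genuinely the identity on a one-dimensional transient space, i.e.~the single nonzero retrodictive entry is $1$, so that \eqref{eq:det-abs} is matched exactly (not just up to the block structure). This follows from the same computation carried out inside the proof of Theorem \ref{thm:abs-to-abs}: with $m = 1$, the unique transient index $f$ satisfies $\hat{\cabs}_\rf(f|f) = \rf(f)z(f) / \cabs[\rf](f) = 1$ because the sum defining $\cabs[\rf](f)$ collapses to the single surviving term $z(f)\rf(f)$. Thus no new work is required beyond citing the theorem.

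I do not anticipate a serious obstacle here; the corollary is essentially a boundary case of Theorem \ref{thm:abs-to-abs} made striking by the fact that for $m = 1$ the diagonality hypothesis on $\trnm$ is automatic. The only care needed is terminological, namely verifying that the paper's definition of a deterministic absorber \eqref{eq:det-abs} coincides with a $(d,1)$-absorbing map in the sense of \eqref{eq:cabs-def}, which it does by inspection.
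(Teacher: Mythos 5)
Your proposal is correct and matches the paper's own proof essentially verbatim: the paper likewise observes that $n=d-1$ forces $m=1$, so the $1\times 1$ block $\trnm$ is trivially diagonal, and then invokes Theorem \ref{thm:abs-to-abs} to conclude that $(\hat\cabs_{d,d-1})_\rf$ is deterministically absorbing. Your extra check that the single entry of $\trnm^\rf$ equals $1$ merely re-runs a computation already inside the theorem's proof, so nothing new is needed.
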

\begin{proof}
    $\cabs_{d,d-1}$ implies that $m=1$ and so $\trnm$ has a single entry and is therefore diagonal. Via Theorem \ref{thm:abs-to-abs}, this implies $(\hat\cabs_{d,d-1})_\rf$ is deterministically absorbing. 
\end{proof}

Relatedly, one finds that $\cabs_{2,1}$ (sometimes called Z-channels) give the Bayesian inverse $(\hat\cabs_{2,1})_\rf$ for some prior $\vr{\rf} = (
p \;\;\ 1-p )^\tpose$ in the following way:
\begin{eqnarray} \label{eq;z-chan}
            & \mt{\cabs_{\!2,1}}& = \left(\!
            \begin{array}{cc}
            1  & s \\ 
             0  & 1-s 
            \end{array}\!\right) 
            \; \Leftrightarrow \; \mt{(\hat\cabs_{2,1})_\rf} =
            \left(
            \begin{array}{cc}
             \frac{p}{p+(1-p)s} & 0 \\
             \frac{(1-p)s}{p+(1-p)s} & 1 \\
            \end{array}
            \right) 
            \; \Rightarrow \; \mt{(\hat\cabs_{2,1})_\rf} \left(
            \begin{array}{cc}
             q  \\
             1-q  \\
            \end{array}
            \right)  = \left(
\begin{array}{c}
 \frac{p q}{p+(1-p)s} \\
 1-\frac{p q}{p+(1-p)s} \\
\end{array}
\right) \label{eq:damp-classical} 
\end{eqnarray}
It becomes clear that,
\begin{coro}
   $\cabs_{2,1}$ are the only maps for which the forward map and Bayesian inverse are both deterministic absorbers, for all choices of reference prior $\rf$.
\end{coro}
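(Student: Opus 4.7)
My plan is to combine Corollary \ref{cor-det-abs} with the row--sparseness analysis underlying Theorem \ref{thm:abs-to-abs}. The ``if'' direction is immediate: when $d=2$ the classes $\cabs_{d,1}$ and $\cabs_{d,d-1}$ coincide, so Corollary \ref{cor-det-abs} directly yields that for any $\cchn \in \cabs_{2,1}$, $(\hat\cchn)_\rf$ is a deterministic absorber for every $\rf$. The explicit form in \eqref{eq;z-chan} makes this transparent, with the second column of $\mt{(\hat\cabs_{2,1})_\rf}$ identically $(0,1)^\tpose$ regardless of the prior.

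For the ``only'' direction, I would assume $\cchn \in \cabs_{d,1}$ and $(\hat\cchn)_\rf \in \cabs_{d,1}$ for every prior, and derive $d=2$. Via Bayes' rule \eqref{bayes}, the requirement that some fixed column $a$ of $\mt{(\hat\cchn)_\rf}$ be a standard basis vector uniformly in $\rf$ forces $\cchn(a|i)=0$ for all $i\neq a$, i.e., the $a$-th row of $\mt{\cchn}$ is sparse with only $\cchn(a|a)$ nonzero. Since $\cchn \in \cabs_{d,1}$ already pins column $1$ of $\mt{\cchn}$ to the first standard basis vector, this sparseness index must satisfy $a\neq 1$. Reading these two structural conditions through Theorem \ref{thm:abs-to-abs}, the resulting block structure embeds $\cchn$ simultaneously in $\cabs_{d,1}$ and $\cabs_{d,d-1}$, giving $1 = n = d-1$ and hence $d=2$.

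The step I expect to be the main obstacle is upgrading ``the inverse has a fixed deterministic column for every $\rf$'' to ``the inverse is a genuine $(d,1)$-deterministic absorber for every $\rf$,'' since a lone deterministic column is not enough on its own to guarantee a rank-$1$ limit under iteration. I would close this gap by applying the transient-block spectral analysis behind Theorem \ref{thm:abs-to-abs} specialised to inverse absorbing dimension $m=1$: only the block structure consistent with $\cchn \in \cabs_{d,d-1}$ ensures that the transient submatrix of $\mt{(\hat\cchn)_\rf}$ has spectral radius strictly less than one uniformly in $\rf$, locking the inverse into the deterministic-absorber class and forcing the dimensional collapse $d=2$. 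A careful treatment must rule out non-generic $\cabs_{d,1}$ maps with $d\geq 3$ possessing extra sparse rows, which tend to produce inverses with multiple deterministic columns --- that is, higher-rank absorbers rather than deterministic ones --- once one demands the property across the full open simplex of priors.
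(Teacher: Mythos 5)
Your ``if'' direction is fine and matches the paper's route (Corollary \ref{cor-det-abs} specialised to $d=2$, where $\cabs_{d,1}$ and $\cabs_{d,d-1}$ coincide). The ``only'' direction, however, breaks at the step where you claim the two structural conditions ``embed $\cchn$ simultaneously in $\cabs_{d,1}$ and $\cabs_{d,d-1}$'': a $(d,1)$-absorber satisfies $\rank(\mt{\ite\cchn})=1$, so it can never also be $(d,d-1)$-absorbing once $d\geq 3$, and nothing in the sparse-row condition $\cchn(a|i)=0$ for $i\neq a$ (which you derive correctly, as is $a\neq 1$) produces such an embedding --- asserting it in order to conclude $1=d-1$ is circular. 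Worse, the obstacle you flag at the end cannot be closed by the spectral analysis you propose, because the sparse-row maps you hope to rule out furnish genuine counterexamples to the literal statement. Take $d=3$ and
\begin{equation*}
\mt{\cchn}=\begin{pmatrix}1 & \tfrac{1}{4} & \tfrac{1}{2}\\ 0 & \tfrac{1}{2} & 0\\ 0 & \tfrac{1}{4} & \tfrac{1}{2}\end{pmatrix},
\end{equation*}
a bona fide deterministic absorber per \eqref{eq:det-abs} (here $\tfrr=(\tfrac{1}{4},\tfrac{1}{2})$ and $\det(\one_2-\trnm)=\tfrac{1}{4}\neq 0$), whose only sparse row is the second. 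Bayes' rule \eqref{bayes} gives, for every full-support prior $\rf$,
\begin{equation*}
\mt{\rvp}=\begin{pmatrix}\rf(1)/N_1 & 0 & 0\\ \rf(2)/(4N_1) & 1 & \rf(2)/(4N_3)\\ \rf(3)/(2N_1) & 0 & \rf(3)/(2N_3)\end{pmatrix},
\qquad N_1=\rf(1)+\tfrac{\rf(2)}{4}+\tfrac{\rf(3)}{2},\quad N_3=\tfrac{\rf(2)}{4}+\tfrac{\rf(3)}{2}.
\end{equation*}
This inverse has exactly \emph{one} deterministic column; its transient block (states $1$ and $3$) has both column sums strictly below one whenever $\rf(2)>0$, hence spectral radius strictly below one uniformly over interior priors; and its iterates converge to the rank-$1$ map onto state $2$. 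So $\rvp$ is itself a deterministic $(3,1)$-absorber, up to the permutation freedom of \eqref{eq:cabs-gen}, for every such $\rf$ --- refuting both your claim that only the $\cabs_{d,d-1}$ block structure keeps the inverse's transient spectral radius sub-unit, and your expectation that these maps produce multiple deterministic columns (higher-rank absorbers) which could then be excluded.

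For comparison, the paper's proof is the one-liner ``from Corollary \ref{cor-det-abs} and Theorem \ref{thm:abs-to-abs}'': forward determinism gives $n=1$, and the inverse being absorbing is read through Theorem \ref{thm:abs-to-abs} as being $(d,m)$-absorbing with $m=d-n$, so determinism of the inverse gives $m=1$ and hence $d=2$. But the converse direction of Theorem \ref{thm:abs-to-abs} only treats inverses that act as the \emph{identity on the forward map's entire transient block}; it says nothing about inverses that absorb onto a proper subset of that block, which is exactly what the example above does. So the difficulty you correctly identified as ``the main obstacle'' is real and, under the literal reading of the corollary (both maps deterministic absorbers in the sense of \eqref{eq:cabs-def}), not fixable: the claim only goes through under the stronger reading in which the inverse is required to be $(d,d-n)$-absorbing onto the full transient space. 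You should either adopt that reading explicitly --- at which point your argument collapses into the paper's two-line one --- or record the counterexample against the statement as written.
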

\begin{proof}
    From Corollary \ref{cor-det-abs} and Theorem \ref{thm:abs-to-abs}.
\end{proof}
\begin{coro}\label{thm:cabs-entries}
     There are  at least $(d-1)n$ entries in ${(\hat\cabs_{\!d,n})_\rf}$ that are independent of $\rf$. If $\trnm$ of $\cabs_{\!d,n}$ is diagonal, there are at least $n^2 + d^2 -(1+d)n$ entries in ${(\hat\cabs_{\!d,n})_\rf}$ that are independent of $\rf$.
\end{coro}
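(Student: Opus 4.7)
The plan is to prove both lower bounds by direct structural inspection of the block form of $\mt{(\hat\cabs_{\!d,n})_\rf}$ given in \eqref{eq:cabs-ret}, simply enumerating the entries that are forced to be $\rf$-independent regardless of the choice of prior.

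For the first bound, I would read off the $\rf$-independent entries from
\begin{equation*}
    \mt{(\hat\cabs_{\!d,n})_\rf} =
    \left(\begin{array}{c|c}
    D^{\rf/\cabs_{\!d,n}[\rf]}_n &  \mathbb{0}_{n,m}  \\ \hline
    \tfrm^{\rf} \! & \trnm^{\rf}
    \end{array}\right).
\end{equation*}
Two regions are manifestly independent of $\rf$: the top-right block $\mathbb{0}_{n,m}$ contributes $nm=n(d-n)$ zero entries, and the diagonality of $D^{\rf/\cabs_{\!d,n}[\rf]}_n$ forces $n(n-1)$ off-diagonal zeros in the top-left block. These two regions are disjoint, so they add to $n(d-n)+n(n-1)=n(d-1)=(d-1)n$, as claimed.

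For the second bound, I would invoke Theorem \ref{thm:abs-to-abs}: when $\trnm$ is diagonal, $(\hat\cabs_{\!d,n})_\rf$ is a $(d,m)$-absorbing map for \emph{every} reference prior. The relevant half of the proof of that theorem already exhibits $\trnm^{\rf}=\one_m$ explicitly (the computation there yields $\hat\cabs_\rf(f|f)=1$ and $\hat\cabs_\rf(i|f)=0$ for $i\neq f$ whenever $i,f\in[n+1,d]$). This identity block contributes an additional $m^2=(d-n)^2$ entries that are independent of $\rf$, all disjoint from the previously counted regions. Adding gives $(d-1)n+(d-n)^2 = n^2+d^2-(1+d)n$.

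The main obstacle is not conceptual but bookkeeping: I must check that the three regions I tally, namely the top-right block, the off-diagonal of the top-left block, and (in the diagonal case) the entire bottom-right block, are pairwise disjoint, which they are by construction. The only non-trivial ingredient is the identification $\trnm^{\rf}=\one_m$ under the diagonality hypothesis, and this is already furnished by Theorem \ref{thm:abs-to-abs} so no fresh computation is needed. Hence the corollary reduces to assembling results already in hand.
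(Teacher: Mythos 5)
Your proposal is correct and takes essentially the same route as the paper: the paper's (one-line) proof likewise counts the entries of $(\hat\cabs_{\!d,n})_\rf$ that are forced to be $0$ or $1$ by the block structure \eqref{eq:cabs-ret}, together with $\trnm^{\rf}=\one_m$ from Theorem \ref{thm:abs-to-abs} in the diagonal case. Your explicit bookkeeping---$n(d-n)$ zeros from the top-right block, $n(n-1)$ off-diagonal zeros from the diagonal block, plus $(d-n)^2$ entries from the identity block, summing to $(d-1)n$ and $n^2+d^2-(1+d)n$ respectively---is exactly the computation the paper leaves implicit.
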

\begin{proof}
    These follow from the number of entries in $(\hat\cabs_{\!d,n})_\rf$ that are $0$ or $1$, given \eqref{eq:cabs-ret} and Theorem \ref{thm:abs-to-abs}.
\end{proof}

As for the quantum case, for our investigation, it suffices to say that the qubit case of the amplitude damping channel $\qabs_{2,1}$ for $U_2=\one_2$ has a Petz recovery such that the classical reduction:
\begin{eqnarray}
    (\hat\qabs_{2,1})_\rf[\rho] = \left(
    \begin{array}{cc}
    \frac{p q}{p+(1-p)s} & 0\\
    0& 1-\frac{p q}{p+(1-p)s} \\
    \end{array} \right), \quad
    \text{where } \rho= \left(
    \begin{array}{cc}
    q & 0\\
    0& 1-q \\
    \end{array} \right), \; 
    \rf= \left(
    \begin{array}{cc}
    p & 0\\
    0& 1-p \\
    \end{array} \right), \nonumber
\end{eqnarray}
is consistent with \eqref{eq:damp-classical}. When $\rho$ or $\rf$ do not commute with the computational basis for which the damping is defined, then the output generally has coherence terms, that depend on the eigensystem of $\rf$. By symmetry, these principles of commutativity carry over when other $U_2$ are chosen.

\subsubsection*{Physical Analogy for Absorbing Maps}
Absorbing maps can be seen as an information-theoretic expression of chemically catalyzed processes \cite{masel2001chemical-catalyst,steinfeld1999chemical-catalyst} (not to be confused with the kind of catalysis defined for quantum resource-theoretic thermodynamics). The absorbing space are all states for which the product has been totally yielded. The transient space is where there still exist reagents. $\trnm$ contains stochastic transitions, enabled by the catalyst, modeling how the reagents are gradually processed into the product. $\tfrm$ contain the transitions for which the catalytic reaction is complete. $\Phi_n$, as in \eqref{eq:cabs-gen}, are any deterministic transitions within the completed space. These correspond to so-called work steps---processes with no thermal component. 

Catalysts aid both forward and reversed arrows of a physical process by lowering the activation energy regardless of the directionality. This might be seen as a physical analogue to Corollary \ref{thm:cabs-entries}.  

\subsection{Explaining Features of Figure \ref{fig:trit-cis-3d} and \ref{fig:trit-ccd-3d}} \label{app-explain-the-ridges}
Figures \ref{fig:trit-cis-3d} and \ref{fig:trit-ccd-3d} show that whatever $\cis\cchn$ captures in terms of irreversibility, it is going beyond the geometric properties we have put forward. The most striking features here are the two strands of noticeably high reversibility. The strand that cuts as a line through these plots at $\cfd\cchn=0.5$ can be called the ``bisecting ridge''. Meanwhile the strand that cuts through the figure from $\cad\cchn=1,\cfd\cchn=0$ and asymptotically to $\cad\cchn=0,\cfd\cchn=1/\sqrt3\approx0.577$ in a curve will be referred to as the ``arcing ridge''. Given that our irreversibility measures have drawn our attention to these segments, it is natural to ask if some subtler form of irreversibility may be at play. In particular, it is interesting to ask whether these features emerge from some specific kind of map that may be tied to some physical or information geometric characteristic related to reversibility. 

\subsubsection*{The Bisecting Ridge \& Uniform Absorbers}

\newcommand{\aabs}{\cabs_{3,2}^\mathsf{alt}}
\newcommand{\uabs}{\cabs_{3,2}^\mathsf{unb}}
\newcommand{\unia}{\cabs_{3,2}^\mathsf{uni}}

The bisecting ridge, via Theorem \ref{thm:min-cfd-abs}, marks the threshold of where absorbing maps occur in the space of maps, specifically in terms of $\cfd\cchn$. More fundamentally, this is where ``uniform absorbers'' $\unia$ reside. These refer to the specific class of $\cabs_{\!d,n}$ that absorb toward a $n$-dimensioned uniform prior, saturating the lower bound on $\cfd{\cabs_{d,n}}$ in \eqref{eq:min-cfd-abs}. There are two kinds of uniform absorbers for the trit case (i.e. $\{\unia\} =\{\aabs\} \cup \{\uabs\}$). The first is the \textit{alternating absorber} $\aabs$. This is simply \eqref{eq:cabs-gen} for $\cabs_{3,2}$ with $\Phi_n$ as the bit flip. For some $p\in[0,1-q]$, $q\in[0,1-p]$ and $(p,q) \neq (0,0)$:
\begin{equation}
    \aabs=\mt{\Phi_3}\left( 
\begin{array}{ccc}
 0 & 1 & p\\
 1 & 0 & q \\
 0 & 0 & 1-p-q
\end{array}\right)(\mt{\Phi_3})^\tpose.
\end{equation}
$\cfd{\aabs} = 0.5$ because, while the maps are non-stabilizing, they have a unique steady state: the uniform prior for a rank-$2$ subspace (consider Appendix \ref{app:tech-cfd}). This is due to the bit flipping between the pure states of the absorbing space. The high entropy fixed point emerges from this kind of irreversibility that compounds over many iterations. 

The second kind of uniform absorber is the \textit{unbiased absorber} $\uabs$. For $p\in (0,\frac{1}{2}]$, it is given by
\begin{equation}
    \uabs=\mt{\Phi_3}\left( 
\begin{array}{ccc}
 1 & 0 & p\\
 0 & 1 & p \\
 0 & 0 & 1-2p
\end{array}\right)(\mt{\Phi_3})^\tpose.
\end{equation}
These trit maps absorb toward a subspace's uniform prior simply by virtue of having equally weighted transitions in $\tfrm$. Now, about the ridge in general, we may mention some features of note:
\begin{enumerate}
    \item We numerically verified  that \textit{all} $\aabs$ and $\uabs$ fall in this slice of $\cfd\cchn=0.5$. Figures \ref{fig:ridges} and \ref{fig:uni-abs-slice} may be consulted. Here, these maps generally take higher values of $\skw(\cchn)$ for any given value of $\cad\cchn$. This large population of absorbing channels, as per Corollary \ref{thm:cabs-entries}, contributes to the lower $\cis\cchn$ for that slice of $\cfd\cchn$.
    \item The upper surface of this ridge in Figures \ref{fig:trit-cis-3d} and \ref{fig:trit-ccd-3d} corresponds to all $\uabs$ and also $\aabs$ for $p=q$. This may be seen in Figures \ref{fig:ridges} and \ref{fig:uni-abs-slice}. These are the points with the highest value of $\skw(\cchn)$ for all $\cchn$ such that $\cfd{\cchn}=0.5$.
    \item On average, our numerics in Figure \ref{fig:uni-abs-slice} reflect that $\gcm(\unia)$ decreases for lower values of $\skw(\unia)$. Put differently, as $|p-q|$ increases, $\gcm(\aabs)$ decreases. This is sensible as the transitions into the absorbing space become more differentiated, reversal becomes less subjective. 
\end{enumerate}

\subsubsection*{The Arcing Ridge \& Pseudo-absorbing Maps} 
The arcing ridge has particularly low $\cis\cchn$ largely due to the contribution of what might be called \textit{spiral} maps or pseudo-absorbers $\spi$. These superficially resemble absorbing channels as per \eqref{eq:cabs-gen}, but are not so because there are transitions out of the apparent absorbing space. For this reason, no choice of permutations $\Phi_3, \Phi_2$ for \eqref{eq:cabs-gen} will produce $\spi$. Formally, spiral maps are given by,
\begin{equation}\label{eq:spiral-def}
    \spi=\mt{\Phi_3}\left( 
\begin{array}{ccc}
 0 & 0 & p\\
 1 & 0 & q \\
 0 & 1 & 1-p-q
\end{array}\right)(\mt{\Phi_3})^\tpose,
\end{equation}
for $p\in (0,1-q], q\in[0,1-p]$. Due to the $1$-entries of spiral maps, the images of these maps always attach themselves to some vertices of the state simplex. This is a kind of pseudo-absorbing space, as the probability current transits out of this space after some iterations. Over these iterations the map spirals toward its fixed point, which can be anywhere on the simplex. Once again, we run through some features of note:
\begin{enumerate}
    \item Our numerics (see Figure \ref{fig:ridges}) show that all $\spi$ are clustered around the ridge, beginning from the point at $(\cad{\spi},\cfd\spi)=(1,0)$ to various values on a surface at $\cad{\spi} = 0$ and $\cfd\spi>0.5$, increasing in $\skw(\spi)$ as $\cad{\spi}$ falls.
    \item As with $\cabs_{3,2}$, $\skw(\spi)$ is maximized with $p=q$. Due to the two vertexes being fixed for the pseudo-absorbing space, these are also the maps with the highest $\cad\cchn$ for any  of $\skw(\cchn)$ and $\cfd\cchn$. Hence, the upper surface of the arcing ridge in Figures \ref{fig:trit-cis-3d} and \ref{fig:trit-ccd-3d} are also these unbiased $\spi$.
    \item In a similar vein as absorbing channels, all Bayesian inversions $(\hat\Xi_{3,2})_\rf$ on spiral maps are independent of $\rf$ on 5 out of 9 entries. This accounts for the low values of $\gcm(\spi)$ that characterize the ridge feature.
\end{enumerate}

\begin{figure}[b!] 
\begin{subfigure}{0.45\textwidth}
\centering         \includegraphics[width=1\linewidth]{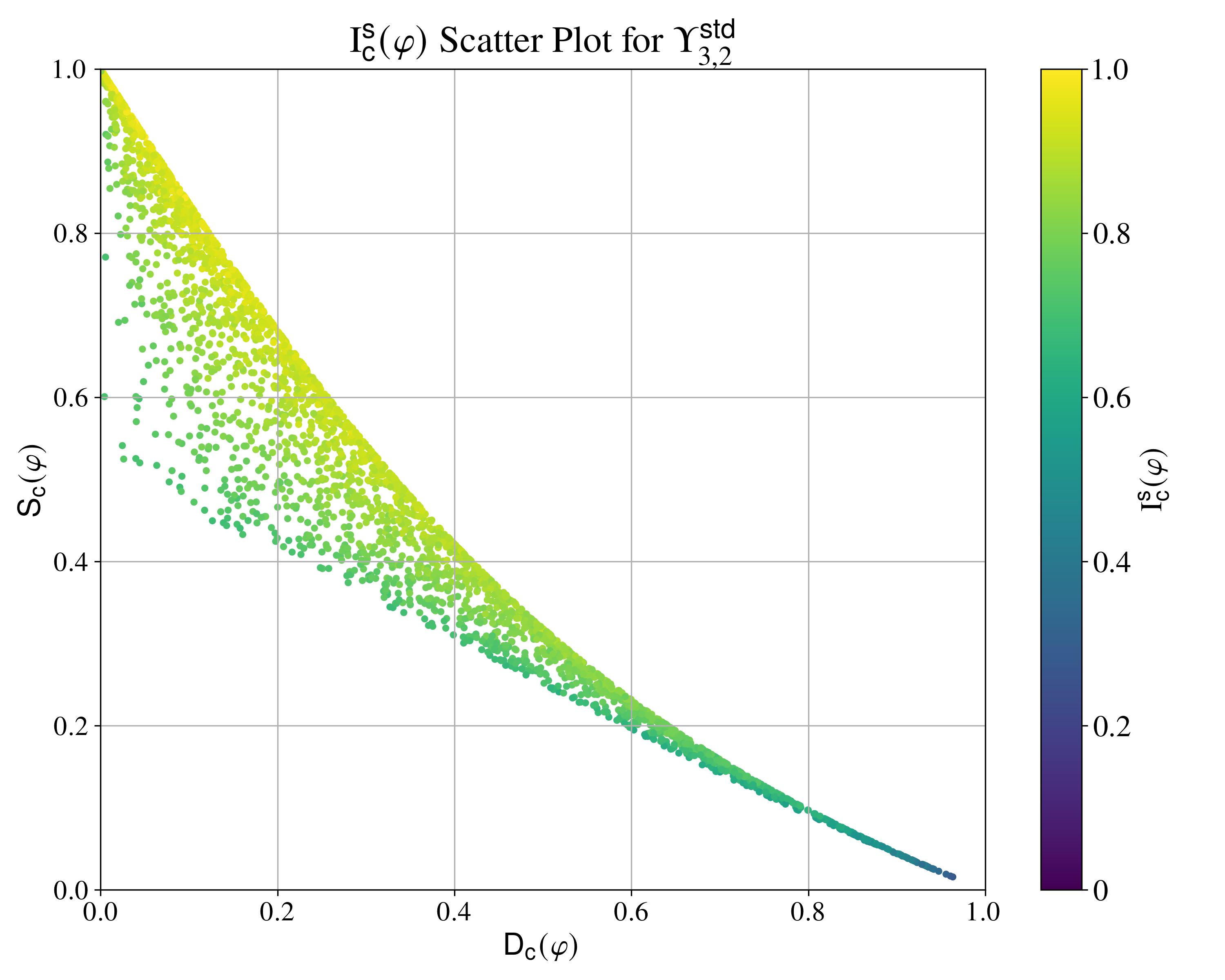}
         \caption{For $\cabs_{3,2}^{\mathsf{std}}$.} \label{fig:trit-stdabs}
\end{subfigure} 
\begin{subfigure}{0.45\textwidth}
\centering
         \includegraphics[width=1\linewidth]{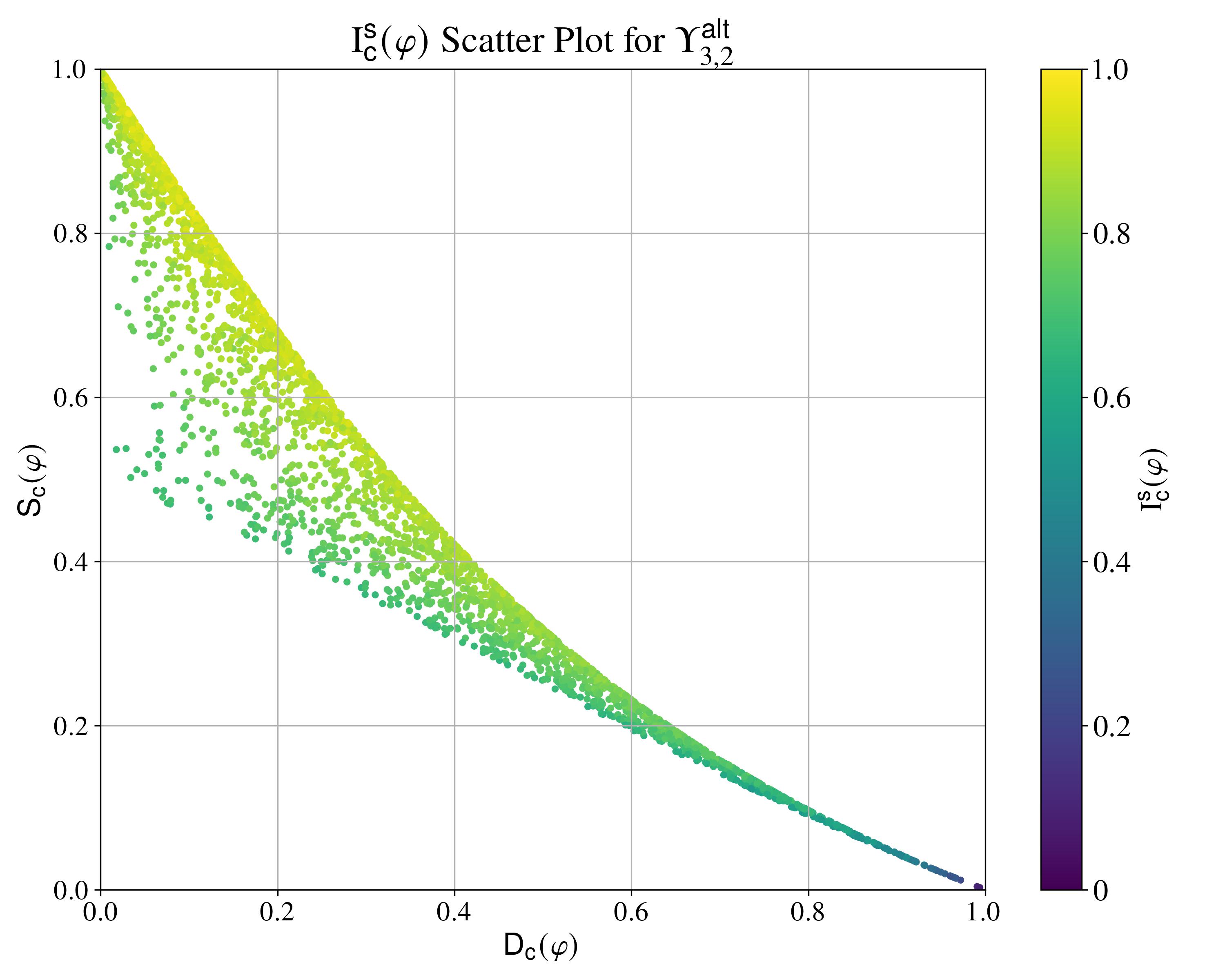}
         \caption{For $\aabs$.} \label{fig:trit-altabs}
\end{subfigure} 
\caption{All $\cabs_{3,2}$ absorbing channels from Figure \ref{fig:trit-cis-3d} plotted for $\skw(\cabs_{3,2})$ and $\cad{\cabs_{3,2}}$. As we go down the values of $\skw(\cchn)$, $|p-q|$ decreases from $0$ to $1$. Thus the top of Figure \ref{fig:trit-stdabs} exists only at $\cfd\cchn=0.5$, since it corresponds to $\{\uabs\}$ (this can also be seen in Figure \ref{fig:ridges}). In a similar vein, note that \textit{all} $\aabs$ occur at $\cfd\cchn=0.5$.} \label{fig:uni-abs-slice}
\end{figure}

\end{widetext}
\end{document}